\newcommand{\bra}[1]{{\left\langle{#1}\right\vert}}
\newcommand{\ket}[1]{{\left\vert{#1}\right\rangle}}
\newcounter{counter:IQP}
\newcounter{counter:save}
\newcounter{counter:BQP}
\newcounter{counter:field}
\newcommand{\Reduce}[1]{$\leq_{\texttt {#1}}$}
\newcommand{\NIsing}[1]{{\sc FPRAS-NormIsing}\ensuremath{(#1)}}
\newcommand{\ZivIsing}[1]{{\sc ComplexApx-Ising}\ensuremath{(#1)}}
\newcommand{\NonzeroZivIsing}[1]{{\sc ComplexApx-Nonzero-Ising}\ensuremath{(#1)}}
\newcommand{\ZIsing}[2]{\ensuremath{Z_{\text{Ising}}(#1;#2)}}
\newcommand{\IQPIsing}[2]{{\sc Factor\/-$#1$-NormIQPIsing}\ensuremath{(#2)}}
\newcommand{\CNIsing}[2]{{\sc Factor\/-$#1$-NormIsing}\ensuremath{(#2)}}
\newcommand{\CAIsing}[2]{{\sc Distance\/-$#1$-ArgIsing}\ensuremath{(#2)}}
\newcommand{\CNonzeroNIsing}[2]{{\sc Factor\/-$#1$-Nonzero-NormIsing}\ensuremath{(#2)}}
\newcommand{\CNonzeroAIsing}[2]{{\sc Distance\/-$#1$-Nonzero-ArgIsing}\ensuremath{(#2)}}  
\newcommand{\CNonzeroNRCIsing}[2]{{\sc Factor\/-$#1$-Nonzero-Norm2Tutte}\ensuremath{(#2)}}
\newcommand{\ZTutte}[2]{\ensuremath{Z_{\text{Tutte}}\ensuremath{(#1;#2)}}}
\newcommand{\SignReTutte}[1]{{\sc Sign-RealTutte}\ensuremath{(#1)}}
\newcommand{\NonzeroSignReTutte}[1]{{\sc Sign-Real-NonzeroTutte}\ensuremath{(#1)}}
\newcommand{\SSimIQP}[2]{{\sc Factor\/-$#1$-StrongSimIQP}$_{1,2}(#2)$}
\def\maxcut{{\sc Max-Cut}}
\def\bM{\boldsymbol M}
\newcommand{\Height}[1]{{\rm H}(#1)}
\newcommand{\Mahler}[1]{{\rm M}(#1)}
\newcommand{\Res}[2]{{\rm Res}(#1,#2)}
\newcommand{\BinMatrix}[4]{\left[\begin{smallmatrix} #1 & #2 \\ #3 & #4 \end{smallmatrix}\right]}
\newcommand{\BinVec}[2]{\left[\begin{smallmatrix} #1 \\ #2\end{smallmatrix}\right]}
\let\epsilon=\varepsilon 
\def\numP{\#{\bf P}}
\def\BQP{{\bf BQP}}
\def\PP{{\bf PP}}
\def\RP{{\bf RP}}
\def\FP{{\bf FP}}
\def\NP{{\bf NP}}
\def\IQP{{\bf IQP}}
\def\IQPonetwo#1{{\bf IQP}$_{1,2}(#1)$}
\def\Pgate#1{P_{#1}}
\def\ZZgate#1{R_{#1}}
\def\pp{a}
\def\qq{b}  
\def\Pr{\mathop{\rm Pr}\nolimits}
\def\I{\mathrm{i}}
\def\CC{\mathbb{C}}
\def\RR{\mathbb{R}}
\def\ybeta{y}
\def\cst{K}  
\def\addcst{\rho}   
\def\apxnorm{\widehat{N}}
\def\apxarg{\widehat{A}}
\def\hatbgamma{\widehat{\bgamma}}
\def\bgamma{\boldsymbol\gamma}
\def\boldgamma{\bgamma}
\def\hatw{\widehat{w}}
\def\algebraics{\overline{\mathbb{Q}}}
\def\MCCut{{\sc \#Minimum Cardinality $(s,t)$-Cut}}
\def\prob#1#2#3{\goodbreak\begin{list}{}{\labelwidth\z@ \itemindent-\leftmargin
                        \itemsep\z@  \topsep6\p@\@plus6\p@
                        \let\makelabel\descriptionlabel}
                \item[\bf Name]#1
                \item[\bf Instance]#2
                \item[\bf Output]#3
                \end{list}}
\let\epsilon=\varepsilon
\def\lgeps{R}
\def\ii{j}  
\def\jj{k}
\def\ss{s}
\def\tt{t}
\def\oo{*}
\newtheorem{theorem}{Theorem}
\newtheorem{lemma}[theorem]{Lemma}
\newtheorem{claim}[theorem]{Claim}
\newtheorem{corollary}[theorem]{Corollary}
\newtheorem{definition}[theorem]{Definition}
\newtheorem{remark}[theorem]{Remark}
\newtheorem{observation}[theorem]{Observation}
\begin{document}

\title{The complexity of approximating complex-valued Ising and Tutte partition functions 
\thanks{
The research leading to these results has received funding from the European Research Council under the European Union's Seventh Framework Programme (FP7/2007--2013) ERC grant agreement no.\ 334828. The paper reflects only the authors' views and not the views of the ERC or the European Commission. The European Union is not liable for any use that may be made of the information contained therein.}}
\author{Leslie Ann Goldberg \thanks{Department of Computer Science, University of Oxford, UK.}
   \and Heng Guo \thanks{School of Mathematical Sciences, Queen Mary, University of London, Mile End Road, London E1 4NS, United Kingdom.
   Part of the work was done during HG's visit in the University of Oxford.
   HG is also supported by the EPSRC grant EP/N004221/1.} }
\date{\today}

\maketitle

\begin{abstract}
We study the complexity of approximately 
evaluating the Ising and Tutte partition functions with complex parameters.
Our results are 
partly motivated by the study of the quantum complexity classes \BQP\ and \IQP.
Recent results show how to encode quantum computations as evaluations of classical
partition functions. These results rely on interesting and deep results about quantum 
computation in order to obtain hardness results about the difficulty of (classically) evaluating
the partition functions for certain fixed parameters.

The motivation for this paper is to study more comprehensively the complexity of (classically) approximating
the Ising and Tutte partition functions with complex parameters. 
Partition functions are combinatorial in nature and quantifying their approximation complexity
does not require a detailed understanding of quantum computation.
Using combinatorial arguments, we give the first full classification of the complexity of 
multiplicatively approximating the norm and additively approximating
the argument of the Ising partition function for complex edge interactions
(as well as of approximating the partition function according
to a natural complex metric).
We also study the norm approximation problem in the presence of external fields,
for which we give a complete dichotomy when the parameters are roots of unity.
Previous results were known just for a few such points, and
we strengthen these results from \BQP-hardness to \numP-hardness.
Moreover, we show that computing the sign of the Tutte polynomial is \numP-hard at certain points
related to the simulation of \BQP.
Using our classifications, we then revisit
the connections to quantum computation, drawing conclusions that are a little different
from (and incomparable to) ones in the quantum literature, but along similar lines. 
\end{abstract}

\section{Introduction}
 
We study the Ising and Tutte partition functions,
which are well-known partition functions arising in combinatorics and statistical physics (see, for example, 
\cite{Sokal}).  Early works 
which studied the  complexity of (exactly) evaluating these partition functions~\cite{JVW}
considered both real and complex parameters.
Applications in statistical mechanics actually require consideration of complex
numbers
because the possible points of physical phase transitions
occur exactly at real limit points of \emph{complex} zeroes of
these partition functions
(see Sokal's explanation in
Section 5 ``Complex Zeros of $Z_G$: Why should we care?''~\cite{Sokal}).
However, given the difficulty of completely resolving the complexity of the
approximation problem,
most works
which comprehensively studied
the complexity of \emph{approximately} evaluating 
these
partition functions \cite{GJNPhard, GJSign, JS}
restricted attention to real parameters.
A notable counter-example is the paper of Bordewich et al.~\cite{BFLW}
which studied normalised additive approximations for $\numP$ functions including
these partition functions. Bordewich et al.\ were motivated by
a result of Freedman et al.\ \cite{FKLW}
showing that an approximate evaluation of the Jones polynomial
associated with a particular complex parameter (a $5$th root of unity) 
can be used to simulate the quantum part of any algorithm in the quantum complexity class \BQP,
which is the class of decision problems solvable by a quantum computer in polynomial time
with bounded error.
The relevance of this result 
to the partition functions that we study follows from a result 
of
Thistlethwaite~\cite{Thistlethwaite}, showing that the Jones polynomial
is essentially a specialisation of the Tutte partition function.

Recently, there have been several papers  
showing how to  encode quantum computations as
evaluations of partition functions.
These results rely on interesting and deep results about quantum computation
to obtain hardness results about the difficulty of (classically) evaluating
Ising and Tutte partition functions.
For example, Kuperberg~\cite{Kuperberg} used
three results in quantum computation 
(a density theorem from~\cite{FLW}, the Solovay-Kitaev theorem (see~\cite{NC}), 
and Post\BQP=\PP\
\cite{Aaronson})
to demonstrate the $\numP$-hardness of a certain kind of approximation of 
the Jones polynomial.
His theorem is repeated later as Theorem~\ref{thm:Kup}, where it is discussed in more detail.
He also derived related results about multiplicative approximations
of the Tutte polynomial for certain real parameters.

\IQP\ stands for ``Instantaneous Quantum Polynomial time''. 
It is characterised by a  class of quantum circuits introduced by Shepherd and Bremner \cite{SB09}.
Fujii and Morimae \cite{FM} showed how to encode  
 \IQP\ circuits as instances of the Ising model.
Thus, they were able to use 
a quantum complexity result of Bremner et al.\ \cite[Corollary 3.3]{BJS11}
(showing that weakly simulating \IQP\ with multiplicative error implies that the polynomial hierarchy
collapses to the third level) to
obtain a result about the approximation of the Ising model --- namely that
an FPRAS for the Ising model with parameter $y=\exp(\I \pi/8)$  
would similarly entail collapse of the polynomial hierarchy. 
(As they mention, a similar result applies for other parameters that are universal for \IQP.)
This result is further discussed in Section~\ref{sec:IQP}.
Other examples include \cite[Result 2]{DDVM11},
\cite[Theorem 6.1]{ICKB},
and \cite[Theorems 2 and 3]{MFI}
which give \BQP-hardness of certain Ising model approximations, enabling the conclusion that
certain efficient algorithms for approximating these partition function up to additive error are unlikely to exist.
Ilblisder et al.~\cite{ICKB} point out that some instances
that they prove hard \emph{do} have multiplicative approximations, due to Jerrum and  Sinclair~\cite{JS},
emphasising the difference between additive and multiplicative approximation.
Matsuo et al.~\cite[Theorem 4]{MFI} also relate the simulation of \IQP\ circuits
to Ising model approximations with real parameters.
  
The  
motivation for our paper is to 
study more comprehensively the complexity of
approximating the Ising partition function at complex parameters, and
also to
go the other way around, working from the combinatorial model
to quantum computation.
Partition functions are combinatorial in nature
and classifying the difficulty of approximating these partition functions should not require
a detailed understanding of quantum mechanics or quantum computation.
Hence, we undertake a detailed classification of 
the complexity of the partition function problems, 
using combinatorial methods. 
We focus mainly on the Ising model 
since this model is particularly relevant in statistical physics
(Section \ref{sec:hardIsing}).
This model is also connected to \IQP\ (as explained in Section~\ref{sec:IQP}).
We also consider  
the more general Tutte polynomial at any point $(x,y)$ 
where $x=-t$ and $y=-{t}^{-1}$ for a root of unity $t$
(this is connected to \BQP, as will be explained in Section~\ref{sec:BQP}).

Our main result for the Ising model (Theorem~\ref{thm:main}) is a classification of the complexity of
approximating the partition function with complex edge interactions.
This result is illustrated in Figure~\ref{fig:complexplane}.
As the figure shows, there are very few parameters (edge interactions) in the complex plane
for which the approximation problem is tractable.
For most edge interactions, it is extremely intractable ($\numP$-hard to
approximate the norm within any constant factor and to approximate the argument
within $\pm \pi/3$).  Theorem~\ref{thm:relaxed} extends these results to
a more relaxed setting in which approximation algorithms are unconstrained (allowed to
output any rational number) if the correct output is zero.
We emphasise that the goal of our work is to classify the 
difficulty of the problem for all \emph{fixed} parameters in the
complex plane.
The proofs of our theorems are elementary and combinatorial.
The main idea (see  Lemma~\ref{lem:bisect}) is an extension of a bisection technique of Goldberg and Jerrum~\cite{GJSign}
showing how to use an approximation for the norm of a function to get
very close to a zero of the function.
Our result for the Tutte polynomial (\ref{thm:BQP}) is also proved using bisection.
It shows that, for 
\emph{any}
relevant parameters, it is $\numP$-hard to determine
whether the sign of the polynomial is non-negative or non-positive (with an arbitrary answer being
allowed when it is zero). 

Using our classifications, we
then revisit the connections to quantum computation,
drawing conclusions that are a little different from (and incomparable to) the ones 
in the papers mentioned earlier,
but along similar lines, as we now explain.
Theorem~\ref{thm:IQP:hardness} shows that strong simulation of \IQP\
within any constant factor is $\numP$-hard, even for the restricted class of
circuits considered by Bremner et al.~\cite{BJS11}.
Our result is incomparable to their hardness result \cite[Corollary 3.3]{BJS11}.
Both results show hardness of multiplicative approximation. However,
their result is for weak simulation (sampling from the output distribution of
the circuit) whereas ours is for strong simulation (estimating
the probability of a given output). In general, hardness results about weak simulation
are more desirable, however multiplicative approximation is less
appropriate for weak simulation, where total variation distance
is more important. Also, our results (unlike those of~\cite{BJS11}) are not
sensitive to the behaviour of the algorithms when the correct value is zero.
Moreover, our complexity assumption (that $\FP\neq \numP$)
is implied by and therefore milder
than theirs 
(that the polynomial hierarchy does not collapse to the third level).
These results are discussed further in Section~\ref{sec:IQP}.

It seems that a result similar to our \IQP\ result could also be 
obtained via Boson sampling~\cite{BosonSampling}.
In particular, Aaronson and Arkhipov~\cite[Theorem 4.3]{BosonSampling}
have used a bisection technique similar to 
the one of Goldberg and Jerrum~\cite{GJSign}
to show that approximating the square of the permanent of 
a real-valued input matrix within a constant factor is \numP-hard.
Any such input \cite[Lemma 4.4]{BosonSampling} can be turned into a unitary matrix which can be viewed
as a ``Boson Sampling'' input. The output of the Boson sampling problem
is essentially the square of the permanent of the matrix (so is hard to approximate).
Furthermore, 
the Boson sampling problem
can be simulated by \BQP\ circuits and adaptive \IQP\ circuits
(in the strong sense).
Thus, while it is interesting to see that our Ising-model
results have \IQP\ applications, 
the important point concerning our result is the
comprehensive classification of the Ising complexity, 
rather than the particular quantum applications.

As we explain in Section~\ref{BQPintro}, classical simulation of the complexity class \BQP\
is related to (but not directly a consequence of) determining the sign of the Tutte polynomial 
at a certain point $(-t,-t^{-1})$. Theorem~\ref{thm:BQP} shows that this problem is $\numP$-hard
(even when the algorithm is not required to handle the case in which the output is zero),
answering a question raised by Bordewich et al.~\cite{BFLW}.
This is related to (but incomparable to) a result (Theorem~\ref{thm:Kup}) of Kuperberg~\cite{Kuperberg}. 
These results are discussed further in Section~\ref{sec:BQP}.

Finally, we study Ising models with external fields.
De las Cuevas et al.\ \cite[Result 2]{DDVM11} showed that with edge interaction $\I$ and external field $e^{\I\pi/4}$ 
an additive approximation of the partition function is \BQP-hard.
Motivated by such connections, we focus on the 
problem of (multiplicatively) approximating the norm of the partition function
when both the interaction parameter and the external field are roots of unity.
We extend our hardness results to show that, 
for most such parameters, including the one studied by De las Cuevas et al.,
the approximation problem is \numP-hard
(for an exact statement, see Theorem~\ref{thm:fields}).
For the remaining parameters,
the partition function can be evaluated exactly in polynomial time,
and thus we get a complete dichotomy (Theorem~\ref{thm:fields}).
This extension relies on some lower bounds from transcendental number theory, which allow us to
convert additive distances into multiplicative ones. 
The lower bound results are given in Section~\ref{sec:LB} and 
our hardness results are in Section~\ref{sec:field}.

As we have already mentioned, there are many papers 
encoding quantum simulations as Ising models, including especially the result of Fujii and Morimae~\cite{FM}.
We could use this encoding (along with our Theorem~\ref{thm:relaxed}) to derive
our  quantum application (Theorem~\ref{thm:IQP:hardness}).
In order to make the paper self-contained, and to make it accessible to readers  from outside
the area of quantum computation we instead give our own, more combinatorial, presentation
of how to encode \IQP\ circuits as Ising instances.  This is given in Section~\ref{sec:IQP}.

\subsection{The Ising model}

The main partition function that we study is the partition function of the Ising model.
Let $\ybeta$ (called the edge \emph{interaction}) and $\lambda$ (called the \emph{external field}) be two parameters.
The partition function is defined for a (multi)graph $G=(V,E)$ as 
\begin{align}
  \label{eqn:Ising}
  \ZIsing{G}{\ybeta,\lambda}=\sum_{\sigma:V\rightarrow\{0,1\}}\ybeta^{m(\sigma)}\lambda^{n_1(\sigma)},
\end{align}
where $m(\sigma)$ is the number of monochromatic edges under $\sigma$ 
(that is, the number of edges $(u,v)$ with $\sigma(u)=\sigma(v)$)
and $n_1(\sigma)$ is the number of vertices~$v$ with $\sigma(v)=1$.  
We write $\ZIsing{G}{\ybeta}$ to denote $\ZIsing{G}{\ybeta,1}$. 
   
We will consider complex parameters~$\ybeta$ and~$\lambda$
from the set  $\algebraics$ of algebraic numbers. 
Thus, the real and imaginary parts of~$\ybeta$ and~$\lambda$ will be algebraic.
We use $\arg(z)$ to denote the arg of a complex number $z$.
For fixed $\ybeta$ and $\lambda$,
we study  several computational problems.
The first of them is approximating the norm of $\ZIsing{G}{\ybeta,\lambda}$
within a factor $\cst>1$.

\prob{\CNIsing{\cst}{\ybeta,\lambda}.}
{A (multi)graph $G$.}
{A  rational number $\apxnorm$ such that
$  {\apxnorm}/{\cst} \leq  |\ZIsing{G}{\ybeta,\lambda}| \leq  \cst \apxnorm$.}

 We also consider  the problem of approximating the argument of the partition function
 within an additive distance of $\addcst \in (0,2 \pi)$.
 Here we have to treat the zero case exceptionally since the argument is undefined. 
  
\prob{\CAIsing{\addcst}{\ybeta,\lambda}.}
{ A (multi)graph $G$.}{ 
If $\ZIsing{G}{\ybeta,\lambda}=0$, then~$0$. Otherwise,
a  rational number $\apxarg$ such that 
$$|\apxarg -  \arg(\ZIsing{G}{\ybeta,\lambda})| \leq  \addcst.$$}

We drop the argument~$\lambda$ when it is equal to~$1$, so
\CNIsing{\cst}{\ybeta}  denotes the problem \CNIsing{\cst}{\ybeta,1 }
and   \CAIsing{\addcst}{\ybeta} denotes \CAIsing{\addcst}{\ybeta,1}.

\subsection{Approximating Complex Numbers}
  \label{sec:Ziv}

 It makes sense that we 
 approximate the norm of a complex number relatively, whereas we approximate
 the argument additively.
 This is natural because multiplying complex numbers multiplies norms and adds arguments,
 so it preserves the usual property that if you can approximate two numbers, you can approximate
 the product.
 
 Other notions of approximation have  been proposed. 
 Most notably,
 Ziv~\cite{Ziv}
 has proposed that the distance between two complex numbers~$y$ and
  $ y'$  
 should be 
 measured as 
$$d( y',y) = \frac{| y'-y|}{\max(| y'|,|y|)},$$
 where $d(0,0)=0$.
  We  also study the following approximation problem.
 
  \prob {\ZivIsing{\ybeta,\lambda}.}{
 A (multi)graph $G$ and a positive integer $\lgeps$, in unary.}
{If $|\ZIsing{G}{\ybeta,\lambda}|=0$ then the algorithm should output~$0$. Otherwise, it should
output a
 complex number $y$ such that
  $d(y, \ZIsing{G}{\ybeta,\lambda}) \leq  \tfrac{1}{\lgeps}$.
 }

 As with the other problems, we use the notation \ZivIsing{\ybeta} for \ZivIsing{\ybeta,1}.
 We have specified the error~$\lgeps$ as an input of the problem, rather than as a parameter
 in order to emphasise the suitability of \ZivIsing{\ybeta,\lambda} as
 an appropriate notion of approximation for the Ising partition function when $y$ is complex. The number $\lgeps$ is
 expressed in unary so   a polynomial time algorithm for
 \ZivIsing{\ybeta,\lambda} 
 would give a so-called ``fully polynomial time approximation scheme'' for
 the norm of the partition function.
 For partition functions, it is well-known that approximating the norm within 
 a factor that is an inverse polynomial in a unary input~$\lgeps$
   is equivalent in difficultly to approximating the norm with
 any specific factor~$\cst>1$.
 We will return to this point later in Lemma~\ref{lem:duplicate}.

\subsection{Main results for the Ising model}
\label{sec:mainIsing}

\begin{figure}
  \label{fig:complexplane}
  \centering
    \includegraphics{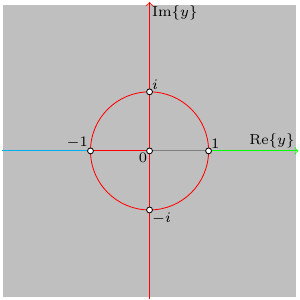}
    \caption{An illustration of Theorem~\ref{thm:main} for \CNIsing{\cst}{ y }. 
    The five white points correspond to the easy 
    evaluations described in Item~1.
    The green line segment $(1,\infty)$ corresponds to a region where approximation is in \RP --- See Item~2.
    The blue line segment $(-\infty,-1)$ corresponds to a region where approximation is equivalent to
    approximately counting perfect matchings. See Item~4.
    The red points on the axes (the imaginary axis and the segment $(-1,0)$) and on the unit circle correspond to regions where
    approximation is $\numP$-hard. See Items~5, 6, and 7.
    Elsewhere the points are coloured grey, and approximation is known to be NP-hard (Items~3, 9 and 10)
    and sometimes to be $\numP$-hard (Item~8, not pictured). }
\end{figure}

The following theorem gives our main complexity results about the Ising model.
These results classify the problem of approximating the partition function over the entire complex plane. 
For every value of the parameter $y$, we either show that the problem is easy, 
in the sense that both the norm and the arg of the partition function can be well-approximated
(and so can the conglomerate problem using the Ziv distance),
or we show that approximating at least one these is hard
(and so is the conglomerate problem using the Ziv distance).
The results for approximation of the norm are illustrated in Figure~\ref{fig:complexplane}.

\begin{theorem}
\label{thm:main}
Let $y=r e^{i  \theta}$ be an algebraic complex number with $r\ge 0$ and $\theta\in [0,2\pi)$.
Suppose $\cst>1$.
\begin{enumerate}
\item \label{itemexact} If $y=0$ or if
$r=1$ and $\theta \in \{0,\frac{\pi}{2},\pi,\frac{3\pi}{2}\}$
then \CNIsing{\cst}{ y }, \CAIsing{(\pi/3)}{ y} and \ZivIsing{y} 
are in \FP.
  
\item \label{realtwo} If $y>1$ is a real number then \CNIsing{\cst}{ y} 
and \ZivIsing{y} are
in \RP\ 
and  \CAIsing{(\pi/3)}{ y} is in \FP.

\item \label{realthree} If $y$ is a real number  in $(0,1)$ then  \CNIsing{\cst}{ y} 
and \ZivIsing{y} are
\NP-hard
and  \CAIsing{(\pi/3)}{ y} is in \FP.

\item \label{realfour} If $y<-1$ is a real number then \CNIsing{\cst}{ y} is equivalent in complexity to
the problem of approximately counting perfect matchings in graphs
and \ZivIsing{y} is as hard.
However, 
\CAIsing{(\pi/3)}{ y} is in \FP.

\item \label{realfive} If $y$ is a real number in $(-1,0)$ then \CNIsing{ \cst}{ y },
\CAIsing{ (\pi/3)}{y}   
and \ZivIsing{y}
are $\numP$-hard.

\item \label{circleone} If $r=1$ and $\theta\not\in\{0,\frac{\pi}{2},\pi,\frac{3\pi}{2}\}$
then \CNIsing{\cst}{ y }, \CAIsing{(\pi/3)}{ y} and \ZivIsing{ y} are \numP-hard. 
\item \label{imaginaries} If $\theta \in \{ \frac{\pi}{2}, \frac{3\pi}{2}\}$
and $r\not\in\{-1,0,1\}$ then \CNIsing{\cst}{ y},
\CAIsing{(\pi/3)}{ y} and \ZivIsing{ y} are \numP-hard. 
\item \label{coprime} If $r>0$ and $\theta=\frac{\pp\pi}{2\qq}$, 
  where $\pp$ and $\qq$ are two co-prime positive integers and $\pp$ is odd then 
  \CNIsing{\cst}{ y }, \CAIsing{(\pi/3)}{ y} and \ZivIsing{ y} are \numP-hard.   
\item \label{NPitem}
If $r<1$ and $y\neq 0$ then    \CNIsing{\cst}{ y}  and  \ZivIsing{ y} are \NP-hard.
 \item \label{NPlastitem}
 If $r>1$ and $\theta\not\in\{0,\pi\}$ then  \CNIsing{\cst}{ y}  and  \ZivIsing{ y} are \NP-hard.
   \end{enumerate}

\end{theorem}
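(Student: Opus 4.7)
}
The plan is to handle the ten items in three groups, since they require very different techniques. First come the easy cases on the real axis and the four roots of unity. For $y=0$, $Z_{\text{Ising}}(G;y)=2^{c(G)}$ where $c(G)$ counts the vertex-free components; for $y=1$ it is $2^{|V|}$; for $y=-1$ it counts (signed) Eulerian subgraphs and the norm admits a closed form via the parity of $|E|$; for $y=\pm\I$ the fact that $y^2=-1$ lets one reduce $y^{m(\sigma)}$ to depend only on $m(\sigma)\bmod 4$, again giving a polynomial-time evaluation. For items~\ref{realtwo}--\ref{realfive} on the real line I would just invoke the existing classical classification: Jerrum--Sinclair's FPRAS for the ferromagnetic Ising model gives item~\ref{realtwo}; the NP-hardness of item~\ref{realthree} and the \numP-hardness of item~\ref{realfive} follow from standard reductions from \maxcut\ and from the antiferromagnetic hardness results of \cite{JS,GJNPhard}; item~\ref{realfour} is the known correspondence with perfect matchings via the high-temperature expansion. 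In all of these cases $y$ is real, so $Z$ is real and the argument is trivially $0$ or $\pi$, giving the \FP\ statements for \CAIsing{(\pi/3)}{y}.

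The heart of the proof is the complex regime, items~\ref{circleone}, \ref{imaginaries} and \ref{coprime}. Here I would exploit Lemma~\ref{lem:bisect}, the extension of the Goldberg--Jerrum bisection technique. The idea is that a multiplicative approximation of the \emph{norm} $|Z_{\text{Ising}}(G;y)|$, combined with some method of continuously deforming the effective edge interaction through a sequence of graph gadgets, lets one bisect down to a near-zero of the partition function along a one-parameter family. Near that zero the sign (or, in the complex setting, the argument) is determined, and one can leverage this to distinguish two values of a \numP\ quantity. Concretely, starting from a base graph whose partition function encodes an instance of a known \numP-hard problem (such as \SAT\ or a Tutte evaluation at an integer point), I would build derived graphs by stretching each edge into a path of length $k$ (series composition: this sends $y\mapsto (y^k+1)/(y^k-1)$-type transformations on the Ising parameter, or more directly multiplies exponents) and by thickening edges into bundles of $k$ parallel copies (parallel composition: $y\mapsto y^k$). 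When $y=re^{\I\theta}$ lies on the unit circle at an irrational angle, the orbit $\{y^k\}$ is dense in the unit circle, so parallel composition gives access to a dense set of effective interactions; when $\theta$ is rational of the form $\pp\pi/(2\qq)$ with $\pp$ odd and coprime to $\qq$, a finite power $y^{k}$ reaches $\pm 1$ or $\pm \I$, creating useful cancellations. In both sub-cases I would then run the bisection argument to reduce counting to approximation. For item~\ref{imaginaries} (pure imaginary $y$ with $|y|\notin\{0,1\}$) the argument is similar but uses the interplay between a nontrivial modulus and the $\pi/2$ phase to reach the same conclusion. The hardness for \CAIsing{(\pi/3)}{y} and \ZivIsing{y} follows because a sufficient argument approximation together with \emph{any} nontrivial modulus information recovers an approximation of the norm (or directly a sign bit), so the reductions can be plugged in with minor adjustments.

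Items~\ref{NPitem} and~\ref{NPlastitem} (NP-hardness for the bulk of the complex plane off the unit circle) I would obtain by a simpler reduction from \maxcut\ and its variants, using series/parallel gadgets to push the effective interaction into a regime where a multiplicative norm approximation forces detection of a maximum cut; this is the complex analogue of the standard argument and does not need bisection. The main obstacle, and the step that I expect to take the most care, is the bisection argument in the complex setting for item~\ref{coprime}: one must ensure that the gadgets actually realise a polynomial-length sequence of parameters whose partition functions interpolate between a reference value and a target near a zero, and that the accumulated approximation error through this sequence remains controlled. Matching exponents to a specific \numP-hard target value (rather than merely distinguishing two outcomes) is what makes this harder than a pure NP-hardness reduction, and is where the algebraic assumption on $y$ and the arithmetic of $\pp/\qq$ come in.
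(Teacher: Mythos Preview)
Your handling of items~\ref{itemexact}--\ref{realfour} and \ref{NPitem}--\ref{NPlastitem} is broadly right, though note that the $\numP$-hardness of the \emph{norm} in item~\ref{realfive} is not in \cite{JS,GJNPhard} (those give only \NP-hardness); it is new here, via the norm-based interval-shrinking of Lemma~\ref{lem:bisect} and Corollary~\ref{cor:used}.

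The genuine gap is in items~\ref{circleone}--\ref{coprime}. You propose to run bisection directly in the complex setting, exploiting density of $\{y^k\}$ on the unit circle and controlling errors along a complex one-parameter family. The paper does \emph{not} do this, and it is unclear your plan works as stated: Lemma~\ref{lem:bisect} is an interval-shrinking argument for a \emph{real} linear function $f(\epsilon)=-\epsilon A+B$, and it relies on the fact that a norm oracle for a real quantity lets you decide which side of the root you are on. There is no evident complex analogue, and your ``main obstacle'' paragraph correctly senses trouble but does not resolve it.

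What the paper actually does is reduce each of these items back to the real case $y'\in(-1,0)$ by a short, explicit sequence of stretches and thickenings, and then invoke item~\ref{realfive}. The key step (Lemma~\ref{lem:norm1}) is that for $y=e^{\I\theta}$ with $\theta\in(\pi/2,\pi)\cup(\pi,3\pi/2)$, the quantity $x=(y+1)/(y-1)$ is purely imaginary with $|x|<1$, so a single $2$-stretch yields a \emph{real} effective weight $y'=1-2/(|x|^2+1)\in(-1,0)$; Lemma~\ref{lem:angle} supplies a $k$-thickening that reaches such a $\theta$ from any non-exceptional point on the circle. Item~\ref{imaginaries} is handled similarly (Lemma~\ref{lem:imaginary}), and item~\ref{coprime} reduces to it via a $\qq$-thickening onto the imaginary axis (Lemma~\ref{lem:rational:even}). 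Once at real $y'\in(-1,0)$, Corollary~\ref{cor:used}, Lemma~\ref{lem:nozeroadd} and Lemma~\ref{lem:zivapprox} give the norm, argument and Ziv hardness respectively. No density argument or complex bisection is needed.
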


\subsection{Relaxed versions of the problems}

 A polynomial-time algorithm for  any of the problems that
 we have defined is required to output $0$ 
if it is given an input~$G$ such that
 $\ZIsing{G}{\ybeta,\lambda}=0$. Theorem~\ref{thm:main} gives hardness 
 results for 
  these problems.
 The hardness    is not due to special 
difficulties which arise when the value of the partition function is zero.
In order to demonstrate this point,
(and in order to make certain reductions easier later on),
we also consider the following, more relaxed versions of the problems, where the
output is unconstrained if the value of the partition function is zero. As before,
the parameter $\cst$ is greater than~$1$ and the parameter~$\addcst$ is in $(0,2 \pi)$.
 
\prob {\CNonzeroNIsing{\cst}{\ybeta,\lambda }.}
{A (multi)graph $G$.}
{If $|\ZIsing{G}{\ybeta,\lambda}|=0$ then the algorithm may output
any rational number. Otherwise, it must output
a  rational number $\apxnorm$
such that
$  {\apxnorm}/{\cst} \leq  |\ZIsing{G}{\ybeta,\lambda}| \leq  \cst \apxnorm$. }

\prob{ \CNonzeroAIsing{\addcst}{\ybeta,\lambda}.}{ A (multi)graph $G$.}{ 
If $\ZIsing{G}{\ybeta,\lambda}=0$, then the algorithm may output any rational number.
Otherwise, it must output a  rational number $\apxarg$ such that 
$|\apxarg -  \arg(\ZIsing{G}{\ybeta,\lambda})| \leq  \addcst$.}

\prob {\NonzeroZivIsing{\ybeta,\lambda}.}{
 A (multi)graph $G$ and a positive integer $\lgeps$, in unary.}
{If $|\ZIsing{G}{\ybeta,\lambda}|=0$ then the algorithm may output any complex number. Otherwise,
it must output a 
complex number $z$ such that
  $d(z, \ZIsing{G}{\ybeta,\lambda}) \leq  \tfrac{1}{\lgeps}$.
 } 

As in the un-relaxed versions of the problems, we drop the parameter ``$\lambda$'' from the
problem name when it is~$1$. We give the following generalisation of
Theorem~\ref{thm:main}.

\begin{theorem}
\label{thm:relaxed}
All of the results in Theorem~\ref{thm:main} 
extend to the relaxed case.
That is, the results are still  true with
\CNIsing{\cst}{ y }, \CAIsing{(\pi/3)}{ y} and \ZivIsing{y} 
replaced by 
\CNonzeroNIsing{\cst}{ y }, \CNonzeroAIsing{(\pi/3)}{ y} and \NonzeroZivIsing{y}, respectively.

\end{theorem}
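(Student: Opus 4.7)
My plan is to prove Theorem~\ref{thm:relaxed} item by item, following the structure of Theorem~\ref{thm:main}, and observing that each item splits into an upper-bound direction (where the problem is in \FP, in \RP, or equivalent to approximately counting perfect matchings) and a lower-bound direction (\NP-hardness or \numP-hardness). The upper-bound directions are essentially free: any algorithm that solves the strict version of the problem, in which $0$ must be output when $\ZIsing{G}{y,\lambda}=0$, trivially also solves the relaxed version \CNonzeroNIsing{\cst}{y}, \CNonzeroAIsing{\addcst}{y} or \NonzeroZivIsing{y}, since the relaxed version allows any output in the vanishing case. So for items \ref{itemexact}, \ref{realtwo}, \ref{realthree}, and \ref{realfour} the positive side of the theorem carries over without change.

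For the lower bounds the strategy is to revisit the reductions used to prove Theorem~\ref{thm:main} and show that each one already is, or can be modified into, a reduction to the relaxed version. The key engine behind the \numP-hardness results is the bisection lemma (Lemma~\ref{lem:bisect}), which uses a norm oracle to trap an unknown algebraic quantity inside a shrinking window; inspection will show that it never needs to query the oracle on a graph whose partition function is zero, because the intermediate instances are built by composition (disjoint union, edge-parallelisation, thickening) from a base graph whose Ising partition function takes a specific known nonzero value. So the reduction goes through against a relaxed oracle as well. The \NP-hardness reductions (items \ref{NPitem} and \ref{NPlastitem}) are of the same character: when the underlying combinatorial decision answer is ``yes'' the partition function is provably nonzero, so the relaxed oracle can only help the reduction, not hurt it.

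Where a reduction might potentially produce a query graph $H$ for which $\ZIsing{H}{y,\lambda}=0$ (the risk is worst in item~\ref{coprime}, where $y$ is a root of unity with odd numerator and root-of-unity cancellations are possible), I will apply the standard tensoring patch: replace $H$ by the disjoint union $H \sqcup G_0$ for a fixed ``seed'' graph $G_0$ with $\ZIsing{G_0}{y,\lambda}\neq 0$, so that $\ZIsing{H\sqcup G_0}{y,\lambda} = \ZIsing{H}{y,\lambda}\cdot \ZIsing{G_0}{y,\lambda}$. The known multiplicative constant $\ZIsing{G_0}{y,\lambda}$ can be divided out, and because its norm is a fixed positive constant and its argument a fixed constant modulo $2\pi$, the approximation factor $\cst$ (respectively, the additive argument error $\pi/3$ and the Ziv distance $1/\lgeps$) is preserved up to the same shape of error, so the relaxed oracle yields exactly the information that the original strict reduction required.

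The main obstacle I anticipate is precisely the bookkeeping in this last step: in the root-of-unity regime (item~\ref{coprime}) and in the complex-modulus regime (items \ref{circleone}, \ref{imaginaries}), one must verify that a constant-sized seed graph $G_0$ with known nonzero $\ZIsing{G_0}{y,\lambda}$ exists for every parameter under consideration, and that the resulting norm/argument distortion is bounded independently of the input. This is a finite case analysis that uses only the structure of a single vertex, a single edge, or a short cycle evaluated at~$y$, and it should be routine once the cases are enumerated; the rest of the proof is then a direct translation of the corresponding case of Theorem~\ref{thm:main}.
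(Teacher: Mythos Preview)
Your handling of the upper-bound items and of the \NP-hardness items \ref{NPitem} and \ref{NPlastitem} is fine (indeed, the paper's proof of Lemma~\ref{lem:norm<1:nphard} shows the queried partition function is always nonzero). But your treatment of the \numP-hard items rests on two errors.

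First, your claim that the bisection in Lemma~\ref{lem:bisect} ``never needs to query the oracle on a graph whose partition function is zero'' is incorrect. The whole point of the bisection is to locate the root $\epsilon^*$ of $f(\epsilon)=\ZTutte{G'}{2,\bgamma}$; the queries are made at points $\epsilon$ converging to $\epsilon^*$, and nothing prevents a query from landing exactly on a zero. The paper does \emph{not} avoid zero instances. Instead it redesigns the interval-shrinking so that it tolerates unreliable oracle answers: for the sign oracle (Lemma~\ref{lem:nozeroadd}) ordinary binary search already works because a wrong answer at a zero still lets you recurse into a subinterval containing a zero; for the norm oracle (Lemma~\ref{lem:bisect}) the interval is split into ten pieces and one compares successive oracle values $\hat f(\epsilon_i)-\hat f(\epsilon_{i+1})$, so that at most two of the ten comparisons can be corrupted and one can always find four consecutive reliable ones on one side. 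This robustness-to-zero argument is the actual content needed for Theorem~\ref{thm:relaxed}, and your proposal does not supply it.

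Second, your ``tensoring patch'' is mathematically broken: if $\ZIsing{H}{y}=0$ then $\ZIsing{H\sqcup G_0}{y}=\ZIsing{H}{y}\cdot\ZIsing{G_0}{y}=0$ regardless of the seed $G_0$, so disjoint union cannot repair a vanishing partition function. This device therefore gives you nothing in items \ref{realfive}--\ref{coprime}.
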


\subsection{Applications to quantum simulation}

\subsubsection{\IQP}
\IQP\ is characterised by a restricted class of quantum circuits~\cite{SB09}.
We will  give a formal definition    in Section~\ref{sec:IQP}.
There we will also discuss related work by 
Fujii and Morimae \cite{FM}, Bremner et al.\ \cite{BJS11}
and Jozsa et al.~\cite{JN14}. 
Here we give an informal description that enables us to state our theorem.
Bremner et al.\ showed a hardness of a certain kind of ``weak simulation'' of
a restricted class of circuits called \IQPonetwo{\theta}\ circuits (see Definition~\ref{def:IQPonetwo}).
The qubits of the circuit travel along ``lines'' which go into (and out of) quantum gates.
The output of such a circuit~$C$ is a random variable~$\mathbf{Y}$ (over the qubits that get measured in the output).
Given as input the string of all zero qubits  and an output string $\mathbf{y}\in\{0,1\}^{|I|}$ 
on a set $I$ --- the set of qubits that are measured in the output,
$\Pr_{C;I}$ denotes the probability that $\mathbf{Y}=\mathbf{y}$.
Strong simulation is the problem of (approximately) computing this probability.
We consider the following problem where $\cst>1$ is an error parameter.

\prob{\SSimIQP{\cst}{\theta }.}
{An \IQPonetwo{\theta} circuit $C$, a subset $I\subseteq [n]$ of lines, and a string $\mathbf{y}\in\{0,1\}^{|I|}$.}
{A rational number $p$ such that 
  $	 p/\cst \leq \Pr_{C;I}(\mathbf{Y}=\mathbf{y}) \leq  \cst p$.
}

Our main result  regarding this application is the following.
\setcounter{counter:IQP}{\value{theorem}}
\begin{theorem}
  \label{thm:IQP:hardness}
  Suppose $\cst>1$ and $\theta\in(0,2\pi)$.
  If  $e^{\I \theta}$ is an algebraic complex number
  and $e^{\I 8 \theta} \neq 1$ then  
  \SSimIQP{\cst}{\theta }\ is \numP-hard.
\end{theorem}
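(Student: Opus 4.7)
The plan is to reduce the relaxed Ising norm-approximation problem $\CNonzeroNIsing{\cst'}{e^{2\I\theta}}$ to $\SSimIQP{\cst}{\theta}$. Observe that $y=e^{2\I\theta}$ lies on the unit circle, and the assumption $e^{\I 8\theta}\neq 1$ is equivalent to $y^{4}\neq 1$, i.e., $y\notin\{1,\I,-1,-\I\}$. Item~\ref{circleone} of Theorem~\ref{thm:main}, as extended to the relaxed setting by Theorem~\ref{thm:relaxed}, therefore tells us that $\CNonzeroNIsing{\cst'}{e^{2\I\theta}}$ is \numP-hard for every $\cst'>1$. It thus suffices to give a polynomial-time reduction from this Ising problem to the strong-simulation problem.

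For the reduction I would, given a multigraph $G$, build an \IQPonetwo{\theta} circuit $C$, a line set $I$, and a specific output string $\mathbf{y}$ such that $\Pr_{C;I}(\mathbf{Y}=\mathbf{y}) = \alpha(G)\bigl|\ZIsing{G}{e^{2\I\theta}}\bigr|^{2}$ for an efficiently computable positive scalar $\alpha(G)$ depending only on $|V(G)|$ and $|E(G)|$. The natural construction uses one qubit per vertex, prepares a uniform superposition over spin assignments $\sigma:V\to\{0,1\}$ by a layer of Hadamards, implements each edge $uv\in E(G)$ by a two-qubit diagonal gate that multiplies the amplitude by $e^{2\I\theta}$ when $\sigma(u)=\sigma(v)$, and then measures in the Hadamard basis: the amplitude at the all-zeros outcome becomes $2^{-|V|}\ZIsing{G}{e^{2\I\theta}}$, and its squared norm is the desired probability up to an easily-computed normaliser. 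The substantive work, laid out in detail in Section~\ref{sec:IQP}, is to realise this two-qubit edge interaction using only gates from the restricted set permitted in an \IQPonetwo{\theta} circuit.

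Given such an encoding, the remainder of the reduction is routine. A multiplicative $\cst$-factor approximation of $\Pr_{C;I}(\mathbf{Y}=\mathbf{y})$ can, by dividing by $\alpha(G)$ and taking a positive square root, be turned into a $\sqrt{\cst}$-factor multiplicative approximation of $\bigl|\ZIsing{G}{e^{2\I\theta}}\bigr|$; for any target factor $\cst'>1$, setting $\cst=(\cst')^{2}$ reduces $\CNonzeroNIsing{\cst'}{e^{2\I\theta}}$ to $\SSimIQP{\cst}{\theta}$. The zero case is handled automatically: the probability vanishes exactly when $\ZIsing{G}{e^{2\I\theta}}=0$, in which case the strong-simulation algorithm must output $0$, and $0$ is a permitted output for the relaxed Ising problem.

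The main obstacle is constructing the \IQPonetwo{\theta} gadget implementing the Ising edge interaction within the restricted gate set, and verifying the amplitude identity so that $\alpha(G)$ really has the claimed form; the encoding and the gate-level bookkeeping are where essentially all of the combinatorial content of the proof resides, while the passage from norm approximation to probability approximation and the appeal to Theorem~\ref{thm:relaxed} are mechanical.
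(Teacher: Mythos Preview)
Your proposal is correct and follows essentially the same approach as the paper. The paper's proof is a two-line appeal to Lemma~\ref{lem:NIsingtoIQP} (the reduction \CNIsing{\cst}{e^{\I 2\theta}} \Reduce{T} \SSimIQP{\cst^{1/2}}{\theta}) and Corollary~\ref{cor:norm1:hard} (hardness on the unit circle away from the four special points), and you have correctly reconstructed both ingredients and the way they combine. Two minor remarks: first, you route through the \emph{relaxed} problem via Theorem~\ref{thm:relaxed}, whereas the paper uses the non-relaxed \CNIsing{\cst}{e^{\I 2\theta}} directly---your zero-case analysis shows both work, so this is purely presentational; second, the ``substantive work'' you flag of realising the edge interaction inside \IQPonetwo{\theta} is in fact trivial here, since the gate $\ZZgate{\theta}$ in Definition~\ref{def:IQPonetwo} already is (up to a global phase) exactly the monochromatic-edge interaction $e^{2\I\theta}$, so no gadget is needed and Lemma~\ref{lem:IQP:Ising} delivers the amplitude identity immediately.
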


\subsubsection{Connections between the Sign of the Tutte Polynomial and \BQP}
\label{BQPintro}

The partition function $\ZIsing{G}{\ybeta,\lambda}$ is
equivalent to a specialisation of the \emph{Tutte polynomial},
which is a graph polynomial with two parameters, $x$ and $y$, defined as follows, 
\begin{align}
  \label{eqn:Tutte}
   T(G;x,y)=\sum_{A\subseteq E(G)}(x-1)^{\kappa(A)-\kappa(E(G))}
  (y-1)^{|A|-n+\kappa(A)},
\end{align}
where $n=|V(G)|$ and
$\kappa(A)$ is the number of connected components in the subgraph $(V(G),A)$.
If the quantity $q=(x-1)(y-1)$ is a positive integer,
then the Tutte polynomial with parameters~$x$ and~$y$
is closely related to the partition function of the Potts model, which includes
the Ising model as the special case $q=2$.
In particular, when $q=2$,
\begin{equation}
\label{eq:IsingTutte}
T(G;x,y)= {(y-1)}^{-n} {(x-1)}^{-\kappa(E(G))} \ZIsing{G}{y}.
\end{equation}

Bordewich et al.~\cite{BFLW} raised
the question ``of determining whether the Tutte polynomial is
greater than or equal to, or less than zero at a given point.''
As we will see, this  question is relevant
to the quantum complexity class~\BQP.  We consider the following problems.

\prob{\SignReTutte{x,y}}
{ A (multi)graph $G$.}
{Determine whether the sign of the real part of $T(G;x,y)$ is positive, negative, or~$0$.}

\prob{\NonzeroSignReTutte{x,y}}
{ A (multi)graph $G$.}
{A correct statement of the form ``$T(G;x,y) \geq 0$'' or
``$T(G;x,y) \leq 0$''. }

\BQP\ is  the class of decision problems solvable by a quantum computer in polynomial time with bounded error.
The theorem  \cite[Theorem~6.1]{BFLW}     shows that
all of the problems in  \BQP\  can also be solved
classically in polynomial time using an oracle
that returns the sign of the real part of the Jones polynomial of a link,
evaluated at the point~$t=\exp(2 \pi i/5)$. Thistlethwaite~\cite{Thistlethwaite}  
(see \cite[(6.1)]{JVW}), showed
that this problem is, in turn, related to the problem
of evaluating the Tutte polynomial 
$T(G;-t,-t^{-1})$, for a planar graph~$G$.
This inspired  the question of Bordewich et al.\
about the complexity of determining the sign of the Tutte polynomial,
particularly for the point
$(x,y)=(-t,-t^{-1})$.
We show that problem is hard for values of~$t$ including the relevant value $t=\exp(2 \pi i/5)$. 
Note that our result does not have direct implications for the
simulation of \BQP\ because we do not deal with planarity (though it does answer
the question of Bordewich et al.).
We give the details in Section~\ref{sec:BQP}, where we also discuss a related result of Kuperberg~\cite{Kuperberg}.
Our theorem is as follows.

\setcounter{counter:BQP}{\value{theorem}}
\begin{theorem}
\label{thm:BQP}
Consider the point $(x,y) = (\exp(-a \pi \I/b),\exp(a \pi \I/b))$,
where $a$ and $b$ are positive integers 
satisfying $0< a/b < 2$  and $a\not\in\{b/2,b,3b/2\}$.
If $a$ is odd and  $\cos(a \pi/b) < 11/27$
then \NonzeroSignReTutte{x,y} is $\numP$-hard. 
Thus \SignReTutte{x,y} is also $\numP$-hard.
\end{theorem}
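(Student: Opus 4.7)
The plan is to prove \#P-hardness by reducing from a known \#P-hard counting problem, using a bisection strategy on the sign oracle in the spirit of Lemma~\ref{lem:bisect} (originally due to Goldberg and Jerrum~\cite{GJSign}).

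First, I would pass to the Fortuin--Kasteleyn representation. Since $xy = 1$ one has $x - 1 = \overline{y-1}$; setting $w := y - 1 = e^{a\pi \I/b} - 1$ gives $q := (x-1)(y-1) = |w|^2 = 2 - 2\cos(a\pi/b)$. By hypothesis $q > 2 - 22/27 = 32/27$, so $q$ is a positive real strictly greater than~$1$. The standard identity
\[
  T(G; x, y) \;=\; \bar{w}^{-\kappa(E)}\,w^{-n}\,Z_{\mathrm{RC}}(G; q, w), \qquad Z_{\mathrm{RC}}(G; q, w) = \sum_{A \subseteq E} q^{\kappa(A)} w^{|A|},
\]
translates the \NonzeroSignReTutte{x,y} oracle into an oracle that tests whether $\arg(Z_{\mathrm{RC}}(G;q,w))$ lies in a specific, explicitly known half-plane of~$\CC$.

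Next, I would exploit simple graph operations that multiply $T(G;x,y)$ by known complex numbers: adjoining a loop multiplies by $y$, and adjoining a pendant edge multiplies by $x$. Because $a$ is odd and $a/b \notin \{0, 1/2, 1, 3/2\}$, the powers $y^k$ sweep out a non-trivial cyclic subgroup of the $(2b)$-th roots of unity. Combining the sign oracle with a pre-rotation by any such $\zeta$ yields, for free, an oracle for the sign of $\mathrm{Re}(\zeta \cdot Z_{\mathrm{RC}}(G;q,w))$; iterating over $\zeta$ already pins $\arg(Z_{\mathrm{RC}}(G;q,w))$ down to within $\pi/b$.

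Finally, I would apply bisection. Starting from a graph $H$ for which some integer quantity $N(H)$ is \#P-hard to compute exactly, I would build a parametrised family $\{H_s\}$ using standard Tutte gadgets --- parallel bundles of edges (which replace $w$ by $(1+w)^k - 1 = y^k - 1$), series stretches, and attachments of loops or pendants --- so that $\mathrm{Re}(Z_{\mathrm{RC}}(H_s; q, w))$ depends on $s$ in a controlled, preferably monotone, way and changes sign at a unique point $s^*$ from which $N(H)$ can be read off. The sign oracle, combined with the rotations described above, then drives a binary search for $s^*$ in polynomially many calls.

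The main obstacle will be designing the gadget family so that the real part indeed has a unique, well-separated sign change in the relevant range, and so that the corresponding $N(H)$ can be decoded in polynomial time. This is where the hypothesis $\cos(a\pi/b) < 11/27$ is expected to enter: the quantitative lower bound $q > 32/27$ should provide the numerical margin needed for the bisection to succeed despite the unconstrained behaviour of the oracle whenever the real part vanishes. Verifying this margin, choosing a suitable \#P-hard source problem (the norm-approximation of the Ising model at an appropriately related complex parameter, cf.~Theorem~\ref{thm:relaxed}, being a natural candidate), and handling the ``relaxed'' semantics of the oracle together make up the technical heart of the argument.
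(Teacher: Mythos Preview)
Your proposal has a genuine gap: you are trying to engineer a bisection argument from scratch at the complex point $(x,y)$, and you explicitly leave the hardest part --- constructing a gadget family with a unique, well-separated sign crossing --- unresolved. The paper's proof avoids all of this with one observation you overlooked.

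Since $a$ is odd, a $b$-thickening sends $y \mapsto y^{b} = e^{a\pi\I} = -1$, which is \emph{real}. The same $q = 2 - 2\cos(a\pi/b)$ is preserved, and the shifted point is $(x',y') = (1 - q/2,\, -1) = (\cos(a\pi/b),\, -1)$. This is a real point on the hyperbola $H_q$, and it lies in Region~F of \cite[Theorem~1]{GJSign} precisely when $x' < 11/27$, which is exactly the hypothesis. \#P-hardness of computing the sign of the (real) Tutte polynomial at $(x',y')$ is therefore already established; the remark in the proof of Lemma~\ref{lem:nozeroadd} shows that the relaxed oracle (arbitrary output when the value is zero) is equally hard. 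Finally, Observation~\ref{obs:stretchthicken} transports this hardness back along the $b$-thickening to $(x,y)$, and \eqref{eq:TutteRC} handles the passage between the random-cluster and Tutte normalisations. That is the entire argument.

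So the condition $\cos(a\pi/b) < 11/27$ is not a ``numerical margin'' awaiting discovery inside a new bisection; it is literally the boundary of Region~F in the cited result. Your rotation-by-loops idea and your plan to pin down $\arg Z_{\mathrm{RC}}$ to within $\pi/b$ are unnecessary once you notice that one particular power of $y$, namely $y^{b}$, already lands on the real axis.
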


The condition  $\cos(a \pi /b)< 11/27$ is  roughly  $ 0.36643 < a/b < 1.63357$.
Since $-\exp(-2 \pi \I/5) = \exp(\pi \I) \exp(-2 \pi \I/5) = \exp(3 \pi \I/5)$, we get the
 relevant corollary by taking $a=3$ and $b=5$.
 
\begin{corollary}
\label{cor:five}
  Let $y=-\exp(-2 \pi \I/5)$.  Then \NonzeroSignReTutte{1/y,y} is $\numP$-hard.
\end{corollary}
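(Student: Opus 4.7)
The plan is to deduce the corollary as a direct instantiation of Theorem~\ref{thm:BQP} at the parameters $a = 3$ and $b = 5$. First I would verify that the point $(1/y, y)$ with $y = -\exp(-2\pi \I/5)$ matches the parametrised point $(\exp(-a \pi \I/b), \exp(a \pi \I/b))$ at $(a,b) = (3,5)$. Writing $-1 = \exp(\pi \I)$, we have
\[
  -\exp(-2\pi \I/5) \;=\; \exp\!\bigl(\pi \I - 2\pi \I/5\bigr) \;=\; \exp(3\pi \I/5),
\]
so $y = \exp(3\pi \I/5)$ and $1/y = \exp(-3\pi \I/5)$, which coincides exactly with the theorem's point at $(a,b)=(3,5)$.

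Second, I would verify the hypotheses of Theorem~\ref{thm:BQP} at these parameters. The integers $a=3$ and $b=5$ are positive; $0 < a/b = 3/5 < 2$; and $a = 3 \notin \{5/2,\, 5,\, 15/2\}$, so the exclusion conditions are satisfied. The parity requirement is immediate since $3$ is odd. The only nontrivial hypothesis is the inequality $\cos(a\pi/b) < 11/27$. Using $\cos(3\pi/5) = -\cos(2\pi/5) = (1-\sqrt{5})/4 \approx -0.309$, and noting that $11/27 \approx 0.407 > 0$, this inequality holds comfortably. Applying Theorem~\ref{thm:BQP} then gives that \NonzeroSignReTutte{1/y, y} is \numP-hard, which is the statement of the corollary. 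There is no substantive obstacle in this argument beyond this elementary verification, since all the real work is carried out in the proof of Theorem~\ref{thm:BQP}; the role of the corollary is only to flag the specific point $t = \exp(2\pi \I/5)$ relevant to the \BQP\ simulation motivation.
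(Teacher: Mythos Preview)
Your proposal is correct and follows exactly the paper's own approach: the paper likewise observes that $-\exp(-2\pi\I/5)=\exp(3\pi\I/5)$ and applies Theorem~\ref{thm:BQP} with $a=3$, $b=5$. Your additional verification of the hypotheses (parity, the exclusions $a\notin\{b/2,b,3b/2\}$, and $\cos(3\pi/5)=(1-\sqrt{5})/4<11/27$) is accurate and simply makes explicit what the paper leaves implicit.
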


\subsection{Results about Ising models with fields}

Our results in Section~\ref{sec:mainIsing} are about the complexity of evaluating the Ising partition 
function in the absence of an external field (when $\lambda=1$). 
This is appropriate for the application to \IQP. 
Ising models with external fields are important for their own sake.
Moreover, De las Cuevas et al.\ \cite[Result 2]{DDVM11} showed that with edge interaction $\I$ and external field $e^{\I\pi/4}$ 
an additive approximation of the partition function is \BQP-hard.
Motivated by such quantum connections, we give the following extension. 
\setcounter{counter:field}{\value{theorem}}
\begin{theorem}\label{thm:fields}
 Let $\cst>1$.
 Let $\ybeta$ and $z$ be two roots of unity.
 Then the following holds:
 \begin{enumerate}
   \item If $\ybeta=\pm\I$ and $z\in\{1,-1,\I,-\I\}$, or  $\ybeta=\pm 1$, then \ZIsing{-}{\ybeta,z} can be computed exactly in polynomial time.
   \item Otherwise \CNonzeroNIsing{\cst}{\ybeta,z} is \numP-hard.
 \end{enumerate}
\end{theorem}

\section{Preliminaries}

\subsection{Facts about Approximating Complex Numbers}

  We will use the following technical lemma 
  concerning Ziv's distance measure from Section~\ref{sec:Ziv}.

\begin{lemma}\label{lem:ziv}
If $z$ and $z'$ are two non-zero complex numbers and
if $d( z',z) \leq \epsilon$ then
$| z'|/|z| \leq 1/(1-\epsilon)$ and
$ |\arg{z} - \arg{ z'}| \leq \sqrt{36 \epsilon/11}$. \end{lemma}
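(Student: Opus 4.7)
The plan is to prove the two inequalities separately, using the law of cosines for the angle bound and an elementary triangle-inequality argument for the norm bound.

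For the norm ratio, I would assume without loss of generality that $|z'|\ge|z|$ (the other case gives $|z'|/|z|\le 1\le 1/(1-\epsilon)$ for free). Then $\max(|z|,|z'|)=|z'|$, so the hypothesis reads $|z-z'|\le \epsilon|z'|$. Applying the triangle inequality gives $|z'|\le|z|+|z-z'|\le|z|+\epsilon|z'|$, which after rearrangement yields $(1-\epsilon)|z'|\le|z|$ and hence the claimed ratio bound.

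For the angle bound, write $\phi=|\arg z-\arg z'|\in[0,\pi]$, again assume $|z'|\ge|z|$, and set $r=|z|/|z'|\in[0,1]$. The law of cosines gives
\begin{equation*}
d(z',z)^{2}=\frac{|z-z'|^{2}}{|z'|^{2}}=1+r^{2}-2r\cos\phi=(r-\cos\phi)^{2}+\sin^{2}\phi\;\ge\;\sin^{2}\phi .
\end{equation*}
So $\sin\phi\le d(z',z)\le\epsilon$, and since $\phi\in[0,\pi]$ the argument of the sine must actually lie in $[0,\pi/2]$ (as $\sin\phi<1$ when $\epsilon<1$ forces $r\ge 1-\epsilon>0$ and rules out $\phi=\pi$); therefore $\phi\le\arcsin\epsilon$.

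It remains to verify the numerical inequality $(\arcsin\epsilon)^{2}\le 36\epsilon/11$ for $\epsilon\in[0,1]$, which is where the specific constant $36/11$ enters. I would analyse $f(\epsilon)=36\epsilon/11-(\arcsin\epsilon)^{2}$: one checks $f(0)=0$, $f'(0)=36/11>0$, and $f(1)=36/11-\pi^{2}/4>0$ since $36/11>3.27>2.47>\pi^{2}/4$. The derivative $f'(\epsilon)=36/11-2\arcsin(\epsilon)/\sqrt{1-\epsilon^{2}}$ is strictly decreasing in $\epsilon$ (the subtracted term is monotonically increasing from $0$ to $\infty$), so $f'$ has a single zero in $(0,1)$; thus $f$ first increases and then decreases on $[0,1]$, and because both endpoint values are nonnegative $f$ stays nonnegative throughout. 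Combined with $\phi\le\arcsin\epsilon$ this gives $\phi^{2}\le 36\epsilon/11$, i.e.\ the claimed bound. The only nontrivial step is this last numerical verification; the rest is immediate from the law of cosines and the triangle inequality.
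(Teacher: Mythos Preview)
Your norm-ratio argument is the same as the paper's. For the angle bound you take a genuinely different and cleaner route: the paper derives from the law of cosines a lower bound $\cos\phi \geq 1 - 3\epsilon/2 + \epsilon^2/2$ (using the already-proved inequality $1-\epsilon \leq |z|/|z'| \leq 1$) and then controls $\phi$ via the Taylor expansion of cosine, which only works ``provided that $\epsilon$ is sufficiently small''. Your completing-the-square step $d^2 = (r-\cos\phi)^2 + \sin^2\phi \geq \sin^2\phi$ gives $\sin\phi \leq \epsilon$ directly, and your endpoint-plus-monotonicity check of $(\arcsin\epsilon)^2 \leq 36\epsilon/11$ then covers the entire range $\epsilon\in[0,1]$. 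This is an improvement over the paper's argument.

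There is, however, one genuine gap. Your justification that $\phi\in[0,\pi/2]$ is incorrect as written: the observation $\sin\phi<1$ does \emph{not} force $\phi\leq\pi/2$ (e.g.\ $\phi=3\pi/4$), and the remark about ruling out $\phi=\pi$ is beside the point. The fix is immediate from your own identity: if $\phi\in[\pi/2,\pi]$ then $\cos\phi\leq 0$, so $r-\cos\phi\geq -\cos\phi\geq 0$ and hence
\[
d(z',z)^2=(r-\cos\phi)^2+\sin^2\phi\;\geq\;\cos^2\phi+\sin^2\phi\;=\;1,
\]
contradicting $d(z',z)\leq\epsilon<1$. With this correction your proof is complete (and valid for all $\epsilon<1$, not only small $\epsilon$).
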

\begin{proof}
Suppose $d(z',z) \leq \epsilon$ and
$| z'| \geq |z|$.

First, by the triangle inequality, $|z| + |z'-z| \geq |z'|$ so
 $$
\frac{|z'|}{|z|}  = 1 + \frac{|z'|-|z|}{|z|}
\leq 1 + \frac{|z'-z|}{|z|}
= 1 + \frac{|z'-z|}{|z'|}\frac{|z'|}{|z|}
\leq 1 + \epsilon \frac{|z'|}{|z|},$$
as required.

Second, $|z'-z| \le \epsilon |z'|$ so $( | z'-z | ) ^2 \le \epsilon^2 |z'|^2$.
Letting $z = r \exp(i \theta)$ and $z' = r' \exp(i \theta')$
we have
$$ {((r'\cos(\theta')-r \cos(\theta))}^2 
+ {((r'\sin(\theta') - r \sin(\theta))}^2 \leq \epsilon^2 {r'}^2.$$
The left-hand-side is equal to 
${r}^2 + {r'}^2 - 2 r r'\cos(\theta-\theta')$.
But we already proved
$$1 \leq \frac{r'}{r} \leq \frac{1}{1-\epsilon},$$
so 
$${r'}^2{(1-\epsilon)}^2 + {r'}^2 - 2 {r'}^2 \cos(\theta-\theta')\leq \epsilon^2 {r'}^2,$$
which implies, by re-arranging the above,
$$\cos(\theta-\theta') \geq 1 - \frac{3\epsilon}{2} + \frac{\epsilon^2}{2}.$$
But 
$\cos(x ) = 1-x^2/2! + x^4/4! - x^6/6! + \cdots$, so  
$$\frac{{(\theta-\theta')}^2}{2!} - \frac{{(\theta-\theta')}^4}{4!}  + \frac{{(\theta-\theta')}^6}{6!} - \cdots \leq \frac{3 \epsilon}{2} - \frac{\epsilon^2}{2}.$$
Provided that $\epsilon$ is sufficiently small (so $\theta-\theta'\leq 1$)
the left-hand-side is at least  $\frac{{(\theta-\theta')}^2}{2!} - \frac{{(\theta-\theta')}^4}{4!} $
which is equal to 
${11(\theta-\theta')}^2/24$, so
$|\theta-\theta'| \leq \sqrt{36 \epsilon/11}$.
\end{proof}

 \begin{lemma}
 \label{lem:zivapprox} 
 Suppose $\cst>1$ and $0<\addcst < 2 \pi$. Then
 the following polynomial-time Turing reductions exist.
 \begin{gather*}
  \text{\CNIsing{\cst}{\ybeta,\lambda}} \leq_{\text{T}} \text{\ZivIsing{\ybeta,\lambda}},\\
  \text{\CNonzeroNIsing{\cst}{\ybeta,\lambda}} \leq_{\text{T}} \text{\NonzeroZivIsing{\ybeta,\lambda}},\\
  \text{\CAIsing{\addcst}{\ybeta,\lambda}} \leq_{\text{T}} \text{\ZivIsing{\ybeta,\lambda}},\\
    \text{\CNonzeroAIsing{\addcst}{\ybeta,\lambda}} \leq_{\text{T}} \text{\NonzeroZivIsing{\ybeta,\lambda}},
    \end{gather*}
    
 \end{lemma}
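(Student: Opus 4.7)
The plan is to use a single call to the Ziv-style oracle, with a precision parameter $\lgeps$ chosen as a function of the target error ($\cst$ or $\addcst$), and then read off either the norm or the argument of the returned complex number $z$. Lemma~\ref{lem:ziv} is tailor-made for this: if the oracle returns $z$ with $d(z,\ZIsing{G}{\ybeta,\lambda})\le 1/\lgeps$, then the ratio $|z|/|\ZIsing{G}{\ybeta,\lambda}|$ (in either direction) is bounded by $1/(1-1/\lgeps)$ and the argument error is at most $\sqrt{36/(11\lgeps)}$. Since $\cst$ and $\addcst$ are fixed parameters, the required $\lgeps$ is a constant, so it can be written in unary in constant time and the reduction is polynomial.

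For \CNIsing{\cst}{\ybeta,\lambda}\ reducing to \ZivIsing{\ybeta,\lambda}, I would pick $\lgeps$ to be any integer with $\lgeps\ge 2$ and $\lgeps/(\lgeps-1)\le \cst-\eta$ for some small slack $\eta>0$, call the oracle on $(G,\lgeps)$, obtain $z$, and return a rational approximation $\apxnorm$ of $|z|$ whose precision is enough to absorb the slack $\eta$ (one can, for instance, produce a rational within factor $\cst/(\cst-\eta)$ of $|z|$ in polynomial time using Newton's method on $|z|^2$, which is a rational if the oracle returns Gaussian rationals). For \CAIsing{\addcst}{\ybeta,\lambda}\ reducing to \ZivIsing{\ybeta,\lambda}, I would choose $\lgeps$ large enough so that $\sqrt{36/(11\lgeps)}\le \addcst/2$ (and large enough for the small-$\epsilon$ hypothesis in the proof of Lemma~\ref{lem:ziv} to apply), call the oracle, and return a rational $\apxarg$ that approximates $\arg(z)$ to within $\addcst/2$. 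The two reductions for the relaxed problems \CNonzeroNIsing\ and \CNonzeroAIsing\ are identical, except they query the relaxed oracle \NonzeroZivIsing{\ybeta,\lambda}.

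The only subtlety is the zero case. For the un-relaxed reductions, the oracle returns $0$ exactly when $\ZIsing{G}{\ybeta,\lambda}=0$, since $d(0,w)=1>1/\lgeps$ whenever $w\ne 0$ and $\lgeps\ge 2$; so if $z=0$ we output $0$, which is precisely what \CNIsing\ and \CAIsing\ require in the zero case. For the relaxed reductions, observe the same contrapositive: if the \NonzeroZivIsing\ oracle returns $z=0$, then either $\ZIsing{G}{\ybeta,\lambda}=0$ or the oracle behaved arbitrarily because $\ZIsing{G}{\ybeta,\lambda}=0$ — in either event, we are inside the ``may output any rational number'' regime of \CNonzeroNIsing\ or \CNonzeroAIsing, and any output is acceptable.

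I do not anticipate a real obstacle: once Lemma~\ref{lem:ziv} is in hand, everything is bookkeeping. The only nuisance is propagating the extra error introduced by rounding $|z|$ or $\arg(z)$ to a rational, which is handled by starting with a slightly tighter $\lgeps$ than strictly necessary and then rounding within the remaining slack.
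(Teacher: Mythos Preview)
Your proposal is correct and follows essentially the same approach as the paper: choose a constant $\lgeps$ (depending only on $\cst$ or $\addcst$) so that Lemma~\ref{lem:ziv} guarantees the required accuracy, make one oracle call, and output the norm or argument of the returned $z$, with the zero case handled by observing that $d(0,w)=1$ forces $z=0$ only when the partition function vanishes. The only difference is that you explicitly account for rounding $|z|$ and $\arg z$ to rationals by taking $\lgeps$ slightly tighter; the paper's proof simply outputs $|z|$ or $\arg z$ directly and leaves this representational detail implicit.
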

 \begin{proof}    
  Let $\lgeps$ be any (sufficiently large) integer so that $1-1/\lgeps > 1/\cst$
 and $\sqrt{36/11\lgeps} \leq \addcst$.
 
 Consider a  multigraph $G$ 
 where $|\ZIsing{G}{\ybeta,\lambda}|\neq 0$.
 Given input~$G$ and~$\lgeps$, an oracle for \ZivIsing{\ybeta,\lambda}
or 
 \NonzeroZivIsing{\ybeta,\lambda}
 returns a complex number~$z$ such that  $d(z, \ZIsing{G}{\ybeta,\lambda}) \leq  \tfrac{1}{\lgeps}$.
On the other hand, if $|\ZIsing{G}{\ybeta,\lambda}|= 0$,
then the oracle for \ZivIsing{\ybeta,\lambda} returns the complex number $z=0$
and the oracle for \NonzeroZivIsing{\ybeta,\lambda}
returns any complex number~$z$.

For the first two reductions,
suppose first that $|\ZIsing{G}{\ybeta,\lambda}|\neq 0$.
Then by Lemma~\ref{lem:ziv},  $d(z, \ZIsing{G}{\ybeta,\lambda}) \leq  \tfrac{1}{\lgeps}$
implies  
$$\frac{|z|}{\cst} \leq \left(1-\frac{1}{\lgeps}\right)|z| \leq  { |\ZIsing{G}{\ybeta,\lambda}|}  \leq \frac{|z|}{1-\frac{1}{\lgeps}} \leq \cst |z|,$$
so $|z|$ is a suitable output to \CNIsing{\cst}{\ybeta,\lambda} 
or  \CNonzeroNIsing{\cst}{\ybeta,\lambda} with input~$G$.
On the other hand, if $|\ZIsing{G}{\ybeta,\lambda}|= 0$
then $|z|$ is still suitable in both cases.

For the  last two reductions,
suppose first that $|\ZIsing{G}{\ybeta,\lambda}|\neq 0$.
Then by Lemma~\ref{lem:ziv}, 
$d(z, \ZIsing{G}{\ybeta,\lambda}) \leq  \tfrac{1}{\lgeps}$
implies   
$$  |\arg{z} - \arg{ \ZIsing{G}{\ybeta,\lambda} }| \leq \sqrt{36 \epsilon/11} \leq \addcst,$$
so $\arg{z}$ is a suitable output to  \CAIsing{\addcst}{\ybeta,\lambda}
or \CNonzeroAIsing{\addcst}{\ybeta,\lambda} with input~$G$. On the other hand, if $|\ZIsing{G}{\ybeta,\lambda}|= 0$
 and $z=0$ then $0$ is a suitable output in both cases.
     If $|\ZIsing{G}{\ybeta,\lambda}|= 0$ and $z\neq 0$
 then $\arg z$ is suitable (as an output for \CNonzeroAIsing{\addcst}{\ybeta,\lambda}).

\end{proof}

\subsection{The multivariate Tutte polynomial}
  We will require the random cluster formulation of the 
  multivariate Tutte polynomial.
  Given a (multi) graph $G$ with edge weights $\boldgamma: E(G) \rightarrow 
  \algebraics$ and $q\in \algebraics$, this is defined as
   \begin{align}
	\label{eqn:RCTutte}
	\ZTutte{G}{q,\boldgamma}:=\sum_{A\subseteq E(G)}q^{\kappa(A)}
	\prod_{e\in A} \gamma_e,
  \end{align} 
  where $\gamma_e$ is a shorthand for $\boldgamma(e)$ for an edge $e\in E(G)$.

  Suppose $x$ and $y$  satisfy $q=(x-1)(y-1)$.
  For a graph $G=(V,E)$, let $\boldgamma:E \rightarrow  \algebraics$ be
  the constant function which maps every edge to the value $y-1$.
  Then   (see, for example \cite[(2.26)]{Sokal})
  \begin{equation}
  \label{eq:TutteRC} 
  T(G;x,y)= {(y-1)}^{-n} {(x-1)}^{-\kappa(E(G))}  \ZTutte{G}{q,\boldgamma}.
  \end{equation}
  Obviously from \eqref{eq:IsingTutte}, this implies that if $q=2$   then
  $\ZIsing{G}{y} = \ZTutte{G}{q,\boldgamma}$.
   
  To apply a technique from~\cite{GJSign}
  we will require a multivariate version of the problem 
  \CNonzeroNIsing{\cst}{\ybeta,\lambda }.
  We could do this for general~$q$,
  but we will only use the following version, which is restricted to $q=2$
  and has two complex parameters, $\gamma_1$ and $\gamma_2$.
   
  \prob {\CNonzeroNRCIsing{\cst}{\gamma_1,\gamma_2 }.}
  {A (multi)graph $G=(V,E)$ and edge weights
  $\boldgamma: E \rightarrow \{\gamma_1,\gamma_2\}.$}
  {If $|\ZTutte{G}{2,\boldgamma}|=0$ then the algorithm may output
  any rational number. Otherwise, it should output
  a  rational number $\apxnorm$
  such that
  $  \apxnorm/\cst \leq  |\ZTutte{G}{2,\boldgamma}| \leq   \cst \apxnorm$.}
   
  Suppose that $s$ and $t$ are two distinguished vertices of~$G$.
  Let $Z_{st}(G;q,\boldgamma)$ be the contribution to $\ZTutte{G}{q,\boldgamma}$ from 
  subgraphs $A$ where $s$ and $t$ are in the same component of $(V(G),A)$,
  that is,
  \begin{align*}
	Z_{st}(G;q,\boldgamma):= \sum_{\mathclap{\substack{A\subseteq E:\\s{\rm\ and\ }t{\rm\ in\ same\ component}}}}\ q^{\kappa(A)}
	\prod_{e\in A} \gamma_e. \end{align*}
  Similarly
  let $Z_{s|t}$ denote the contribution 
  to $\ZTutte{G}{q,\boldgamma}$ from configurations~$A$ in which
  $s$ and $t$ are in different components.

\subsection{Implementing new edge weights, series compositions, and parallel compositions }
  \label{sec:shiftdef}

  Our treatment of implementations, series compositions and parallel compositions
  is completely standard and is taken from \cite[Section 2.1]{GJplanar}. The reader
  who is familiar with this material can skip this section (which is included here for completeness).

  Fix $W \subseteq \algebraics$ and $q\in\algebraics$. 
  Let $w^*\in\algebraics$ be a weight   (which may not be in $W$) which we want to ``implement''.
  Suppose that 
  there is a graph~$\Upsilon$,
  with distinguished vertices $s$ and~$t$ 
  and a weight function $\hatbgamma: E(\Upsilon) \rightarrow W$
  such that
  \begin{equation}
  \label{eq:implement}
  w^* = q Z_{st}(\Upsilon;q,\hatbgamma)/Z_{s|t}(\Upsilon;q,\hatbgamma).
  \end{equation}
  In this case, we say that $\Upsilon$ and $\hatbgamma$ implement $w^*$
  (or even that $W$ implements~$w^*$).

  The purpose of ``implementing''  edge weights is this.
  Let $G$ be a graph with weight function $\bgamma$.
  Let $f$ be some edge of~$G$ with weight $\gamma_f=w^*$.
  Suppose that $W$ implements $w^*$.
  Let $\Upsilon$ be a graph with distinguished vertices $s$ and $t$
  with a weight function $\hatbgamma:E(\Upsilon)\rightarrow W$ 
  satisfying~(\ref{eq:implement}). 
  Construct the weighted graph~$G'$ 
  by replacing edge $f$ with a copy of~$\Upsilon$ (identify $s$ with either endpoint of~$f$
  (it doesn't matter which one) and identify $t$ with the other endpoint of $f$ and remove edge $f$).
  Let the weight function 
  $\bgamma'$ of $G'$ inherit weights from $\bgamma$ and 
  $\hatbgamma$ (so $\gamma'_e=\hat\gamma_e$ if $e\in E(\Upsilon)$ and
  $\gamma'_e = \gamma_e$ otherwise).
  Then the definition of the multivariate Tutte polynomial gives
  \begin{align}
	\label{eq:shift}
	\ZTutte{G'}{q,\bgamma'} & = \frac{Z_{s|t}(\Upsilon;q,\hatbgamma)}{q^2} 
	\ZTutte{G}{q,\bgamma}.
  \end{align}
  So, as long as $q\neq 0$ and $Z_{s|t}(\Upsilon;q,\hatbgamma)$ is easy to evaluate,
  evaluating the multivariate Tutte polynomial of $G'$ with weight function $\bgamma'$ is
  essentially the same as evaluating the multivariate Tutte polynomial of $G$ with weight function~$\bgamma$.
   
  Since the norm of the product of two complex numbers is the product of the norms,
  this reduces computing (or relatively approximating) the norm
  with weight function~$\bgamma$ to the problem of computing (or relatively approximating) the norm
  with weight function~$\bgamma'$.
  Also, since the argument of the product of two complex numbers is the
  sum of the arguments of the numbers,
  this reduces computing (or additively approximating) the argument
  with weight function~$\bgamma$ to the problem of computing (or additively approximating) the
  argument with weight function~$\gamma'$.

  Two especially useful implementations are series and parallel compositions.
  Parallel composition is the case in which $\Upsilon$ consists of two parallel edges $e_1$ and $e_2$
  with endpoints $s$ and $t$ and  $\hat\gamma_{e_1}=w_1$ and $\hat{\gamma}_{e_2}=w_2$.
  It is easily checked from Equation~(\ref{eq:implement})
  that $w^* = (1+w_1)(1+w_2)-1$. Also, the extra factor in Equation~(\ref{eq:shift}) cancels,
  so in this case $\ZTutte{G'}{q,\bgamma'} = \ZTutte{G}{q,\bgamma}$.

  Series composition is the case in which $\Upsilon$ is a length-2 path from $s$ to $t$ consisting of edges $e_1$ and $e_2$
  with $\hat\gamma_{e_1}=w_1$ and $\hat\gamma_{e_2}=w_2$.
  It is easily checked from Equation~(\ref{eq:implement})
  that $w^* =  w_1w_2/(q+w_1+w_2)$. Also, the extra factor in Equation~(\ref{eq:shift}) is $q+w_1+w_2$,
  so in this case 
  $\ZTutte{G'}{q,\bgamma'} = (q+w_1+w_2) \ZTutte{G}{q,\bgamma}$.
  It is helpful to note that
  $w^*$ satisfies
  $$\left(1+\frac{q}{w^*}\right) = \left(1+\frac{q}{w_1}\right) \left(1+\frac{q}{w_2}\right).$$

  We say that there is a ``shift'' 
  from $(q,\alpha)$ to $(q,\alpha')$ if 
  there is an implementation of $\alpha'$ consisting of some $\Upsilon$ and  
  $\hatw:E(\Upsilon)\rightarrow W$ where $W$ is the singleton set $W=\{\alpha\}$. 
  Taking $y=\alpha+1$ and $y'=\alpha'+1$
  and defining $x$ and $x'$ by $q=(x-1)(y-1)=(x'-1)(y'-1)$ we
  equivalently refer to this as a shift from $(x,y)$ to $(x',y')$.
  It is an easy, but important observation that shifts may be composed to 
  obtain new shifts.  So, if we have shifts from $(x,y)$ to $(x',y')$
  and from $(x',y')$ to $(x'',y'')$, then we also have a shift 
  from $(x,y)$ to $(x'',y'')$.

  The
  $k$-thickening of  \cite{JVW}
  is the parallel composition of $k$ edges of weight $\alpha$.
  It implements $\alpha'=(1+\alpha)^k-1$ and is a shift from $(x,y)$ to
  $(x',y')$ where $y' = y^k$ (and $x'$ is given by $(x'-1)(y'-1)=q$).
   Similarly, the $k$-stretch is the series composition of $k$ edges of weight $\alpha$.
  It implements an $\alpha'$
  satisfying
  $$1+\frac{q}{\alpha'}= {\left(1+\frac{q}{\alpha}\right)}^k,$$
  It is a shift from $(x,y)$ to 
  $(x',y')$ where $x'=x^k$.
  (In the classical bivariate $(x,y)$ parameterisation, there is effectively
  one edge weight, so the stretching or thickening is applied uniformly
  to every edge of the graph.)

  Thus, we have the following observation. 
  \begin{observation}  \label{obs:stretchthicken}
   The $k$-thickening operation
  gives the following polynomial-time reductions.
  \begin{itemize}
  \item  \CNIsing{\cst}{y^k } $\leq$ \CNIsing{\cst}{y },
  \item  \CAIsing{\addcst}{y^k} $\leq$  \CAIsing{\addcst}{y},  
  \item \SignReTutte{1+(x-1)(y-1)/(y^k-1),y^k} $\leq$ \SignReTutte{x,y}, where $y^k\neq 1 $, and
  \item \ZivIsing{y^k}  $\leq$ \ZivIsing{y}.
  \end{itemize}
  Similarly, $k$-stretching gives the following polynomial-time reductions for 
  $y\neq 1$. 
  \begin{itemize}
  \item
  \CNIsing{\cst}{1+2/((1+2/(y-1))^k-1) } $\leq$ \CNIsing{\cst}{y },
  \item
  \CAIsing{\addcst}{1+2/((1+2/(y-1))^k-1)} $\leq$ \CAIsing{\addcst}{y},   
  \item \SignReTutte{x^k,1+(x-1)(y-1)/(x^k-1)} $\leq$ \SignReTutte{x,y},
  \text{where $x^k\neq 1 $,} and
  \item \ZivIsing{1+2/((1+2/(y-1))^k-1)} $\leq$ \ZivIsing{y}. 
    \end{itemize}
  Similar statements hold for the relaxed versions of the problems.

  \end{observation}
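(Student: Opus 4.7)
The plan is to prove each reduction by a uniform graph transformation applied to the input graph $G$: for $k$-thickening, replace each edge of $G$ by $k$ parallel edges to obtain a graph $G^{(k)}$; for $k$-stretching, replace each edge of $G$ by a path of $k$ edges to obtain $G^{[k]}$. I would then iterate the corresponding compositional identity from Section~\ref{sec:shiftdef} once per edge of~$G$.

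For $k$-thickening, the extra factor in~\eqref{eq:shift} cancels for every parallel composition, so iteration yields the exact equality $Z_{\text{Tutte}}(G^{(k)};q,y-1) = Z_{\text{Tutte}}(G;q,y^k-1)$, which at $q=2$ specialises to $\ZIsing{G^{(k)}}{y} = \ZIsing{G}{y^k}$. Since norms, arguments, and Ziv distances agree exactly, any oracle answer for the right-hand-side problem is immediately an answer for the left-hand-side problem. The zero set is preserved, so the relaxed variants follow in the same way.

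For the $k$-stretching reductions, I would compute the contribution of a single stretched edge by direct evaluation of $Z_{s|t}(\Upsilon;q,w)$ on the $k$-edge path $\Upsilon$ (equivalently, an induction on $k$ using the two-edge series factor $q+w_1+w_2$); this gives a factor $((q+w)^k - w^k)/q$ per stretched edge in~\eqref{eq:shift}. At $q=2$, $w=y-1$, this yields
\begin{equation*}
\ZIsing{G^{[k]}}{y} = c^{|E(G)|}\, \ZIsing{G}{y^*}, \qquad c = \tfrac{1}{2}\bigl((y+1)^k-(y-1)^k\bigr),
\end{equation*}
with $y^* = 1 + 2/((1+2/(y-1))^k-1)$. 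Since $c$ is a fixed algebraic number depending only on $y$ and $k$, both $|c|^{|E(G)|}$ and $|E(G)|\arg(c)$ can be computed to polynomially many bits of precision; dividing the oracle's output by the former and subtracting the latter from the argument converts approximations in the required sense with only a linear blow-up in the precision requested from the oracle, which is absorbed by Lemma~\ref{lem:duplicate}.

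For the two reductions involving \SignReTutte{x,y}, I would combine the above transformations with the conversion~\eqref{eq:TutteRC}. A routine algebraic manipulation gives
\begin{equation*}
T(G;x',y^k) = \left(\frac{y-1}{y^k-1}\right)^{n-\kappa(E)}\, T(G^{(k)};x,y), \qquad x' = 1 + \frac{(x-1)(y-1)}{y^k-1},
\end{equation*}
and analogously for the stretching case, so when the scaling factor is real (as in the parameter regimes arising in our applications), its sign together with the oracle's sign information for $G^{(k)}$ determines the sign of $\operatorname{Re}(T(G;x',y^k))$. The main (minor) obstacle will be the precision bookkeeping in the stretching and sign-of-Tutte cases, which is routine since $|E(G^{(k)})| = |E(G^{[k]})| = k|E(G)|$ and all scaling constants are fixed algebraic numbers.
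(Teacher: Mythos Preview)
Your proposal is correct and follows exactly the paper's approach: the paper offers no proof beyond the discussion of shifts in Section~\ref{sec:shiftdef}, and you simply fill in the details (the explicit stretch factor $c=\tfrac12((y+1)^k-(y-1)^k)$ and the Tutte-polynomial scaling identity for thickening) that the paper leaves implicit. One small point: Lemma~\ref{lem:duplicate} covers only the norm problem, so for the \CAIsing{\addcst}{\cdot} stretch reduction your precision-absorption step needs a separate word---but this is harmless in every application, since the underlying argument hardness (coming from the sign problem for real $y'\in(-1,0)$) in fact holds for all $\addcst<\pi/2$, absorbing the $O(\epsilon)$ loss from rationally approximating $|E(G)|\arg(c)$.
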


\section{Hardness results for the Ising model}
\label{sec:hardIsing}

In this section we prove Theorems~\ref{thm:main} and \ref{thm:relaxed}.
 
\subsection{Real weights}

  First we gather some known results regarding approximating the partition function $\ZIsing{G}{\ybeta}$ of the Ising model when $\ybeta$ is an algebraic real number.

  If $\ybeta\in\{-1,0,1\}$, then 
  computing $\ZIsing{G}{\ybeta}$ is trivial from the definition~\eqref{eqn:Ising}.
  A classical result by Jerrum and Sinclair \cite{JS} settles the complexity of approximating $\ZIsing{G}{\ybeta}$ when $\ybeta>0$. They show that there
is a ``fully polynomial randomised approximation scheme'' (FPRAS) when $\ybeta>1$ and 
that it is\ \NP-hard to approximate the partition function when $0<\ybeta<1$.
  The negative case appears to be more complicated.
  Goldberg and Jerrum \cite{GJNPhard} showed that if $-1<\ybeta<0$, it is also\ \NP-hard to approximate $\ZIsing{G}{\ybeta}$,
  but if $\ybeta<-1$, the problem is equivalent to approximating the number of perfect matchings in a graph
  and it is not known whether there is an FPRAS. 
  Technically, neither Jerrum and Sinclair nor Goldberg and Jerrum
  worked over the algebraic numbers.
  In order to avoid issues of 
  real arithmetic, Jerrum and Sinclair used a computational model in which real arithmetic is performed with perfect
  accuracy, and Goldberg and Jerrum restricted attention to rationals.
  However, the operations in those papers are easily implemented over the algebraic real numbers.
  Using our notation, these results are summarised as follows.

  \begin{lemma} (\cite{JS,GJNPhard})
	 Suppose $\ybeta\in \algebraics$ and $\cst>1$.
	 Then
	\CNIsing{\cst}{\ybeta}
	\begin{itemize}
	  \item is in {\FP} if $\ybeta\in\{-1,0,1\} $;
      \item is in {\RP} if $\ybeta>1$; 
	  \item is NP-hard if $0<\ybeta<1$ or $-1<\ybeta<0$; and
	\item is equivalent in difficulty to approximately counting perfect matchings if $\ybeta<-1$.
	\end{itemize}
	\label{lem:norm:Ising:real}
  \end{lemma}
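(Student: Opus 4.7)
The plan is to assemble the four regimes of the lemma from the definition together with the two cited papers, verifying that each step goes through when the parameter $y$ is an algebraic real rather than a rational (the setting of \cite{GJNPhard}) or a real with an exact-arithmetic oracle (the setting of \cite{JS}). First I would dispense with the trivial cases $y\in\{-1,0,1\}$: for $y=0$ we have $\ZIsing{G}{0} = 2^{c(G)}$ where $c(G)$ is the number of isolated-edge-free components, more precisely it counts assignments with no monochromatic edge, which is trivially computable; for $y=1$ we have $\ZIsing{G}{1}=2^{|V|}$; and for $y=-1$ the sum $\sum_\sigma (-1)^{m(\sigma)}$ is straightforward to evaluate in polynomial time from \eqref{eqn:Ising} by a standard argument (or by reducing to a signed edge-cover count). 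All of these lie in \FP\ regardless of the precision of arithmetic, since the outputs are integers or powers of two.

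For $y>1$, I would invoke the Jerrum--Sinclair FPRAS \cite{JS} directly. Their Markov-chain based algorithm needs $y$ only to compute transition probabilities of the form $y^{\pm 1}$ and to take ratios; since $y$ is algebraic, these can be manipulated symbolically using the fact that arithmetic on algebraic numbers (given by minimal polynomials and isolating intervals) is computable, and, more simply, we can perform all required comparisons to any requested polynomial precision. Rounding errors can be controlled with polylogarithmically many bits of precision, so the Metropolis chain and the final estimator still give the required multiplicative $\cst$-approximation, placing \CNIsing{\cst}{y} in \RP. For $0<y<1$ I would similarly invoke the \NP-hardness reduction of \cite{JS}, which is a reduction from \maxcut\ and makes no real use of the precise form of $y$; the reduction is purely combinatorial (the $k$-thickening Observation~\ref{obs:stretchthicken} already illustrates the kind of polynomial-time transformation involved).

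For the negative cases, I would quote \cite{GJNPhard}: the case $-1<y<0$ is shown \NP-hard by a reduction from \maxcut\ that again is combinatorial, and the case $y<-1$ is shown to be interreducible with approximating the number of perfect matchings via another graph-gadget reduction. Neither proof uses anything about~$y$ beyond arithmetic operations, inequalities, and simple polynomial identities, so the reductions go through verbatim for algebraic~$y$. The only slightly delicate point is that some steps compare a rational computed quantity with a value involving $y$; since $y$ is algebraic, such comparisons are decidable in polynomial time (using the sign-of-a-rational-polynomial-at-an-algebraic-number subroutine) and this costs only polynomial overhead.

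The main (and essentially only) obstacle is this last point: convincing oneself that every place where \cite{JS} assumed rational arithmetic or an exact-real oracle can be replaced by finite-precision computations on a fixed algebraic constant without degrading the \RP\ or \NP-hardness conclusions. Since $y$ is fixed (not part of the input), the required precision is a constant depending only on $y$ (for hardness reductions) or grows polynomially with the desired accuracy (for the FPRAS), and standard effective algebraic number machinery suffices. With that dealt with, the four bullets follow by citing \cite{JS,GJNPhard} as indicated.
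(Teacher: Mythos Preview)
Your proposal is correct and follows essentially the same approach as the paper: the lemma is assembled by citing \cite{JS} for the positive-$y$ regimes and \cite{GJNPhard} for the negative-$y$ regimes, together with the observation that the arithmetic in those arguments carries over unchanged when $y$ is a fixed algebraic real. The one technical point the paper makes explicit that you leave implicit is the bridge between the FPRAS-style formulation used in \cite{JS,GJNPhard} (accuracy $1\pm 1/R$ with $R$ part of the input) and the fixed-$K$ formulation of \CNIsing{\cst}{y}; the paper handles this via the standard powering lemma (Lemma~\ref{lem:duplicate}), which you should also invoke to make the equivalence precise.
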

  
Technically, the  results in   \cite{JS, GJNPhard} were not
about the problem \CNIsing{\cst}{\ybeta} with fixed~$\cst$.
Instead, the accuracy parameter was viewed as part of the input
as in the following problem.
 \prob {\NIsing{\ybeta,\lambda}.}
{A (multi)graph $G$ and a positive integer $\lgeps$, in unary.}
{A  rational number $\apxnorm$
such that
\[\left(1-\tfrac{1}{\lgeps}\right) \apxnorm \leq  |\ZIsing{G}{\ybeta,\lambda}|
\leq \left(1+\tfrac{1}{\lgeps}\right) \apxnorm.\]}  
Nevertheless, the hardness results in Lemma~\ref{lem:norm:Ising:real} follow
easily from those papers using the following standard  powering lemma.
  
 \begin{lemma}
  \label{lem:duplicate}
  Let $\ybeta$ and $\lambda$ be algebraic numbers.
  For any $ \cst>1$, there are polynomial-time Turing reductions between \CNIsing{\cst}{\ybeta,\lambda}  
  and
  \NIsing{\ybeta,\lambda}.
\end{lemma}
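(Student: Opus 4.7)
The plan is the standard ``powering'' argument. One direction is essentially by definition: given an oracle for \NIsing{\ybeta,\lambda}, to answer a query to \CNIsing{\cst}{\ybeta,\lambda} on a graph $G$, choose any fixed integer $\lgeps_0$ large enough that $1+1/\lgeps_0 \leq \cst$ and $1/(1-1/\lgeps_0) \leq \cst$, then call the oracle on $(G,\lgeps_0)$ and return its output. The resulting $\apxnorm$ satisfies $\apxnorm/\cst \leq |\ZIsing{G}{\ybeta,\lambda}| \leq \cst\apxnorm$.

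For the interesting direction, given an oracle for \CNIsing{\cst}{\ybeta,\lambda}, I would simulate \NIsing{\ybeta,\lambda} by using the fact that the Ising partition function is multiplicative on disjoint unions. Specifically, let $G^k$ denote the disjoint union of $k$ copies of~$G$. Directly from the definition~\eqref{eqn:Ising},
\[
\ZIsing{G^k}{\ybeta,\lambda} = \ZIsing{G}{\ybeta,\lambda}^k,
\]
so $|\ZIsing{G^k}{\ybeta,\lambda}| = |\ZIsing{G}{\ybeta,\lambda}|^k$. Given an input $(G,\lgeps)$ to \NIsing{\ybeta,\lambda}, pick the smallest integer $k$ for which $\cst^{1/k} \leq 1+1/\lgeps$; note that $k = O(\lgeps\log\cst)$ suffices, so $k$ is polynomial in the unary input~$\lgeps$. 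Call the \CNIsing{\cst}{\ybeta,\lambda}\ oracle on $G^k$, obtain a rational $\apxnorm$ with $\apxnorm/\cst \leq |\ZIsing{G}{\ybeta,\lambda}|^k \leq \cst\apxnorm$, and output a rational approximation~$\apxnorm'$ to $\apxnorm^{1/k}$. Choosing $\apxnorm'$ with enough bits (polynomially many in $k$ and the bit length of $\apxnorm$, which itself is controlled by the oracle running time on $G^k$) one ensures $\apxnorm'$ is a $(1+1/\lgeps)$-approximation to $|\ZIsing{G}{\ybeta,\lambda}|$, as required.

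The only mildly delicate point is the rational $k$-th root extraction: we cannot compute $\apxnorm^{1/k}$ exactly, but computing a sufficiently accurate rational approximation via, say, binary search takes time polynomial in~$k$ and the bit length of~$\apxnorm$, so the whole reduction runs in polynomial time. I expect this is the only step requiring any care; everything else is bookkeeping about multiplicative slack.
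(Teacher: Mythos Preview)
Your proposal is correct and follows essentially the same approach as the paper: both directions are the standard powering argument via disjoint unions, using $\ZIsing{G^k}{\ybeta,\lambda}=\ZIsing{G}{\ybeta,\lambda}^k$. You are in fact slightly more careful than the paper, which simply declares $\apxnorm^{1/k}$ to be the output without addressing the rational $k$-th root computation you flag.
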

\begin{proof}
The reduction from \CNIsing{\cst}{\ybeta,\lambda}  
  to
  \NIsing{\ybeta,\lambda}
is straightforward: 
Given an input~$G$ to \CNIsing{\cst}{\ybeta,\lambda},
 choose $\lgeps$   so
that $\cst \geq \lgeps/(\lgeps-1)$
and run an algorithm for \NIsing{\ybeta,\lambda} with inputs~$G$ and~$\lgeps$, returning the result.

The other direction is almost as easy.
Given an input $(G,\lgeps)$ to  \NIsing{\ybeta,\lambda},
choose an integer~$k$  sufficiently large (which does not depend on the size of $G$) so that
$(1-1/\lgeps)^k \leq 1/\cst$ and
$(1+1/\lgeps)^k \geq \cst$.
Then form $G_k$ by taking $k$ disjoint copies of $G$.
Run an algorithm for \CNIsing{\cst}{\ybeta,\lambda} with input $G_k$,
obtaining a number $\apxnorm$ such that 
$  {\apxnorm}/{\cst} \leq  |\ZIsing{G_k}{\ybeta,\lambda}| \leq  \cst \apxnorm$.
Then note that $\ZIsing{G_k}{\ybeta,\lambda} = \ZIsing{G}{\ybeta,\lambda}^k$,
so
$$ \left(1-\tfrac{1}{\lgeps}\right) {\apxnorm}^{1/k} \leq
{\apxnorm}^{1/k}/{\cst}^{1/k} \leq  |\ZIsing{G}{\ybeta,\lambda}| \leq  \cst^{1/k} \apxnorm^{1/k}
\leq {\apxnorm}^{1/k}\left(1+\tfrac{1}{\lgeps}\right),$$
so $\apxnorm^{1/k}$ is a suitable output.
 \end{proof}

  Note that the NP-hardness result for $ 0<\ybeta<1$ in Lemma~\ref{lem:norm:Ising:real} is
  essentially best possible in the sense that the problem is not much harder than NP. As \cite{GJNPhard} observed,
  the problem can be solved in randomised polynomial time using an oracle
  for an NP predicate by applying the bisection technique of Valiant and Vazirani~\cite{VV}.
  The situation is different for $\ybeta<0$.
  Goldberg and Jerrum \cite[Theorem 1, Region G]{GJSign} showed that it is \numP-hard to determine the sign of 
  $\ZIsing{G}{\ybeta}$ if $-1<\ybeta<0$.
  Again, they stated their theorem for the case in which $\ybeta$ is rational, but the proof applies
  equally well when $\ybeta$ is an algebraic real number.  
   In terms of our notation, they proved the following lemma. 
   
  \begin{lemma} (\cite{GJSign})
  For any algebraic real number $\ybeta\in (-1,0)$, 
  \SignReTutte{x,y} is \numP-hard, where 
  $x=1+2/(y-1)$.
  \end{lemma}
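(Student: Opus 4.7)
The plan is to reduce to Goldberg and Jerrum's \numP-hardness result for determining the sign of the Ising partition function, and to observe that their argument, originally stated for rational $\ybeta$, extends without modification to algebraic real $\ybeta$.

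First I would spell out the connection between \SignReTutte{x,y} and the sign of $\ZIsing{G}{y}$ under the hypothesis $x=1+2/(y-1)$. This choice forces $q=(x-1)(y-1)=2$, which is the Ising specialisation. Since $y\in(-1,0)$, both $y-1$ and $x-1=2/(y-1)$ are negative real numbers. By equation~\eqref{eq:IsingTutte},
\[
T(G;x,y) = (y-1)^{-n}(x-1)^{-\kappa(E(G))}\ZIsing{G}{y},
\]
so $T(G;x,y)$ equals $\ZIsing{G}{y}$ multiplied by a real nonzero constant whose sign is $(-1)^{n+\kappa(E(G))}$. Since $n$ and $\kappa(E(G))$ are computable from $G$ in polynomial time, determining the sign of $T(G;x,y)$ is polynomial-time equivalent to determining the sign of $\ZIsing{G}{y}$. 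In particular, $T(G;x,y)$ is real, so ``the sign of the real part'' in the definition of \SignReTutte{x,y} is just the sign of $T(G;x,y)$ itself.

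Next I would invoke \cite[Theorem 1, Region G]{GJSign}, which asserts that for rational $\ybeta\in(-1,0)$, determining the sign of $\ZIsing{G}{\ybeta}$ is \numP-hard. The Goldberg--Jerrum reduction goes from a \numP-hard counting problem to the sign problem by constructing a family of graphs (using series and parallel compositions analogous to those in Section~\ref{sec:shiftdef}) whose Ising partition functions encode a target quantity through sign changes, and by using bisection on the gadget parameters to locate those sign changes. Each such construction only ever performs polynomial-time arithmetic (addition, multiplication, division, and root-taking of polynomials) on the fixed weight $\ybeta$, and the comparisons required are comparisons of algebraic reals computed from $\ybeta$.

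The only step that needs attention for the extension from rational to algebraic real $\ybeta$ is verifying that the arithmetic and comparisons used in \cite{GJSign} remain polynomial-time when $\ybeta$ is an arbitrary fixed algebraic real. Since $\ybeta$ is a fixed constant of the problem, all quantities produced by $k$-thickenings and $k$-stretchings (see Observation~\ref{obs:stretchthicken}) lie in the fixed number field $\mathbb{Q}(\ybeta)$, and standard exact-arithmetic algorithms in a fixed algebraic number field run in polynomial time (with, for example, sign determination of algebraic reals reducible to polynomial root isolation). Thus every step of the Goldberg--Jerrum argument carries through verbatim, and we obtain \numP-hardness of \SignReTutte{x,y} for every algebraic real $\ybeta\in(-1,0)$. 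The main conceptual obstacle, then, is purely notational: confirming that no step of \cite{GJSign} secretly exploits rationality beyond the existence of efficient exact arithmetic, which it does not.
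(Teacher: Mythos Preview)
Your proposal is correct and matches the paper's approach: the lemma is not proved in the paper but simply cited from \cite[Theorem~1, Region~G]{GJSign}, with the remark that the argument there, stated for rational~$\ybeta$, goes through verbatim for algebraic real~$\ybeta$. Your write-up spells out the Tutte--Ising translation via~\eqref{eq:IsingTutte} and the reason algebraic arithmetic suffices, which is more detail than the paper gives but entirely in the same spirit.
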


  If $\ybeta$ is real then $\ZIsing{G}{\ybeta}$ is real.
  Thus, either $\ZIsing{G}{\ybeta}=0$, or
  $\arg(\ZIsing{G}{\ybeta})\in \{0,\pi\}$. Hence, approximating
  the argument within $\pm \pi/3$ enables one to determine the sign of 
the real part.
Using the connection 
\eqref{eq:IsingTutte} between the Tutte polynomial and the
partition function of the Ising model  and Lemma~\ref{lem:zivapprox}
we immediately obtain the following corollary.

\begin{corollary} 
Suppose $\ybeta$ is an algebraic real number in the
range $\ybeta\in(-1,0)$. Then the problem
 \CAIsing{ (\pi/3)}{\ybeta} is \numP-hard and so is \ZivIsing{\ybeta}.
  \label{cor:nega:sign}
\end{corollary}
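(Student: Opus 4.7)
The plan is to reduce the previous lemma's problem \SignReTutte{x,y} (known to be \numP-hard for $\ybeta\in(-1,0)$ with $x=1+2/(\ybeta-1)$) to \CAIsing{(\pi/3)}{\ybeta}, and then use Lemma~\ref{lem:zivapprox} to transfer hardness to \ZivIsing{\ybeta}. The reduction is essentially a bookkeeping exercise exploiting that a real number's argument lies in $\{0,\pi\}$ so approximating it within $\pm \pi/3$ suffices to pin down the sign.

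First I would take an instance $G$ of \SignReTutte{x,y} at the point $(x,y)=(1+2/(\ybeta-1),\ybeta)$ and observe, via \eqref{eq:IsingTutte}, that
\[
T(G;x,y) = (y-1)^{-n} (x-1)^{-\kappa(E(G))} \ZIsing{G}{\ybeta}.
\]
Since $\ybeta \in (-1,0)$, we have $y-1 \in (-2,-1)$ and $x-1 = 2/(\ybeta-1) \in (-2,-1)$, so both prefactors are nonzero real numbers whose signs are determined by the parities of $n$ and $\kappa(E(G))$ (quantities trivially computable from $G$). Hence the sign of the real $T(G;x,y)$ and the sign of the real $\ZIsing{G}{\ybeta}$ differ by the known factor $(-1)^{n+\kappa(E(G))}$.

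Next I would invoke an oracle for \CAIsing{(\pi/3)}{\ybeta} on input $G$. If the oracle returns $0$ (the designated zero case), then $\ZIsing{G}{\ybeta}=0$, hence $T(G;x,y)=0$. Otherwise the oracle returns a rational $\apxarg$ with $|\apxarg - \arg(\ZIsing{G}{\ybeta})| \leq \pi/3$. Because $\ZIsing{G}{\ybeta}$ is a nonzero real, its argument is exactly $0$ or $\pi$, and the two cases are separated by more than $2\pi/3$, so the sign of $\ZIsing{G}{\ybeta}$ is recovered unambiguously from $\apxarg$ (e.g.\ positive iff $|\apxarg| < \pi/2$). Applying the correction factor $(-1)^{n+\kappa(E(G))}$ yields the sign of the real part of $T(G;x,y)$, completing the reduction and establishing \numP-hardness of \CAIsing{(\pi/3)}{\ybeta}.

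Finally, hardness of \ZivIsing{\ybeta} follows immediately from Lemma~\ref{lem:zivapprox}, which gives \CAIsing{(\pi/3)}{\ybeta} $\leq_{\mathrm{T}}$ \ZivIsing{\ybeta}. I do not foresee any real obstacle here: the only slightly delicate point is handling the zero case consistently, and this is already built into the problem definitions (the oracle is required to output $0$ when the partition function vanishes, which is precisely what is needed to declare $T(G;x,y)=0$).
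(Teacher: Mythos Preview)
Your approach matches the paper's: exploit real-valuedness so that a $\pi/3$-approximation to the argument pins down the sign, invoke the relation~\eqref{eq:IsingTutte} to pass between $T(G;x,y)$ and $\ZIsing{G}{\ybeta}$, and then apply Lemma~\ref{lem:zivapprox} for \ZivIsing{\ybeta}.

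One small inaccuracy worth flagging: receiving the output~$0$ from the \CAIsing{(\pi/3)}{\ybeta} oracle does \emph{not} certify that $\ZIsing{G}{\ybeta}=0$, since $0$ is also a perfectly valid approximation to $\arg(\ZIsing{G}{\ybeta})$ when the partition function is positive. So your reduction, as literally written, could misreport a positive value as zero and hence fail to solve \SignReTutte{x,y} in full three-way form. This does not, however, spoil the hardness conclusion (and the paper's own sketch is equally informal on this point): the $\numP$-hardness of \SignReTutte{x,y} in~\cite{GJSign} is established via interval bisection, which only requires correct sign information at \emph{nonzero} values (cf.\ the discussion preceding Lemma~\ref{lem:nozeroadd}), and your use of the oracle delivers exactly that.
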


In fact, we can extend Goldberg and Jerrum's \numP-hardness interval-shrinking technique
from \cite{GJSign} to also obtain \numP-hardness
for the relaxed version of the problems.
We start with a general discussion of  interval shrinking.
Suppose that we have a linear function $f(\epsilon) = -\epsilon A  + B$ for positive~$A$ and~$B$
and that we wish to find a value $\hat{\epsilon}$ that is very close to the root $\epsilon^*=B/A$.
 Suppose that we also have an interval  $[\epsilon',\epsilon'']$
such that $f(\epsilon')>0$ and $f(\epsilon'')<0$.
Suppose that $\epsilon''-\epsilon' = \ell$ (so the interval has length~$\ell$).
Roughly, Goldberg and Jerrum  had at hand an oracle for
computing the sign of $f(\epsilon)$
(using an oracle for \SignReTutte{x,y})
 and, using this, it is easy to bisect
the interval, getting very close to $\epsilon^*$ by binary search.  

Using an oracle for the relaxed problem \NonzeroSignReTutte {x,y} 
we can compute the sign whenever it is positive or negative,
but we receive an unreliable answer for the sign of $f(\epsilon)$ if $f(\epsilon)=0$.
Nevertheless, we observe that having a reliable
answer in this case is not important for the progress of the
binary search.
If the binary search queries the value of~$f(\epsilon)$
and $f(\epsilon) \neq 0$
then the reply from the oracle is correct. Otherwise,
the bisection technique described above recurses into a sub-interval that
contains a zero of the function, as required.
Thus, we have the following lemma.
(We omit the formal proof since the lemma follows immediately from the observation that we have just made.)

\begin{lemma}\label{lem:nozeroadd} For any algebraic real number $\ybeta\in (-1,0)$, 
\NonzeroSignReTutte{x,y} is \numP-hard, where 
$x=1+2/(y-1)$.
Also, the problems \CNonzeroAIsing{(\pi/3)}{\ybeta} and \NonzeroZivIsing{\ybeta} are \numP-hard.
\end{lemma}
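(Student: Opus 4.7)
The plan is to lift the Goldberg--Jerrum interval-shrinking reduction from \SignReTutte{x,y} to \NonzeroSignReTutte{x,y}, and then propagate hardness from there to the other two problems via easy reductions. Recall that for $\ybeta\in(-1,0)$ the GJ proof reduces some \numP-hard source problem to a binary search for the unique root $\epsilon^*=B/A$ of a linear function $f(\epsilon)=-\epsilon A+B$ with $A,B>0$, by constructing a family of graph gadgets $G_\epsilon$ (parameterised by edge weights built from $\epsilon$) such that $\operatorname{sign}(T(G_\epsilon;x,y))$ agrees with $\operatorname{sign}(f(\epsilon))$ up to a known positive factor. Starting from an initial interval $[\epsilon',\epsilon'']$ with $f(\epsilon')>0$ and $f(\epsilon'')<0$, the search queries the sign oracle at successive midpoints and descends into whichever half still straddles the sign change, halving the interval length each round and locating $\epsilon^*$ to the polynomial precision required to recover the source answer.

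The main obstacle, and really the only place where the relaxed version differs, is that the oracle for \NonzeroSignReTutte{x,y} may return an arbitrary sign whenever $T(G_\epsilon;x,y)=0$. The key observation, as already foreshadowed in the paragraph preceding the lemma, is that this does not damage the bisection. Maintain the invariant that the current interval contains at least one zero of $f$; query the midpoint $\epsilon_{\mathrm{mid}}$ and receive a reply ``$\ge 0$'' or ``$\le 0$''. If $f(\epsilon_{\mathrm{mid}})\neq 0$ then $T(G_{\epsilon_{\mathrm{mid}}};x,y)\neq 0$ so the reply is a correct statement about the sign and the standard descent applies. If $f(\epsilon_{\mathrm{mid}})=0$ then both possible replies are formally correct and $\epsilon_{\mathrm{mid}}$ itself is a zero of $f$ lying on the common boundary of the two halves; whichever half the oracle's answer sends us into still contains this zero, and the invariant is preserved. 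Hence the bisection proceeds as in GJ, and we obtain \numP-hardness of \NonzeroSignReTutte{x,y} at the stated point.

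The remaining two claims follow cheaply from this. Because $\ybeta$ is real, $\ZIsing{G}{\ybeta}$ is a real number, so whenever it is nonzero its argument is exactly $0$ or $\pi$; an approximation within $\pm\pi/3$ therefore determines its sign. Combined with the Ising--Tutte correspondence in~\eqref{eq:IsingTutte}, an oracle for \CNonzeroAIsing{(\pi/3)}{\ybeta} yields an oracle for \NonzeroSignReTutte{x,y} (when $\ZIsing{G}{\ybeta}=0$ the relaxed oracle may return anything, and any rational number still lets us emit a valid statement of the form ``$T(G;x,y)\ge 0$''), giving \numP-hardness of \CNonzeroAIsing{(\pi/3)}{\ybeta}. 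Finally, \NonzeroZivIsing{\ybeta} reduces to \CNonzeroAIsing{(\pi/3)}{\ybeta} by Lemma~\ref{lem:zivapprox}, so it too is \numP-hard.
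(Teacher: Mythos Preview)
Your argument is essentially the paper's own: the bisection survives because an unreliable answer at a zero of $f$ still keeps a zero in the chosen sub-interval, and the remaining two claims follow from the real-argument observation together with~\eqref{eq:IsingTutte} and Lemma~\ref{lem:zivapprox}. One small slip: in your final sentence you have the reduction the wrong way round. Lemma~\ref{lem:zivapprox} gives \CNonzeroAIsing{(\pi/3)}{\ybeta} $\leq_{\mathrm T}$ \NonzeroZivIsing{\ybeta}, i.e.\ the argument-approximation problem reduces to the Ziv problem, and it is this direction that lets you transfer \numP-hardness from \CNonzeroAIsing{(\pi/3)}{\ybeta} to \NonzeroZivIsing{\ybeta}. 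As written (``\NonzeroZivIsing{\ybeta} reduces to \CNonzeroAIsing{(\pi/3)}{\ybeta}'') the implication would go the wrong way.
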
 
 
We next show how to further extend the \numP-hardness interval-shrinking technique 
to obtain \numP-hardness for the problem  \CNonzeroNIsing{\cst}{\ybeta }. 
This requires new ideas, so we will provide more details.
Let us return to the discussion of interval shrinking.
Let $\eta=1/21$ (the exact value of $\eta$ is not important, but we fix it for concreteness).
Instead of having an oracle for the sign of $f(\epsilon) = - \epsilon A + B$,
we only will be able to assume that
we have an oracle that, on input $\epsilon$,
returns a value $\hat{f}(\epsilon)$
satisfying
$$(1-\eta) |f(\epsilon)| <
\tfrac{21}{22} |f(\epsilon)| \leq
\hat{f}(\epsilon)   
\leq \tfrac{22}{21} |f(\epsilon)|=
(1+\eta) |f(\epsilon)|,$$ 
except that again the value $\hat{f}(\epsilon)$ is completely unreliable if
$f(\epsilon)=0$.
Our strategy will be to divide the interval into
$10$ equal-length sub-intervals 
$[\epsilon_i,\epsilon_{i+1}]$ for $i\in\{0,\ldots,9\}$ with $\epsilon_0=\epsilon'$ and $\epsilon_{10}=\epsilon''$.
(The number $10$ is not chosen to be optimal --- however, it is easy to see that it suffices.
Changing the number of sub-intervals would influence the choice of~$\eta$ above.)
We then let $s_i$
be the sign (positive, negative, or zero) of $\hat{f}(\epsilon_i) - \hat{f}(\epsilon_{i+1})$, for each
$i \in \{0,\ldots,9 \}$. The $s_i$ values can be computed by the oracle.
Now recall that $\epsilon^*$ is the root $B/A$ of the function $f(\epsilon) = - \epsilon A + B$.
Consider next what happens if $\epsilon_i < \epsilon_{i+1} < \epsilon^*$
(so $f(\epsilon_i)> f(\epsilon_{i+1})>0$) .
In this case, 
\begin{align*}
\hat{f}(\epsilon_i) - \hat{f}(\epsilon_{i+1})
& \geq (1-\eta) f(\epsilon_i) - (1+\eta)f(\epsilon_{i+1})\\
&= A (\epsilon_{i+1}-\epsilon_{i} - \eta(2 \epsilon^* - \epsilon_i - \epsilon_{i+1}) ).
\end{align*}
Now $\epsilon_{i+1} - \epsilon_i \geq \ell/10$.
Also $\epsilon^*-\epsilon_i$ and $\epsilon^*-\epsilon_{i+1}$ are both at most $\ell$.
So since $\eta < 1/20$, $s_i$ is positive.
Similarly, if $\epsilon^* < \epsilon_i < \epsilon_{i+1}$ 
(so $f(\epsilon_{i+1} )< f(\epsilon_i)<0$  ) 
then 
\begin{align*}
\hat{f}(\epsilon_i) - \hat{f}(\epsilon_{i+1})
& \geq (1-\eta) (-f(\epsilon_i)) - (1+\eta)(-f(\epsilon_{i+1}))\\
&=  -A (\epsilon_{i+1}-\epsilon_{i} - \eta(2 \epsilon^* - \epsilon_i - \epsilon_{i+1}) ),
\end{align*}
so $s_i$ is negative.
If $\epsilon_i \leq \epsilon^*$ and $\epsilon_{i+1} \geq \epsilon^*$ then we
don't know what the value of $s_i$ will be. 
However, this is true for at most two values of~$i$.
So either
$s_0$, $s_1$, $s_2$ and $s_3$ are all positive
(in which case $\epsilon_2 < \epsilon^*$ and we can recurse on the interval 
$[\epsilon_2,\epsilon_{10}]$)
or $s_6$, $s_7$, $s_8$ and $s_9$ are all negative
(in which case $\epsilon_8 > \epsilon^*$ and we can recurse on the interval
$[\epsilon_0,\epsilon_8]$).
Either way, the interval shrinks to $4/5$ of its original length.

Applying this idea in the proof of \cite[Lemma 1]{GJSign} yields the following. 

\begin{lemma}
\label{lem:bisect} 
Suppose that $\gamma_1$ and $\gamma_2$ are algebraic reals with
$\gamma_1 \in (-2,-1)$ and $\gamma_2 \not\in [-2,0]$. Then
\CNonzeroNRCIsing{(\tfrac{22}{21})}{\gamma_1,\gamma_2 } is \numP-hard.
\end{lemma}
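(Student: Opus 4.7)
The plan is to adapt the proof of \cite[Lemma~1]{GJSign}, which reduces a \numP-hard counting problem (essentially a variant of \MCCut) to determining the sign of $\ZTutte{G}{2,\boldgamma}$ at an Ising weight $\gamma_1\in(-2,-1)$. That proof proceeds in two stages: first a gadget construction which, given an instance $G$ of the counting problem, produces a graph whose edges carry weights from $\{\gamma_1,\gamma_2\}$ and whose multivariate Tutte polynomial, viewed as a function of a single implementable parameter $\epsilon$, is an affine function $f(\epsilon)=-\epsilon A+B$ with $A,B>0$ whose unique root $\epsilon^*=B/A$ encodes the \numP-hard count; and second, a binary search on $\epsilon$ that, using a sign oracle for $f$, brackets $\epsilon^*$ into a sub-interval of exponentially small length from which the count can be read off. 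The hypotheses $\gamma_1\in(-2,-1)$ and $\gamma_2\notin[-2,0]$ are chosen precisely so that the series and parallel compositions from Section~\ref{sec:shiftdef} can realise the range of $\epsilon$ needed by the gadget, so I would reuse that construction essentially unchanged.

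The new work is to replace the sign oracle of the second stage with the approximate-norm oracle for \CNonzeroNRCIsing{(\tfrac{22}{21})}{\gamma_1,\gamma_2}, using exactly the interval-shrinking procedure sketched in the paragraphs immediately preceding the lemma statement. With $\addcst=22/21$ and $\eta=1/21$, the oracle returns $\hat{f}(\epsilon)$ satisfying $(1-\eta)|f(\epsilon)|\leq\hat{f}(\epsilon)\leq(1+\eta)|f(\epsilon)|$ whenever $f(\epsilon)\neq 0$. Given an interval $[\epsilon',\epsilon'']$ of length $\ell$ known to contain $\epsilon^*$, I split it into ten equal pieces with endpoints $\epsilon_0,\ldots,\epsilon_{10}$, query $\hat f$ at all eleven endpoints, and compute the signs $s_i$ of $\hat{f}(\epsilon_i)-\hat{f}(\epsilon_{i+1})$. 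The calculation already carried out in the text shows that $s_i$ correctly reports the monotonicity of $f$ on $[\epsilon_i,\epsilon_{i+1}]$ whenever that sub-interval avoids $\epsilon^*$, so at most two of the ten signs can mislead us. Hence either $s_0,\ldots,s_3$ are all positive and $\epsilon^*\in[\epsilon_2,\epsilon_{10}]$, or $s_6,\ldots,s_9$ are all negative and $\epsilon^*\in[\epsilon_0,\epsilon_8]$; in either case we recurse on an interval of length at most $4\ell/5$. After a polynomial number of rounds $\epsilon^*$ is located with enough precision to read off the underlying count, exactly as in \cite{GJSign}.

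The main obstacle is to handle the \emph{relaxed} nature of the oracle: the output $\hat f(\epsilon)$ may be arbitrary when $f(\epsilon)=0$. This does not disrupt the search, because the procedure never needs to certify that a query point equals $\epsilon^*$; it only needs to recurse into a sub-interval that still contains $\epsilon^*$. If $f(\epsilon_i)=0$ for some $i$ then $\epsilon_i=\epsilon^*$, and by construction $\epsilon_i$ lies in the closure of both candidate recursion sub-intervals above, so either choice is legitimate. Beyond this, the remaining work is purely a verification exercise: I need to confirm that the gadget of \cite[Lemma~1]{GJSign} implements $f(\epsilon)=-\epsilon A+B$ using only weights in $\{\gamma_1,\gamma_2\}$ under our hypotheses, and that an initial bracketing interval of polynomial bit-length containing $\epsilon^*$ can be constructed efficiently. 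Both are implicit in the original proof; I would re-examine each step to ensure nothing quietly depends on knowing the sign of $f$ rather than a multiplicative approximation of its magnitude.
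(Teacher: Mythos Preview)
Your proposal is correct and follows essentially the same approach as the paper: reduce from \MCCut\ via the gadget of \cite[Lemma~1]{GJSign} (using $\gamma_2$ to implement a large weight $M$ by thickening and $\gamma_1$ to implement $\gamma'=-1-\epsilon$ by stretches and thickenings), then replace the sign oracle by the $10$-way interval-shrinking procedure with the relaxed norm oracle exactly as described before the lemma. One small imprecision: when $f(\epsilon_i)=0$ it is not true that $\epsilon_i$ lies in \emph{both} candidate recursion intervals (take $i=0$), but this is harmless since in that case only the correct one of the two disjunctive conditions is guaranteed to fire, and its interval does contain $\epsilon^*$.
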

 
\begin{proof}
 Apart from the interval shrinking idea discussed above, 
 the proof is similar in structure to the proof of \cite[Lemma 1]{GJSign}. 
 We defer some calculations (which are unchanged) to \cite{GJSign} but we provide the
 rest of the proof to show how to get the stronger result. 
 We use the fact that the following problem is \numP-complete.
This was shown by
Provan and Ball~\cite{ProvanBall}. 
\prob{\MCCut.}
{A graph $G=(V,E)$ and distinguished vertices~$s,t\in V$.}
{$|\{S\subseteq E:\mbox{$S$ is a minimum cardinality $(s,t)$-cut in $G$}\}|$.}
 
We will give a Turing reduction from \MCCut\ 
to the problem \CNonzeroNRCIsing{(\tfrac{22}{21})}{\gamma_1,\gamma_2 }. 
 
Let $G,s,t$ be an instance of \MCCut.
Assume without loss of generality that $G$ has no edge from~$s$ to~$t$.
Let $n=|V(G)|$ and $m=|E(G)|$.
Assume without loss of generality that 
$G$ is connected and that 
$m\geq n$ is sufficiently large.
Let $k$ be the size of a minimum cardinality $(s,t)$-cut in~$G$ and
let $C$ be the number of size-$k$
$(s,t)$-cuts.

Let  $q=2$ and $M^* =  2^{4m}$.
Let $h$ be the smallest integer such that $(\gamma_2+1)^h-1>  M^*$
and let $M = (\gamma_2+1)^h-1$.
Note that we can implement~$M$ from~$\gamma_2$ via an $h$-thickening, and
$h$ is at most a polynomial 
in~$m$. 
 
Let $\delta =  4^{m}/M$. 
Let $\bM$ be the constant weight function which gives every
edge weight~$M$.
We will use the following facts:
\begin{equation} 
q M^m (1 - \delta)   \leq Z_{st}(G;q,\bM) \leq q M^m(1+\delta)
\label{thatitem}
\end{equation}
and
\begin{equation}
C M^{m-k}q^2(1-\delta) \leq Z_{s|t}(G;q,\bM) \leq C M^{m-k} q^2 (1+\delta).
\label{thisitem}
\end{equation}
Fact~(\ref{thatitem}) follows from
the fact that each of the (at most 
$2^m$) terms 
in $Z_{st}(G;q,\bM)$,  
other than the term with all edges in~$A$, has  size at most 
$M^{m-1} {q}^n$
and 
$ {2^m M^{m-1}{q}^n} \leq \delta
{M^m q}$.
Fact~(\ref{thisitem}) follows from
the fact that all terms in $Z_{s|t}(G;q,\bM)$ are complements of $(s,t)$-cuts.
If  more than $k$ edges are cut  then the term is at most
$M^{m-k-1} q^n$
and 
$$2^m
M^{m-k-1} 
q^n \leq \delta C M^{m-k}q^2.$$
 
For a parameter~$\epsilon$ in the open interval $(0,1)$ which we will tune later, let
$\gamma' = -1-\epsilon \in (-2,-1)$. 
We will discuss the implementation of $\gamma'$  later.
Let $G'$ be the  graph formed from $G$ by adding an edge from $s$ to~$t$.
Let $\bgamma$ be the edge-weight function for~$G'$ that assigns weight~$M$ to every edge of~$G$ and
assigns weight $\gamma'$ to the new edge.
Using the definition of the (random cluster) Tutte polynomial,
Goldberg and Jerrum noted that
\begin{align}
\nonumber
\ZTutte{G'}{2,\bgamma} &= Z_{st}(G;2,\bM)(1 + \gamma') + Z_{s|t}(G;2,\bM) \left(1 + \frac{\gamma'}{2}\right) \\
&= - \epsilon Z_{st}(G;2,\bM)  + Z_{s|t}(G;2,\bM) \left(1 - \frac{ 1+\epsilon}{2}\right).
\label{eq:Z}
\end{align}
It is easily checked  that
$\ZTutte{G'}{2,\bgamma} $
is positive if $\epsilon$ is sufficiently small ($\epsilon=M^{-2m}$ will do) and it is negative at $\epsilon=1$. 
Thus, viewing $\ZTutte{G'}{2,\bgamma} $
as a function of $\epsilon$, we can
perform interval shrinking (as discussed before the statement of the lemma)
to find a value of $\epsilon$ for which $\ZTutte{G'}{2,\bgamma}$ is very close to~$0$.
The interval shrinking uses an oracle for \CNonzeroNRCIsing{(\tfrac{22}{21})}{\gamma_1,\gamma_2 }.

If we find an $\epsilon$ where $\ZTutte{G'}{q,\bgamma}=0$, then
for this value of~$\epsilon$, we have
$\epsilon Z_{st}(G;q,\bM)
= Z_{s|t}(G;q,\bM) \left(1 - \frac{ 1+\epsilon}{2}\right)$.
Thus, using~$\epsilon$, we can calculate the fraction $Z_{s|t}(G;q,\bM)/Z_{st}(G;q,\bM)$.
Plugging this (known) value into   (\ref{thatitem}) and (\ref{thisitem}),
we obtain
$$\frac{C q (1-\delta)}{M^k(1+\delta)} \leq \frac{Z_{s|t}(G;q,\bM)}{Z_{st}(G;q,\bM)}
\leq \frac{C q(1+\delta)}{M^k(1-\delta)}.$$
Now, we don't know~$k$, 
but $C$ is an integer between~$1$ and $2^m$, 
whereas $M>2^{4m}$, so  there is only one value of~$k$ that gives
a solution~$C$ in the right range. Using the value of~$k$, we can calculate~$C$
exactly.

Technical issues arise both because
we are somewhat constrained in what values~$\epsilon$ we can implement
and because we won't be able to
discover the exact value of~$\epsilon$ that we need
(but we will be able to approximate it closely).
These technical issues provide no more difficulty than they did in~\cite{GJSign}.
Suppose first that we are able, for any given 
$\epsilon \in (M^{-2m},1)$ to implement $\gamma'=-1-\epsilon$.
 Then our basic strategy is to do the interval shrinking, repeatedly sub-dividing the
current interval $\Theta(\log(M^{m^2}))$ 
times, so eventually we'll get an interval of width at most $M^{-m^2}$
that contains an $\epsilon$ where $\ZTutte{G'}{2,\bgamma}=0$.
Goldberg and Jerrum~\cite{GJSign} have already shown
that knowing such an interval enables the exact calculation of~$C$
(so having a small interval is OK --- it is not necessary to know~$\epsilon$ exactly).

The only issue, then, is implementing the weights $\gamma'=-1-\epsilon$ during the  interval shrinking.
As in \cite{GJSign} we cannot expect to implement any particular desired $\gamma'$ precisely.
However, using stretching and thickening, we can implement a value that
is within an additive error of $M^{-m^2}/20$ of any desired $\epsilon$,
and this suffices. The fact that we have algebraic, rather than rational, numbers
is irrelevant since stretchings and thickenings can be computed on algebraic numbers.
\end{proof}

Using stretching and thickening, we get the following corollary.

\begin{corollary}
\label{cor:used}
Suppose $\cst>1$ and that
 $y\in(-1,0)$ is  an algebraic real number.  Then  
\CNonzeroNIsing{ \cst}{y } is \numP-hard.
\end{corollary}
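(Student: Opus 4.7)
The plan is to reduce from the multivariate problem \CNonzeroNRCIsing{(22/21)}{\gamma_1,\gamma_2}, which Lemma~\ref{lem:bisect} shows is \numP-hard for any algebraic reals $\gamma_1 \in (-2,-1)$ and $\gamma_2 \notin [-2,0]$. Since $y \in (-1,0)$, the shift $\gamma_1 := y-1$ lies in $(-2,-1)$, so an edge of weight $\gamma_1$ in the multivariate instance is exactly an Ising edge with parameter $y$. It then suffices to implement the second weight $\gamma_2$ using only edges of weight $\gamma_1$, and to boost the approximation factor from an arbitrary $\cst > 1$ down to $22/21$.

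For the implementation, I would use a $2$-stretch (series composition of two edges of weight $\gamma_1$), which by the series formula realises the weight
\[
w^{*} \;=\; \frac{\gamma_1^2}{q+2\gamma_1} \;=\; \frac{(y-1)^2}{2y}.
\]
For $y \in (-1,0)$ the numerator lies in $(1,4)$ and the denominator lies in $(-2,0)$, so $w^{*} \in (-\infty,-2)$ and in particular $w^{*} \notin [-2,0]$; thus I can take $\gamma_2 := w^{*}$. Given a multivariate instance $(G,\bgamma)$, I would replace each $\gamma_2$-edge of $G$ by a $2$-stretch, obtaining a graph $G'$ whose edges all have weight $\gamma_1 = y-1$, so $\ZTutte{G'}{2,\bgamma'} = \ZIsing{G'}{y}$. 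By the shift identity \eqref{eq:shift}, each such replacement multiplies the partition function by the explicit nonzero scalar $q+2\gamma_1 = 2y$, hence $|\ZIsing{G'}{y}|$ differs from $|\ZTutte{G}{2,\bgamma}|$ by a known nonzero factor and vanishes simultaneously with it. This transfer works verbatim in the relaxed setting where the oracle is permitted to output anything on zero inputs.

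To upgrade \numP-hardness of \CNonzeroNIsing{(22/21)}{y} to \numP-hardness of \CNonzeroNIsing{\cst}{y} for arbitrary $\cst > 1$, I would invoke the standard powering trick used in the proof of Lemma~\ref{lem:duplicate}: given $G'$, take $k$ disjoint copies so that the partition function is raised to the $k$-th power, and choose $k$ with $\cst^{1/k} \le 22/21$; taking $k$-th roots then turns a $\cst$-approximation on the copies into a $(22/21)$-approximation of $|\ZIsing{G'}{y}|$, and the zero case again causes no trouble in the relaxed version. The substantive work is already contained in Lemma~\ref{lem:bisect}; the only genuinely new ingredient is the elementary check that the $2$-stretch at $\gamma_1 \in (-2,-1)$ lands strictly below $-2$.
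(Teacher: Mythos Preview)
Your proof is correct and follows essentially the same strategy as the paper: reduce from Lemma~\ref{lem:bisect} with $\gamma_1=y-1$, implement $\gamma_2$ from $\gamma_1$ via stretch/thicken gadgets, and then use the powering trick of Lemma~\ref{lem:duplicate} to pass from an arbitrary $\cst$ to $22/21$. The only difference is cosmetic: the paper uses a $2$-thickening followed by a $2$-stretch to land at some $\gamma_2>0$, whereas your single $2$-stretch lands at $\gamma_2=(y-1)^2/(2y)<-2$ (your check that $w^*+2=(y+1)^2/(2y)<0$ on $(-1,0)$ confirms the strict inequality), which is an equally valid choice since Lemma~\ref{lem:bisect} only requires $\gamma_2\notin[-2,0]$.
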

\begin{proof}
We first show that \CNonzeroNIsing{(22/21)}{y } is \numP-hard.
Consider the edge interaction $y\in(-1,0)$.
Using the correspondence from~\eqref{eq:IsingTutte} and
\eqref{eq:TutteRC}, this corresponds directly to the quantity
$\gamma_1 \in (-2,-1)$ in Lemma~\ref{lem:bisect}.
We now consider how to use~$y$ to implement the quantity~$\gamma_2$.
A $2$-thickening from $(x,y)$ gives an effective weight 
$(x',y')$ with $y'=y^2\in(0,1)$ and $x'=2/(y'-1)+1<-1$.
Then a $2$-stretch from $(x',y')$ gives an effective weight
$(x'',y'')$ with $x'' = (x')^2 >1$ and $y'' = 2/(x''-1)+1 >1$, corresponding to $\gamma_2 > 0$, as required.

The reduction from \CNonzeroNIsing{(22/21)}{y }
to \CNonzeroNIsing{ \cst}{y } follows from Lemma~\ref{lem:duplicate}.
\end{proof}

Using Lemma~\ref{lem:zivapprox}  and the trivial reduction from
\CNonzeroNIsing{\cst}{y } to \CNIsing{\cst}{y }
and from \NonzeroZivIsing{y} to \ZivIsing{y} we get the following.

\begin{corollary}
 \label{cor:realnegy}
 Let $\ybeta\in(-1,0)$ be an algebraic real number.  
 Then  for any $\cst>1$,  \CNIsing{ \cst}{\ybeta } and \NonzeroZivIsing{\ybeta} and \ZivIsing{\ybeta} are \numP-hard.
\end{corollary}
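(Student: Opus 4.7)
The plan is to chain together three previously established results and two straightforward reductions.

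First, I would observe that Corollary~\ref{cor:used} already gives us \numP-hardness of \CNonzeroNIsing{\cst}{\ybeta} for any algebraic real $\ybeta \in (-1,0)$ and any $\cst > 1$. This is the substantive work --- everything remaining is bookkeeping about how approximation problems relate to one another.

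Next, I would note that there is a trivial (identity) reduction from the relaxed problem \CNonzeroNIsing{\cst}{\ybeta} to the unrelaxed problem \CNIsing{\cst}{\ybeta}: any valid output for \CNIsing{\cst}{\ybeta} on input $G$ is also a valid output for \CNonzeroNIsing{\cst}{\ybeta} on the same input, since the relaxed problem permits any rational output when $|\ZIsing{G}{\ybeta}| = 0$ (in particular, whatever the \CNIsing-oracle returns). Thus \numP-hardness of \CNonzeroNIsing{\cst}{\ybeta} transfers to \CNIsing{\cst}{\ybeta}. By the exact same reasoning, there is a trivial reduction from \NonzeroZivIsing{\ybeta} to \ZivIsing{\ybeta}, so it will suffice to show that \NonzeroZivIsing{\ybeta} is \numP-hard.

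To finish, I would invoke Lemma~\ref{lem:zivapprox}, which provides the polynomial-time Turing reduction \CNonzeroNIsing{\cst}{\ybeta} $\leq_T$ \NonzeroZivIsing{\ybeta}. Combining this reduction with the already-established hardness of \CNonzeroNIsing{\cst}{\ybeta} yields \numP-hardness of \NonzeroZivIsing{\ybeta}, and hence of \ZivIsing{\ybeta} by the preceding paragraph. There is no real obstacle here --- the only thing to be careful about is getting the direction of each reduction right (relaxed-to-unrelaxed is the free direction, since unrelaxed is a stricter specification). All three \numP-hardness claims in the statement follow.
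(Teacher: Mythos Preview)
Your proposal is correct and follows essentially the same approach as the paper: invoke Corollary~\ref{cor:used} for the hardness of \CNonzeroNIsing{\cst}{\ybeta}, use the trivial relaxed-to-unrelaxed reductions, and apply Lemma~\ref{lem:zivapprox} to transfer hardness to \NonzeroZivIsing{\ybeta}. The paper's one-line justification preceding the corollary names exactly these ingredients.
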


\subsection{Complex weights}
 
\begin{lemma}
  Let $\theta\in[0,2\pi)$ and $\theta\not\in\{0,\frac{\pi}{2},\pi,\frac{3\pi}{2}\}$.
  There  is a positive integer~$k$ and an integer~$l$ such that $k\theta+2\pi l\in(\frac{\pi}{2},\pi)\cup(\pi,\frac{3\pi}{2})$.
  \label{lem:angle}
\end{lemma}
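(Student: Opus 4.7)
The plan is to split into two cases depending on whether $\theta/(2\pi)$ is rational. In each case I would show that some positive-integer multiple of $\theta$ lies in the open set $S := (\pi/2, \pi) \cup (\pi, 3\pi/2)$ modulo $2\pi$.

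In the irrational case, Kronecker's equidistribution theorem implies that the sequence $k\theta \bmod 2\pi$ is dense in $[0, 2\pi)$; in particular, some positive integer $k$ places $k\theta \bmod 2\pi$ inside the subinterval $(\pi/2, \pi) \subset S$. In the rational case, write $\theta/(2\pi) = p/q$ in lowest terms with $1 \le p \le q-1$. The hypothesis $\theta \notin \{0, \pi/2, \pi, 3\pi/2\}$ rules out exactly the coprime pairs $(p,q) \in \{(1,2),(1,4),(3,4)\}$, leaving $q = 3$ or $q \ge 5$. Since $\gcd(p,q)=1$, the map $k \mapsto kp \bmod q$ permutes $\{1,\dots,q-1\}$, so it suffices to exhibit $j \in \{1,\dots,q-1\}$ with $j/q \in (1/4,1/2) \cup (1/2,3/4)$ and then pick the $k$ with $kp \equiv j \pmod q$; this yields $k\theta \equiv 2\pi j/q \pmod{2\pi} \in S$. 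For $q = 3$ take $j = 1$ (giving $2\pi/3$). For $q \ge 5$ the open interval $(q/4,q/2)$ has length $q/4 \ge 5/4 > 1$, so it contains an integer $j$, which automatically lies in $\{1,\dots,q-1\}$ and satisfies $j/q \in (1/4,1/2)$.

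There is no real obstacle; the argument is a case analysis augmented by Kronecker's theorem in the irrational case. The only mild subtlety is the small-denominator bookkeeping: verifying that the three excluded angles correspond precisely to the forbidden denominators $q \in \{2,4\}$, and making sure that the integer obtained for $q \ge 5$ lies strictly between $q/4$ and $q/2$ rather than equal to $q/2$ (which would correspond to the forbidden point $\pi$).
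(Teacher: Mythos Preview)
Your proposal is correct and takes essentially the same approach as the paper: both split into the irrational case (density of the orbit) and the rational case $\theta = 2\pi p/q$ (use that $k \mapsto kp \bmod q$ is a bijection, then exhibit a residue $j$ with $j/q$ in the desired range). The only cosmetic difference is that the paper first disposes of the case $\theta \in (\pi/2,\pi)\cup(\pi,3\pi/2)$ with $k=1$, so that $q=3$ never arises in its rational subcase, whereas you handle $q=3$ directly.
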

\begin{proof}
  Clearly if $\theta\in(\frac{\pi}{2},\pi)\cup(\pi,\frac{3\pi}{2})$ then we are done by letting $k=1$ and $l=0$.
  Otherwise $\theta\in(0,\frac{\pi}{2})\cup(\frac{3\pi}{2},2\pi)$.
  If $\theta$ is an irrational fraction of $2\pi$ then we can go through the whole unit circle by taking multiple of $\theta$.
  So assume $\theta=\frac{2\pi \pp}{\qq}$ where $\pp$ and $\qq$ are co-prime and $\qq=3$ or $\qq\geq 5$ as $\theta\not\in\{0,\frac{\pi}{2},\pi,\frac{3\pi}{2}\}$.
  Moreover $\qq=3$ contradicts $\theta\in(0,\frac{\pi}{2})\cup(\frac{3\pi}{2},2\pi)$.
  Hence $\qq\geq5$ and there exists an integer $t\neq \qq/2$ such that $\qq<4t<3\qq$.
  As $\pp$ and $\qq$ are relatively prime, there exist integers $l_1,l_2$ such that $l_1\pp+l_2\qq=1$ and $l_1>0$.
  It is easy to see that $tl_1\theta = \frac{2\pi tl_1\pp}{\qq}=-2\pi tl_2+\frac{2\pi t}{\qq}$.
  As $t/\qq\in(1/4,1/2)\cup(1/2,3/4)$ we have that $\frac{2\pi t}{\qq}\in(\frac{\pi}{2},\pi)\cup(\pi,\frac{3\pi}{2})$.
  The lemma follows by taking $k = t l_1$ and $l = t l_2$.
\end{proof}

The following lemma enables us to determine the complexity of 
evaluating the Ising partition function when the complex edge interaction 
$y\in\algebraics$ is on the unit circle.
 
\begin{lemma}
  \label{lem:norm1}
  Let $y=e^{\I\theta}\in\CC$ be an algebraic complex number such that $\theta\in[0,2\pi)$ and $\theta\not\in\{0,\frac{\pi}{2},\pi,\frac{3\pi}{2}\}$.
  There exists an algebraic real number $y'\in(-1,0)$ 
  that can be implemented by a sequence of stretchings and thickenings from~$y$.
\end{lemma}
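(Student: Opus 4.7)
The plan is to combine Lemma~\ref{lem:angle} (to move $\theta$ into a convenient range on the unit circle by thickening) with a single $2$-stretch (to square the $x$-parameter and thereby turn a purely imaginary quantity into a negative real). The key observation that makes this work is that whenever $y$ lies on the unit circle with $y\neq 1$, the corresponding $x=1+\tfrac{2}{y-1}$ is purely imaginary: using $\overline y=y^{-1}$ one checks directly that $\overline x=-x$.

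Concretely, first I would apply Lemma~\ref{lem:angle} to obtain a positive integer $k$ such that $\phi := k\theta \pmod{2\pi}$ lies in $(\pi/2,\pi)\cup(\pi,3\pi/2)$. A $k$-thickening from $y$ produces an edge weight $y_1=y^k=e^{\I\phi}$. Using $e^{\I\phi}-1=2\I\sin(\phi/2)e^{\I\phi/2}$, a short calculation gives
\begin{equation*}
x_1 \;=\; 1+\frac{2}{y_1-1} \;=\; -\I\cot(\phi/2),
\end{equation*}
which is purely imaginary and nonzero (nonzero because $\phi\neq\pi$). Set $b:=\cot(\phi/2)$. Since $\phi/2\in(\pi/4,\pi/2)\cup(\pi/2,3\pi/4)$, we have $b\in(-1,0)\cup(0,1)$, i.e.\ $b^2\in(0,1)$.

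Next I would apply a $2$-stretch, which acts as $x\mapsto x^2$ in the $(x,y)$ parameterisation (see Observation~\ref{obs:stretchthicken}). This yields $x_2=x_1^2=-b^2$, a real number in $(-1,0)$. Converting back to the $y$-parameter via $y_2=(x_2+1)/(x_2-1)$ gives
\begin{equation*}
y_2 \;=\; \frac{1-b^2}{-b^2-1} \;=\; \frac{b^2-1}{b^2+1},
\end{equation*}
which is real since $x_2$ is real, negative because $b^2<1$, and strictly greater than $-1$ because $2b^2>0$. Thus $y_2\in(-1,0)$ as required. The composed operation (a $k$-thickening followed by a $2$-stretch) is a shift in the sense of Section~\ref{sec:shiftdef}, and $y_2$ remains algebraic since stretchings and thickenings are rational operations on the edge weight.

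There is no real obstacle beyond spotting the two geometric facts that drive the argument: (i) points on the unit circle map to the imaginary axis under $y\mapsto 1+2/(y-1)$, so squaring $x$ via a $2$-stretch lands us on the negative real axis, and (ii) the range $(\pi/2,\pi)\cup(\pi,3\pi/2)$ supplied by Lemma~\ref{lem:angle} is precisely what forces $|\cot(\phi/2)|<1$, which in turn places $y_2$ inside $(-1,0)$ rather than outside.
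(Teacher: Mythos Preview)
Your proof is correct and follows essentially the same approach as the paper: apply Lemma~\ref{lem:angle} to move $\theta$ into $(\pi/2,\pi)\cup(\pi,3\pi/2)$ via a $k$-thickening, observe that the corresponding $x$ is purely imaginary with $|x|<1$, and then use a $2$-stretch to obtain a real $y'\in(-1,0)$. The only difference is cosmetic: you express $x_1$ as $-\I\cot(\phi/2)$ via half-angle identities, whereas the paper writes it as $\tfrac{\sin\theta}{\cos\theta-1}\I$, but these are the same quantity and lead to the same conclusion.
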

\begin{proof} 

  By Lemma~\ref{lem:angle}, 
  there  is a positive integer~$k$ and an integer~$l$ such that $k\theta+2\pi l\in(\frac{\pi}{2},\pi)\cup(\pi,\frac{3\pi}{2})$.  
  As a $k$-thickening realizes $y^k=e^{\I k\theta}$,
  we may assume $\theta\in(\frac{\pi}{2},\pi)\cup(\pi,\frac{3\pi}{2})$.

  Since $\theta\not\in\{0,\frac{\pi}{2},\pi,\frac{3\pi}{2}\}$, we have $\cos\theta\neq 1$ and $\sin\theta\cos\theta\neq0$.
  The latter implies that $\sin\theta+\cos\theta\neq 1$.
  Let $x=\frac{y+1}{y-1}$.
   Note that $x=\frac{\sin\theta}{\cos\theta-1}\I$.
  Moreover $\theta\in(\frac{\pi}{2},\pi)\cup(\pi,\frac{3\pi}{2})$,
  implies that $\cos\theta<0$ and hence $|x|<1$.
  We do a $2$-stretch and the effective weight is $y'=1-\frac{2}{|x|^2+1}\in(-1,0)$.
\end{proof}
 
Combining Lemma~\ref{lem:norm1} with  Observation~\ref{obs:stretchthicken},
Corollary~\ref{cor:used},
Lemma~\ref{lem:nozeroadd} and Corollary~\ref{cor:realnegy} 
we get the following corollary,
which applies to
the problems  \CNonzeroNIsing{\cst}{\ybeta },
\CNonzeroAIsing{(\pi/3)}{\ybeta} and \NonzeroZivIsing{\ybeta} 
and also to the unrelaxed versions
\CNIsing{\cst}{\ybeta },
\CAIsing{(\pi/3)}{\ybeta} and \ZivIsing{\ybeta}. 
 
\begin{corollary}
  \label{cor:norm1:hard} 
  Let $\ybeta=e^{\I\theta}\in\CC$ be an algebraic complex number such that $\theta\in[0,2\pi)$ and $\theta\not\in\{0,\frac{\pi}{2},\pi,\frac{3\pi}{2}\}$. 
  Then for any $ \cst>1$, \CNonzeroNIsing{\cst}{\ybeta }, \CNonzeroAIsing{(\pi/3)}{\ybeta} and \NonzeroZivIsing{\ybeta} are \numP-hard. 
  Hence, so are the un-relaxed versions of all three problems. 
\end{corollary}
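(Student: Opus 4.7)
The plan is straightforward: combine Lemma~\ref{lem:norm1} (which supplies a sequence of stretchings and thickenings from $\ybeta = e^{\I\theta}$ to some algebraic real $y' \in (-1,0)$), the known hardness results for $y' \in (-1,0)$, and the reductions of Observation~\ref{obs:stretchthicken} to transfer that hardness back to $\ybeta$.

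More concretely, I would first invoke Lemma~\ref{lem:norm1} to produce a real $y' \in (-1,0)$ implementable from $\ybeta$ via a finite sequence of stretchings and thickenings (a single $k$-thickening followed by a $2$-stretch in the proof of that lemma). The relaxed hardness at $y'$ is then already in hand: Corollary~\ref{cor:used} gives that \CNonzeroNIsing{\cst}{y'} is \numP-hard, while Lemma~\ref{lem:nozeroadd} gives the same for \CNonzeroAIsing{(\pi/3)}{y'} and \NonzeroZivIsing{y'}. (For the Ziv-distance problem one could alternatively cite Corollary~\ref{cor:realnegy}.)

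Next, Observation~\ref{obs:stretchthicken} provides, for each individual stretching or thickening step, a polynomial-time reduction from the problem at the shifted parameter to the problem at the original parameter, and the final sentence of that observation extends this to the relaxed versions. Composing the constantly many shifts yielded by Lemma~\ref{lem:norm1} gives a single polynomial-time reduction from each of the three problems at $y'$ to the corresponding problem at $\ybeta$. Chaining these reductions with the hardness results of the previous paragraph gives the \numP-hardness of \CNonzeroNIsing{\cst}{\ybeta}, \CNonzeroAIsing{(\pi/3)}{\ybeta}, and \NonzeroZivIsing{\ybeta}. Finally, the un-relaxed versions \CNIsing{\cst}{\ybeta}, \CAIsing{(\pi/3)}{\ybeta}, and \ZivIsing{\ybeta} impose a strict additional requirement on outputs when the partition function vanishes, so any algorithm for the un-relaxed problem is automatically an algorithm for the relaxed problem; hence \numP-hardness transfers trivially.

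There is essentially no real obstacle, since all the machinery has been built in preceding sections; the only thing to verify is the bookkeeping for composition of shifts, which is immediate because polynomial-time Turing reductions compose and the stretching/thickening graphs constructed in Lemma~\ref{lem:norm1} have size polynomial in the input.
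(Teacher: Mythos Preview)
Your proposal is correct and follows essentially the same approach as the paper, which simply states that the corollary follows by combining Lemma~\ref{lem:norm1} with Observation~\ref{obs:stretchthicken}, Corollary~\ref{cor:used}, Lemma~\ref{lem:nozeroadd}, and Corollary~\ref{cor:realnegy}. Your write-up is in fact more detailed than the paper's one-line justification, and the bookkeeping you flag (composition of shifts, trivial reduction from relaxed to un-relaxed) is exactly right.
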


The hardness on the unit circle extends directly to the whole imaginary axis. 

\begin{lemma}    \label{lem:imaginary}
  Suppose $y=r\I$ and $r\neq 0,\pm1$ where $r$ is algebraic.
   There exists an algebraic real number $y'\in(-1,0)$ 
 that can be implemented by a sequence of stretchings and thickenings from~$y$.    
\end{lemma}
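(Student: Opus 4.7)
The plan is to mirror the proof of Lemma~\ref{lem:norm1} in the dual coordinate, using the observation that the $q=2$ reparameterisation $x=(y+1)/(y-1)$ maps the punctured imaginary axis bijectively onto the unit circle. Whereas Lemma~\ref{lem:norm1} thickens (sending $y\mapsto y^k$) to traverse the unit circle and then stretches to reach $(-1,0)$, here I would stretch (which acts as $x\mapsto x^k$) to traverse the unit circle and then thicken to land in $(-1,0)$.

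Concretely, for $y=r\I$ with $r\in\algebraics\cap\RR$ and $r\notin\{0,\pm 1\}$, a short computation gives $x=(1+r\I)/(r\I-1)=((r^2-1)-2r\I)/(r^2+1)$, so $|x|=1$. Writing $x=e^{\I\phi}$, the values $r\in\{0,\pm 1\}$ correspond precisely to $x\in\{-1,\pm\I\}$, hence the hypothesis gives $\phi\in[0,2\pi)\setminus\{0,\pi/2,\pi,3\pi/2\}$. Lemma~\ref{lem:angle} then supplies a positive integer $k$ with $k\phi\bmod 2\pi\in(\pi/2,\pi)\cup(\pi,3\pi/2)$.

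Next I would apply a $k$-stretch to~$y$. By Observation~\ref{obs:stretchthicken}, this implements the weight $y'=(x^k+1)/(x^k-1)$, which is well defined because $k\phi\not\equiv 0\pmod{2\pi}$. Using the identity $(e^{\I\alpha}+1)/(e^{\I\alpha}-1)=-\I\cot(\alpha/2)$ I obtain $y'=-\I\cot(k\phi/2)$, which is purely imaginary; write $y'=r'\I$. The condition $k\phi\bmod 2\pi\in(\pi/2,\pi)\cup(\pi,3\pi/2)$ is exactly $|\cot(k\phi/2)|\in(0,1)$, so $0<|r'|<1$. A final $2$-thickening sends $y'$ to $(y')^2=-r'^2\in(-1,0)$, which is algebraic since stretchings and thickenings act on weights by rational operations with integer exponents over $\algebraics$.

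The only step requiring care is checking the hypothesis of Lemma~\ref{lem:angle}, which follows from the elementary correspondence $\{0,\pm 1\}\leftrightarrow\{-1,\pm\I\}$ between excluded $r$-values and excluded $x$-values; once this duality with Lemma~\ref{lem:norm1} is set up, the rest is bookkeeping.
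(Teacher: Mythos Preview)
Your proof is correct and follows essentially the same route as the paper: observe that $x=(y+1)/(y-1)$ lies on the unit circle, invoke Lemma~\ref{lem:angle} to rotate $x^k$ into the arc $(\pi/2,\pi)\cup(\pi,3\pi/2)$ so that the $k$-stretch yields a purely imaginary weight of norm in $(0,1)$, and then $2$-thicken into $(-1,0)$. The only cosmetic differences are that the paper first disposes of the case $|r|<1$ directly by a single $2$-thickening before running the general argument for $|r|>1$, and it establishes pure imaginarity of $z_k$ by a parity-of-monomials argument rather than your explicit identity $(e^{\I\alpha}+1)/(e^{\I\alpha}-1)=-\I\cot(\alpha/2)$; your uniform treatment is slightly cleaner.
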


\begin{proof}
  If $0<|y|<1$, then a $2$-thickening yields effective weight $y^2=-r^2\in(-1,0)$.
  Let $y'=-r^2$ and the claim holds.

  Otherwise suppose $|y|>1$.
  We know that a $k$-stretch yields the weight
  $z_k = 1+2/(x^k-1)$ where 
  $x = 1+2/(y-1)=(y+1)/(y-1)$.
  Re-arranging, we find that $z_k=\frac{(y+1)^k+(y-1)^k}{(y+1)^k-(y-1)^k}$.
  We will now argue that $z_k$ is purely imaginary.  
  To see this, note that monomials in the numerator all have degrees of the same parity as $k$, 
  whereas those in the denominator have degrees of the same parity as $k-1$.
  Therefore, it must be the case 
  that the numerator is real and the denominator is purely imaginary, or vice versa.
  In either case,   $z_k$ is purely imaginary.
  Therefore, if we can find a positive integer $k$ such that $0<|z_k|<1$ then 
  we have reduced our problem  to the previous case.
 
  Since $y$ is purely imaginary, we have that $|y+1|=|y-1|$.
  Since $x=(y+1)/(y-1)$, this implies that $|x|=1$.
  It is easy to see that $0<|z_k|<1$ if and only if $|x^k+1|<|x^k-1|$ and $x^k\neq-1$.
  This in turn is equivalent to $\arg\left( x^k \right)\in\left( \frac{\pi}{2},\pi\right)\cup\left(\pi,\frac{3\pi}{2} \right)$.
  By Lemma~\ref{lem:angle}, such a $k$ always exists unless $\arg(x)=\frac{t\pi}{2}$ where $t=0,1,2,3$.
  In these cases $y=\pm 1, \pm \I$, which contradicts our assumption.
\end{proof}

Combining Lemma~\ref{lem:imaginary} with Observation~\ref{obs:stretchthicken},
Corollary~\ref{cor:used},
Lemma~\ref{lem:nozeroadd} and Corollary~\ref{cor:realnegy},
we get the following corollary.
 
\begin{corollary}\label{cor:twobad} 
 Let $y=r\I$ where $r\neq 0,\pm1$  and $r$ is algebraic. Let $\cst>1$.
 Then \CNonzeroNIsing{\cst}{\ybeta },
\CNonzeroAIsing{(\pi/3)}{\ybeta} and \NonzeroZivIsing{\ybeta} are \numP-hard. Hence, so are the
un-relaxed versions of all three problems.
 \end{corollary}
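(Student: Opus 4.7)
The plan is to chain the already-established tools in a direct way. By Lemma~\ref{lem:imaginary}, starting from $y=r\I$ with $r$ algebraic and $r\notin\{0,\pm 1\}$, we can implement some algebraic real number $y'\in(-1,0)$ by a sequence of $k$-stretchings and $k$-thickenings. Since (by the discussion in Section~\ref{sec:shiftdef}) shifts compose into shifts, the whole sequence is just a single polynomial-time graph gadget substitution: replace each edge of the input by a fixed gadget built out of edges of weight $y$.

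Next, I would use Observation~\ref{obs:stretchthicken} to transfer the three approximation problems at $y'$ to the corresponding problems at $y$. Concretely, given an input graph $G$ to \CNonzeroNIsing{\cst}{y'}, \CNonzeroAIsing{(\pi/3)}{y'} or \NonzeroZivIsing{y'}, one forms $G'$ by applying the stretch/thicken gadget. The identity~(\ref{eq:shift}) (together with~\eqref{eq:IsingTutte}) tells us that $\ZIsing{G'}{y}$ equals $\ZIsing{G}{y'}$ multiplied by an explicitly computable nonzero complex factor depending only on $y$ and the gadget structure. Since this factor has known norm and known argument, a multiplicative norm approximation, an additive argument approximation, or a Ziv approximation at $y$ translates into the corresponding approximation at $y'$ in polynomial time. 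Observation~\ref{obs:stretchthicken} states this packaging exactly.

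Finally, I would invoke the existing real-axis hardness results: Corollary~\ref{cor:used} for \CNonzeroNIsing{\cst}{y'}, and Lemma~\ref{lem:nozeroadd} for \CNonzeroAIsing{(\pi/3)}{y'} and \NonzeroZivIsing{y'}. This yields \numP-hardness of all three relaxed problems at $y$. The un-relaxed versions \CNIsing{\cst}{y}, \CAIsing{(\pi/3)}{y} and \ZivIsing{y} are at least as hard as their relaxed counterparts via the trivial reduction (any solver for the un-relaxed version also solves the relaxed one), so the same hardness carries over.

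The only real subtlety, and the step I would double-check, is that the relaxed ``Nonzero'' semantics survives the gadget substitution: since the multiplicative factor from~\eqref{eq:shift} is nonzero, $\ZIsing{G'}{y}=0$ if and only if $\ZIsing{G}{y'}=0$, so the freedom to output anything in the zero case maps back correctly. This is the one thing not completely automatic from Observation~\ref{obs:stretchthicken} as stated, but it is immediate from the multiplicativity of the shift identity.
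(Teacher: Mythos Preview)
Your proposal is correct and follows essentially the same approach as the paper, which simply states that the corollary follows by combining Lemma~\ref{lem:imaginary} with Observation~\ref{obs:stretchthicken}, Corollary~\ref{cor:used}, Lemma~\ref{lem:nozeroadd} and Corollary~\ref{cor:realnegy}. Your explicit check that the ``Nonzero'' semantics survives the shift (because the factor in~\eqref{eq:shift} is nonzero) is a useful clarification that the paper leaves implicit.
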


Finally, this hardness can be extended to some algebraic complex numbers
off of the unit circle.

\begin{lemma}
  Let $y=re^{\I\theta}$ be an algebraic complex number such that $r>0$ and $\theta=\frac{\pp\pi}{2\qq}$, 
  where $\pp$ and $\qq$ are two co-prime positive integers and $\pp$ is odd.
   There exists an algebraic real number $y'\in(-1,0)$ 
 that can be implemented by a sequence of stretchings and thickenings from~$y$.    
  \label{lem:rational:even}
\end{lemma}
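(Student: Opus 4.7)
The plan is to reduce to Lemma~\ref{lem:imaginary}, which already handles purely imaginary weights. The key observation is that, because $p$ is odd, a single $q$-thickening of $y$ lands exactly on the imaginary axis: it yields effective weight $y^q = r^q e^{\I p\pi/2} = \pm r^q\I$, where the sign depends only on $p \bmod 4$. This is immediate from $q\cdot p\pi/(2q) = p\pi/2$ and the fact that $p\pi/2$ is a half-integer multiple of $\pi$ congruent to $\pi/2$ or $3\pi/2$ mod~$2\pi$.

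First I would perform this $q$-thickening to reduce to the purely imaginary case with modulus $r^q$. Since $r > 0$, we have $r^q > 0$, so the only hypothesis of Lemma~\ref{lem:imaginary} that really requires checking is $r^q \neq 1$, equivalently $r \neq 1$. Assuming this, applying Lemma~\ref{lem:imaginary} to the effective weight $y^q$ yields a further sequence of stretchings and thickenings producing an algebraic real $y' \in (-1,0)$; composing that sequence with the initial $q$-thickening gives an implementation of $y'$ from the original $y$, as required.

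The only mildly delicate point is the case $r = 1$. For $p$ odd with $p, q$ coprime positive integers, the angle $\theta = p\pi/(2q)$ lies in $\{0, \pi/2, \pi, 3\pi/2\}$ only when $q = 1$ and $p \in \{1, 3\}$: equating $\theta$ to $0$ or $\pi$ forces $p$ even, and $\theta \in \{\pi/2, 3\pi/2\}$ forces $p \in \{q, 3q\}$, whence coprimality gives $q = 1$. So when $r = 1$ and $(p,q)\notin\{(1,1),(3,1)\}$, $y$ lies on the unit circle with $\theta \notin \{0,\pi/2,\pi,3\pi/2\}$, and Lemma~\ref{lem:norm1} applies directly to $y$ to produce the required $y' \in (-1,0)$; the two remaining points $y = \pm\I$ fall into the easy set of item~\ref{itemexact} of Theorem~\ref{thm:main} and need not be handled here. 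I do not anticipate any substantive obstacle beyond this elementary case split, since the work is done once we are on the imaginary axis.
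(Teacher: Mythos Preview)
Your approach is essentially identical to the paper's: split on $r=1$ versus $r\neq 1$, invoke Lemma~\ref{lem:norm1} in the former case, and in the latter apply a $\qq$-thickening to land on the imaginary axis and then invoke Lemma~\ref{lem:imaginary}. In fact you are slightly more careful than the paper, which simply writes ``If $r=1$ then we are done by Lemma~\ref{lem:norm1}'' without singling out the degenerate points $y=\pm\I$; your observation that these fall under the easy cases of Theorem~\ref{thm:main} (and are not genuinely covered by the lemma as literally stated) is a valid caveat that the paper glosses over.
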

\begin{proof}
  If $r=1$ then we are done by Lemma \ref{lem:norm1}.
  Otherwise $r\neq 1$ and by a $\qq$-thickening it reduces to the case of Lemma \ref{lem:imaginary}.
\end{proof}

\begin{corollary}
  \label{cor:onebad} 
  Let $y=re^{\I\theta}$ be an algebraic complex number such that $r>0$ and $\theta=\frac{\pp\pi}{2\qq}$, 
  where $\pp$ and $\qq$ are two co-prime positive integers and $\pp$ is odd.
  Then for any $ \cst >1$, \CNonzeroNIsing{\cst}{\ybeta }, \CNonzeroAIsing{(\pi/3)}{\ybeta} and \NonzeroZivIsing{\ybeta} are \numP-hard. 
  Hence, so are the un-relaxed versions of all three problems. 
\end{corollary}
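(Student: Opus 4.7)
The plan is to combine Lemma~\ref{lem:rational:even} (which produces an algebraic real number $y'\in(-1,0)$ implementable from $y$ via a sequence of stretchings and thickenings) with the hardness results already established for such real negative weights, and then transfer the hardness back to $y$ through the reductions given by Observation~\ref{obs:stretchthicken}. This is exactly the pattern used in the preceding Corollary~\ref{cor:norm1:hard} and Corollary~\ref{cor:twobad}, so the proof should be an immediate application.

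In detail, I would first invoke Lemma~\ref{lem:rational:even} to obtain $y' \in (-1,0)$ realised by some sequence of $k$-stretches and $k$-thickenings from $y$. Next, from Corollary~\ref{cor:used} I get that \CNonzeroNIsing{\cst}{y'} is \numP-hard for any $\cst>1$; from Lemma~\ref{lem:nozeroadd} I get \numP-hardness of \CNonzeroAIsing{(\pi/3)}{y'} and \NonzeroZivIsing{y'}; and from Corollary~\ref{cor:realnegy} I get the corresponding hardness for the un-relaxed versions \CNIsing{\cst}{y'} and \ZivIsing{y'}. Each stretching/thickening step in the implementation of $y'$ from $y$ yields the appropriate polynomial-time reduction between the problems at the two weights (via Observation~\ref{obs:stretchthicken}), so composing the reductions along the sequence gives the Turing reductions from the problems at $y'$ to the problems at $y$, transferring \numP-hardness.

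I do not expect any real obstacle here: the heavy lifting was done in Lemma~\ref{lem:rational:even} (producing $y'$) and in the earlier hardness results for real $y'\in(-1,0)$. The only thing to be careful about is that the transfer for the relaxed norm problem is direct (the multiplicative factor introduced by the stretching does not affect whether the norm is zero or not), while for the argument problem one uses that the extra multiplicative factor $Z_{s|t}(\Upsilon;q,\hatbgamma)/q^2$ appearing in~\eqref{eq:shift} is a fixed, explicitly computable complex number, so its argument can be subtracted off to recover an additive approximation to $\arg \ZIsing{G}{y'}$ within $\pi/3$. The un-relaxed hardness then follows from the trivial reductions \CNonzeroNIsing{\cst}{y}$\leq$\CNIsing{\cst}{y}, \CNonzeroAIsing{(\pi/3)}{y}$\leq$\CAIsing{(\pi/3)}{y}, and \NonzeroZivIsing{y}$\leq$\ZivIsing{y}.
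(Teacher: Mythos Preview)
Your proposal is correct and follows exactly the same approach as the paper, which (implicitly, following the identical pattern of Corollaries~\ref{cor:norm1:hard} and~\ref{cor:twobad}) combines Lemma~\ref{lem:rational:even} with Observation~\ref{obs:stretchthicken}, Corollary~\ref{cor:used}, Lemma~\ref{lem:nozeroadd} and Corollary~\ref{cor:realnegy}. Your extra care about how the multiplicative factor from~\eqref{eq:shift} interacts with the argument-approximation problem is warranted and matches the discussion in Section~\ref{sec:shiftdef}.
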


To obtain obtain \NP-hardness results
for other values of $y$, we start with the  well-known \NP-hard problem \maxcut.

\prob{  \maxcut.}{ A (multi)graph $G$ and a positive integer $b$.}
{  Is there a cut of size at least $b$.}

\begin{lemma} Suppose $\cst>1$.
  Let $y$ be an algebraic complex number such that $|y|<1$ and $y\neq 0$.
  Then  
  \CNonzeroNIsing{\cst}{y} is \NP-hard and so is \NonzeroZivIsing{y}.
  \label{lem:norm<1:nphard}
\end{lemma}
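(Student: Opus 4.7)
The plan is a reduction from \maxcut. Given a \maxcut\ instance $(G,b)$ with $n=|V(G)|$ and $m=|E(G)|$, let $c^*$ denote the maximum cut size of~$G$, and for each~$c$ let $N_c$ count the labelings $\sigma:V(G)\to\{0,1\}$ whose cut has exactly $c$ bichromatic edges. Note $N_{c^*}\geq 2$ because the complementary labeling $\bar\sigma$ has the same cut size as~$\sigma$. Consider the $k$-thickening $G^{(k)}$ of~$G$; by the standard identity $\ZIsing{G^{(k)}}{y}=\ZIsing{G}{y^k}$ together with $m(\sigma)=m-c(\sigma)$, one has
$$\ZIsing{G}{y^k}\;=\;y^{k(m-c^*)}\Bigl(N_{c^*}+\textstyle\sum_{c<c^*}N_c\,y^{k(c^*-c)}\Bigr),$$
and the tail sum has modulus at most $2^n|y|^k$ since $\sum_c N_c=2^n$ and $c^*-c\geq 1$.

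Next, choose $k$ polynomial in~$n$ (permissible because $|y|$ and $\cst$ are fixed with $|y|<1$) so that $|y|^k\leq 3/(\cst^2\cdot 2^{n+2})$. In particular $2^n|y|^k\leq 1/2$, so the bracketed factor above has modulus at least $N_{c^*}-1/2\geq 3/2$ and at most $2^n+1/2$. Setting $\alpha=|y|^{k(m-b)}$, I conclude: if the max cut of $G$ is $\geq b$, then $|y|^{k(m-c^*)}\geq \alpha$ and hence $|\ZIsing{G}{y^k}|\geq \tfrac{3}{2}\alpha$; if the max cut is $<b$, then $|y|^{k(m-c^*)}\leq \alpha|y|^k$ and hence $|\ZIsing{G}{y^k}|\leq 2^{n+1}\alpha|y|^k$. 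The choice of~$k$ makes the first bound strictly larger than $\cst^2$ times the second, so any $\cst$-approximation $\hat N$ returned by an oracle for \CNonzeroNIsing{\cst}{y} on input $G^{(k)}$ decides \maxcut\ by comparison with a threshold between the two bounds.

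In both cases above $|\ZIsing{G}{y^k}|>0$, so the relaxed oracle is forced to return a genuine $\cst$-approximation rather than an arbitrary value, and the reduction is valid. This establishes NP-hardness of \CNonzeroNIsing{\cst}{y}, and NP-hardness of \NonzeroZivIsing{y} then follows immediately from Lemma~\ref{lem:zivapprox}. The main obstacle is handling the unknown prefactor $N_{c^*}\in[2,2^n]$ together with the $\cst$-approximation slack while keeping $k$ polynomial in~$n$; the approach threads this needle because the $|y|^{-k}$-gap between the two cases, arranged to exceed $\cst^2\cdot 2^{n+2}/3$, comfortably dominates both sources of loss, and stays polynomial only because $|y|$ is a fixed algebraic constant strictly below~$1$.
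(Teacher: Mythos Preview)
Your argument is essentially the paper's: reduce from \maxcut\ via a $k$-thickening so that the leading term $N_{c^*}\,y^{k(m-c^*)}$ dominates, then use the resulting gap to decide. The only organisational difference is that the paper first proves hardness for one fixed factor and then invokes Lemma~\ref{lem:duplicate}, whereas you fold~$\cst$ into the choice of~$k$ directly.

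One small arithmetic slip: from $|y|^k\leq 3/(\cst^2\cdot 2^{n+2})$ you get only $2^n|y|^k\leq 3/(4\cst^2)$, which is $\leq 1/2$ only when $\cst^2\geq 3/2$; for $\cst$ close to~$1$ the tail bound becomes $3/4$, the lower bound on the bracketed factor drops to $5/4$, and then your chosen bound on $|y|^k$ is not quite small enough to separate $\tfrac{5}{4}\alpha/\cst$ from $\cst\cdot 2^{n+1}\alpha|y|^k$. The fix is trivial---tighten the bound to, say, $|y|^k\leq 1/(\cst^2\cdot 2^{n+2})$, or simply invoke Lemma~\ref{lem:duplicate} to reduce to a single convenient~$\cst$ as the paper does.
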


\begin{proof}
  We will reduce \maxcut\ to \CNonzeroNIsing{\cst}{y}.
  Given a graph $G$ and a constant $b$, 
  we want to decide whether $G$ has a cut of size at least $b$.
  We do a $k$-thickening on $G$, where $k$ is the least positive integer such that $2^m|y|^k<1/4$.
  Then the effective edge weight is $y_k=y^k$.
  Clearly $|y_k|=|y|^k<1$.
  
  Suppose the maximum cut of $G$ has size~$c$.
  Now rewrite \eqref{eqn:Ising} as
  \begin{align*}
	\ZIsing{G}{y_k}=\sum_{i=0}^{c} C_i y_k^{m-i},
  \end{align*}
  where $m$ is the number of edges in $G$ and $C_i$ is the number of configurations under which there are exactly $i$~bichromatic edges.
  Since the maximum cut of~$G$ has size~$c$ and $G$ has $m$~edges,  $\sum_{i=0}^{m-c} C_i=2^m$.
  Also, since $2^m |y_k| <1$, the $i=c$ term dominates the sum,
  so \ZIsing{G}{y_k}  is not equal to~$0$.
  
  If $c\geq b$, then our choice of $k$ together with the triangle inequality implies that
  \begin{align*}
	|\ZIsing{G}{y_k}|&\ =\ |C_c y_k^{m-c}+\sum_{i=0}^{c-1}C_i y_k^{m-i}|\ >\ C_c|y_k|^{m-c}-2^m|y_k|^{m-c+1}\\
	&\ >\ |y_k|^{m-c}|1-2^m|y|^k|\ >\ \tfrac34 |y_k|^{m-b}.
  \end{align*}
  Otherwise we have $c\leq b-1$ and
  \begin{align*}
	|\ZIsing{G}{y_k}|&\ =\ |\sum_{i=0}^{c} C_i y_k^{m-i}|\ <\ \sum_{i=0}^{c} C_i |y_k|^{m-i}\\
	&\ \leq\ 2^m |y_k|^{m-b+1}\ <\ \tfrac14 |y_k|^{m-b}
  \end{align*}
  again by the triangle inequality and $2^m|y_k|<1/4$.
  Therefore we could solve \maxcut\ in polynomial time using an oracle
  for \CNonzeroNIsing{1.1}{y_k}. By
  Observation~\ref{obs:stretchthicken} 
  it suffices to use an oracle for
  \CNonzeroNIsing{1.1}{y}. 
  By Lemma~\ref{lem:duplicate}, 
  an oracle for   \CNonzeroNIsing{\cst}{y} will do. Finally, 
    Lemma~\ref{lem:zivapprox} gives the result for \NonzeroZivIsing{y}.
\end{proof}

The other case, when the norm of~$y$ is larger than $1$, can be 
shown to be NP-hard by reduction from
  the previous case, unless the edge weight is real.

\begin{lemma} Suppose $\cst>1$.
 Let $y$ be an algebraic complex number such that   $|y|>1$ and $y\not\in\RR$.
  Then 
  \CNonzeroNIsing{\cst}{y} is NP-hard and so is \NonzeroZivIsing{y}.
  \label{lem:norm>1:nphard}
\end{lemma}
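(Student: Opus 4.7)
The plan is to reduce to Lemma~\ref{lem:norm<1:nphard} by pushing $y$ inside the unit disk using a sequence of stretchings. The key is the Möbius change of variables $x = (y+1)/(y-1)$ under which a $k$-stretch acts as $x \mapsto x^k$ (Observation~\ref{obs:stretchthicken}). A direct calculation shows $\operatorname{Re}(x) = (|y|^2-1)/|y-1|^2$, so $|y|>1$ corresponds to $\operatorname{Re}(x) > 0$ and $|y|<1$ corresponds to $\operatorname{Re}(x) < 0$. Thus the task boils down to finding a positive integer $k$ for which $x^k$ lies in the open left half plane.

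First I would dispose of the case $y \in \I\RR$ separately: then $|x| = 1$, and Lemma~\ref{lem:imaginary} already produces an algebraic real $y' \in (-1,0)$ implementable from $y$, whence Corollary~\ref{cor:realnegy} yields even $\numP$-hardness of \CNonzeroNIsing{\cst}{y} and \NonzeroZivIsing{y}, which is stronger than required.

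In the remaining case, $y$ has both a nonzero real part and a nonzero imaginary part, so $\arg(x) \in (-\pi/2,\pi/2) \setminus \{0\}$ and also $|x| \neq 1$. Lemma~\ref{lem:angle} applied at $\theta = \arg(x)$ (which avoids $\{0, \pi/2, \pi, 3\pi/2\}$) supplies a positive integer $k$ with $k\arg(x) \in (\pi/2, \pi) \cup (\pi, 3\pi/2)$ modulo $2\pi$, so $\operatorname{Re}(x^k) < 0$. The $k$-stretched weight $y'$ is algebraic and satisfies $|y'| < 1$; moreover $|x|\neq 1$ implies $|x^k|\neq 1$, so $x^k \neq -1$ and hence $y' \neq 0$. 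Lemma~\ref{lem:norm<1:nphard} gives NP-hardness of \CNonzeroNIsing{\cst}{y'} and \NonzeroZivIsing{y'}, Observation~\ref{obs:stretchthicken} combined with Lemma~\ref{lem:duplicate} transfers this back to $y$, and Lemma~\ref{lem:zivapprox} transfers the norm result to \NonzeroZivIsing{y}.

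The main obstacle is guaranteeing $y' \neq 0$, i.e.\ ensuring the chosen stretch exponent $k$ satisfies $x^k \neq -1$: this is automatic in the non-imaginary case precisely because $|x| \neq 1$, but it fails exactly when $y$ is purely imaginary, which is why that case must be peeled off and handled through Lemma~\ref{lem:imaginary} rather than through the generic stretch reduction.
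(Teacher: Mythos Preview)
Your argument is correct and mirrors the paper's: $k$-stretch so that $\arg(x^k)$ lands in the open left half-plane, then invoke Lemma~\ref{lem:norm<1:nphard}. You are right to verify $y'\neq 0$ (the paper does not make this explicit), but your case split on purely imaginary $y$ is unnecessary: Lemma~\ref{lem:angle} already places $k\theta$ in $(\pi/2,\pi)\cup(\pi,3\pi/2)$ modulo $2\pi$, so $x^k$ is never a negative real and in particular $x^k\neq -1$, giving $y'\neq 0$ uniformly even when $|x|=1$.
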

\begin{proof}
  We will prove that there exists a positive integer $k$ such that the effective weight $y_k$ of a $k$-stretch satisfies $|y_k|<1$.
  Then we are done by Lemma~\ref{lem:norm<1:nphard}.

  Recall that $y_k=\frac{x^k+1}{x^k-1}$ where $x=\frac{y+1}{y-1}$.
  Clearly $|y_k|<1$ if and only if $|x^k+1|<|x^k-1|$.
  The latter is equivalent to $\arg(x^k)=k\arg(x)\in(\pi/2,3\pi/2)$
  (plus some integer multiple of $2\pi$).
  Let $\theta=\arg(x)\in[0,2\pi)$.
  The fact that $|y|>1$  implies that $\theta\in[0,\pi/2)\cup(3\pi/2,2\pi)$.
  If $\theta=0$, then $y\in\RR$, which is a contradiction.
  Therefore $\theta\in(0,\pi/2)\cup(3\pi/2,2\pi)$.
  By Lemma~\ref{lem:angle}, there is a positive integer~$k$ and 
  and integer~$l$ such that $k\theta+2\pi l\in(\pi/2,\pi)\cup(\pi,3\pi/2)\subset(\pi/2,3\pi/2)$.
  This is exactly what we need.
  Moreover, $k$ does not depend on the input $G$.
  This finishes our proof.
\end{proof}
 
\subsection{Proof of Theorems~\ref{thm:main} and \ref{thm:relaxed}}

Theorems~\ref{thm:main} and~\ref{thm:relaxed} follow
from the following  combined theorem. The hardness result in Item~\ref{realthree} of Theorem~\ref{thm:main}
(and its counterpart in Theorem~\ref{thm:relaxed})
follows from Item~\ref{cNPitem} of the  combined theorem.

\begin{theorem} 
Let $y=r e^{i  \theta}$ be an algebraic complex number  with $\theta\in [0,2\pi)$.
Suppose $\cst>1$.
\begin{enumerate}
\item \label{citemexact} If $y=0$ or if
$r=1$ and $\theta \in \{0,\frac{\pi}{2},\pi,\frac{3\pi}{2}\}$
then \CNIsing{\cst}{ y }, \CAIsing{(\pi/3)}{ y} and \ZivIsing{y} 
are in \FP.
  
\item \label{crealtwo} If $y>1$ is a real number then \CNIsing{\cst}{ y} 
and \ZivIsing{y} are
in \RP\ 
and  \CAIsing{(\pi/3)}{ y} is in \FP.

\item \label{crealthree} If $y$ is a real number in $(0,1)$ then   \CAIsing{(\pi/3)}{ y} is in \FP.

\item \label{crealfour} If $y<-1$ is a real number then \CNonzeroNIsing{\cst}{ y} is equivalent in complexity to
the problem of approximately counting perfect matchings in graphs 
and \NonzeroZivIsing{y} is as hard.
However, 
\CAIsing{(\pi/3)}{ y} is in \FP. 

\item \label {crealfive} If $y$ is a real number in $(-1,0)$ then \CNonzeroNIsing{ \cst}{ y },
\CNonzeroAIsing{ (\pi/3)}{y}   
and \NonzeroZivIsing{y}
are $\numP$-hard.

\item \label {ccircleone} If $r=1$ and $\theta\not\in\{0,\frac{\pi}{2},\pi,\frac{3\pi}{2}\}$
then \CNonzeroNIsing{\cst}{ y }, \CNonzeroAIsing{(\pi/3)}{ y} and \NonzeroZivIsing{ y} are \numP-hard. 
\item \label{cimaginaries} If $\theta \in \{ \frac{\pi}{2}, \frac{3\pi}{2}\}$
and $r\not\in\{-1,0,1\}$ then \CNonzeroNIsing{\cst}{ y},
\CNonzeroAIsing{(\pi/3)}{ y} and \NonzeroZivIsing{ y} are \numP-hard. 
\item \label{ccoprime} If $r>0$ and $\theta=\frac{\pp\pi}{2\qq}$, 
  where $\pp$ and $\qq$ are two co-prime positive integers and $\pp$ is odd then 
  \CNonzeroNIsing{\cst}{ y }, \CNonzeroAIsing{(\pi/3)}{ y} and \NonzeroZivIsing{ y} are \numP-hard.   
\item \label{cNPitem}
If $r<1$ and $y\neq 0$ then    \CNonzeroNIsing{\cst}{ y}  and  \NonzeroZivIsing{ y} are \NP-hard.
 \item \label{cNPlastitem}
 If $r>1$ and $\theta\not\in\{0,\pi\}$ then  \CNonzeroNIsing{\cst}{ y}  and  \NonzeroZivIsing{ y} are \NP-hard.
   \end{enumerate}

\end{theorem}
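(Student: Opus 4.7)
The plan is to recognise the combined theorem as a compilation of results already proved in Section~\ref{sec:hardIsing}, together with a short argument for the FP claims on the argument problem. I would proceed item by item.

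For the hardness items, each statement matches a previously established result: item~\ref{crealfive} combines Corollary~\ref{cor:realnegy}, Corollary~\ref{cor:used} and Lemma~\ref{lem:nozeroadd}; item~\ref{ccircleone} is Corollary~\ref{cor:norm1:hard}; item~\ref{cimaginaries} is Corollary~\ref{cor:twobad}; item~\ref{ccoprime} is Corollary~\ref{cor:onebad}; item~\ref{cNPitem} is Lemma~\ref{lem:norm<1:nphard}; and item~\ref{cNPlastitem} is Lemma~\ref{lem:norm>1:nphard}. The norm portions of items~\ref{crealtwo}--\ref{crealfour} are exactly Lemma~\ref{lem:norm:Ising:real}, and in every item the Ziv-metric claim reduces to the corresponding norm or argument problem via Lemma~\ref{lem:zivapprox}.

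For the FP claim on $\CAIsing{(\pi/3)}{y}$ in items~\ref{crealtwo}--\ref{crealfour}, the key observation is that $\ZIsing{G}{y}$ is real whenever $y$ is, so approximating its argument within $\pi/3$ reduces to computing $\mathrm{sgn}(\ZIsing{G}{y})$ and emitting $0$ or $\pi$. For $y>0$, every term in \eqref{eqn:Ising} is positive, so $\ZIsing{G}{y}>0$ and outputting $0$ suffices. For $y<-1$ I would invoke the classical high-temperature expansion
$$\ZIsing{G}{y} \;=\; 2^{|V|}\left(\tfrac{y+1}{2}\right)^{|E|}\sum_{\substack{A\subseteq E\\ d_A(v)\text{ even}\ \forall v}}\omega^{|A|},\qquad \omega=\tfrac{y-1}{y+1},$$
which is obtained by writing $y^{\mathbbm{1}[\sigma(u)=\sigma(v)]}=\tfrac{y+1}{2}(1+\omega(-1)^{\sigma(u)+\sigma(v)})$, expanding the product over edges, and summing over $\sigma$. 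When $y<-1$ one has $\omega>1$, so the even-subgraph sum is at least $1$ (the empty subgraph contributes $1$), while $((y+1)/2)^{|E|}$ carries sign $(-1)^{|E|}$; hence $\ZIsing{G}{y}\neq 0$ and $\mathrm{sgn}(\ZIsing{G}{y})=(-1)^{|E|}$, which is trivially computable from $G$. As a bonus, this nonzero conclusion lets one identify $\CNIsing{\cst}{y}$ with $\CNonzeroNIsing{\cst}{y}$ for $y<-1$, justifying the matching-equivalence statement in item~\ref{crealfour}.

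Item~\ref{citemexact} is dispatched by exact evaluation at each of the five special points: $y=0$ reduces to counting proper two-colourings of~$G$; $y=1$ gives $2^{|V|}$; at $y=-1$ the character-sum identity $\sum_\sigma(-1)^{m(\sigma)}=(-1)^{|E|}\prod_v(1+(-1)^{\deg v})$ gives the value explicitly; and at $y=\pm\I$ the partition function is a Tutte specialisation at $q=2$ that can be computed in polynomial time by linear algebra over $\mathbb{F}_2$. Having the exact value immediately yields FP algorithms for all three approximation problems. I do not anticipate a serious obstacle beyond bookkeeping: the only piece that is not a direct appeal to a previously stated result is the sign identity for $y<-1$, and that is a short classical computation.
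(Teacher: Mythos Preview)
Your proposal is correct and follows the paper's approach: the combined theorem is a compilation, and you correctly identify the relevant corollaries and lemmas for each hardness item, the citation of Lemma~\ref{lem:norm:Ising:real} for the real norm claims, and the trivial argument for \CAIsing{(\pi/3)}{y} when $y>0$.

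The one place you diverge is item~\ref{crealfour}. The paper simply cites \cite{GJSign} for the sign algorithm when $y<-1$ and handles the ``nonzero'' equivalence by remarking that one can decide in polynomial time whether a graph has a perfect matching. You instead derive the sign directly via the high-temperature (even-subgraph) expansion, obtaining $\mathrm{sgn}\,\ZIsing{G}{y}=(-1)^{|E|}$, and use the resulting nonvanishing to identify \CNonzeroNIsing{\cst}{y} with \CNIsing{\cst}{y}. Your route is self-contained and arguably cleaner; the paper's is shorter by delegation. Similarly, for item~\ref{citemexact} the paper just cites \cite{JVW} for the five special points, whereas you spell out each evaluation; both are fine.
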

\begin{proof}
Item~\ref{citemexact} is from \cite{JVW}. 
The randomised algorithm for \CNIsing{\cst}{ y} referred to in Item~\ref{crealtwo} is from~\cite{JS}. 
See also Lemma~\ref{lem:norm:Ising:real} and the surrounding
text for a discussion of algebraic numbers and accuracy parameters.
The same algorithm can be used for \ZivIsing{y} because 
$\ZIsing{G}{\ybeta}$  is real and positive
so an approximation $\apxnorm$ satisfing
\[\left(1-\tfrac{1}{\lgeps}\right) \apxnorm \leq  \ZIsing{G}{\ybeta,\lambda}
\leq \left(1+\tfrac{1}{\lgeps}\right) \apxnorm\]
also satisfies 
 $d(\apxnorm, \ZIsing{G}{\ybeta,\lambda}) \leq  \tfrac{1}{\lgeps}$. 
The deterministic algorithm referred to in Items~\ref{crealtwo} and~\ref{crealthree}
is trivial because the argument of a positive real number is~$0$. 
The approximation equivalence in Item \ref{crealfour} is from \cite{GJNPhard},
since one can decide in polynomial time the existence of perfect matchings to lift the non-zero restriction.
The hardness for \NonzeroZivIsing{y} follows from Lemma~\ref{lem:zivapprox}.
The deterministic sign algorithm in Item \ref{crealfour} is from \cite{GJSign}.
Item~\ref{crealfive} is from    
Lemma~\ref{lem:nozeroadd}  and Corollary~\ref{cor:used}
and Lemma~\ref{lem:zivapprox}.
Item~\ref{ccircleone} is from Corollary~\ref{cor:norm1:hard}.
Item~\ref{cimaginaries} is from Corollary~\ref{cor:twobad}.
Item~\ref{ccoprime} is from Corollary~\ref{cor:onebad}.
Item~\ref{cNPitem} is from Lemma~\ref{lem:norm<1:nphard}.
Finally,
item~\ref{cNPlastitem} is from Lemma~\ref{lem:norm>1:nphard}.
\end{proof}  
 
\section{Quantum circuits and counting complexity}
In this section we explain the connection between quantum computation and complex weighted Ising models.
We begin with some basic notions about quantum circuits.
We view qubits $\ket{0}$ and $\ket{1}$ as column vectors $\BinVec{1}{0}$ and $\BinVec{0}{1}$.
Similarly $\bra{0}$ and $\bra{1}$ are row vectors $(1,0)$ and $(0,1)$.
For $\mathbf{x}\in\{0,1\}^n$, let $\ket{\mathbf{x}}$ denote the tensor product $\otimes_{\ii=1}^{n}\ket{x_\ii}$ and $\bra{\mathbf{x}}$ is similar.

Suppose $C$ is a quantum circuit on $n$ qubits and consists of $m$ quantum gates $U_1,\dots,U_m$ sequentially.
A quantum gate is a function taking $k$ input and $k$ output variables and returning a value in $\mathbb{C}$.
Such a gate is called $k$-local and has a natural $2^k$ by $2^k$ square unitary matrix representation.
In a circuit we also need to specify on which qubits the gate acts upon.
To make the notation uniform we view unaffected qubits as simply copied 
and associate each quantum gate with the following $2^n$ by $2^n$ square unitary matrix.
Let $U$ be a quantum gate and $\mathbf{x,y}\in\{0,1\}^n$ two vectors specifying the input and output on all $n$ qubits.
Define the $2^n$ by $2^n$ matrix $M_U$ corresponding to gate $U$ as $M_{U;\mathbf{x},\mathbf{y}}=U(\mathbf{x},\mathbf{y})$.

For example, let $H$ be the Hadamard gate 
$\tfrac{1}{\strut\sqrt{2}}\BinMatrix{1}{1}{1}{-1}$ 
acting on the first qubit and suppose there are two qubits in total, illustrated as in Figure \ref{fig:h1}.
Then the matrix $M_H$ is 
$\tfrac{1}{\strut\sqrt{2}}\BinMatrix{1}{1}{1}{-1}\otimes \BinMatrix{1}{0}{0}{1}=
\frac{1}{\strut\sqrt{2}}
\left[
  \begin{smallmatrix}
	1 & 0 & 1 & 0 \\
	0 & 1 & 0 & 1 \\
	1 & 0 & -1 & 0 \\
	0 & 1 & 0 & -1 \\	
  \end{smallmatrix}
\right]$.
 
Using this notation, given an input $\mathbf{x}\in\{0,1\}^n$,
the output of the quantum circuit $C$ is a random variable $\mathbf{Y}$ subject to the distribution
\begin{align}
  \label{eqn:quantum:marginal}  
  \Pr_C(\mathbf{Y}=\mathbf{y})=\left|\bra{\mathbf{y}}\prod_{\ii=1}^{m}M_{U_{m+1-\ii}}\ket{\mathbf{x}}\right|^2,
\end{align}
where $\mathbf{y}\in\{0,1\}^n$.
It is not necessary that we measure all qubits in the output.
We may measure a subset $I$ of all $n$ qubits.
Let $\mathbf{y'}\in\{0,1\}^\ss$ where $|I|=\ss$.
Then the output is a random variable $\mathbf{Y'}$ subject to the distribution
\begin{align}
  \label{eqn:quantum:marginal:partial}  
  \Pr_{C;I}(\mathbf{Y'}=\mathbf{y'})=\sum_{\mathbf{z}\in\{0,1\}^{n} \rm{\ such\ that\ }\mathbf{z}|_{I}=\mathbf{y'}}\Pr_{C}(\mathbf{Y}=\mathbf{z}).
\end{align}

Alternatively, we may treat such marginal probability in the counting perspective, as a partition function in the ``sum of product'' fashion.
First let us consider composing two quantum gates, say $U_1$ and $U_2$.
Let the input variables of~$U_1$ be $x_1,\ldots,x_n$.
Let $z_1,\ldots,z_n$ be the variables on the wires between~$U_1$ and~$U_2$.
Finally, let $y_1,\ldots,y_n$ be the outputs of~$U_2$.
We use $\sigma(\mathbf{x})$ to denote an assignment of values in $\{0,1\}$ to the variables  $x_1,\ldots,x_n$.
We use $\sigma(\mathbf{y})$ and $\sigma(\mathbf{z})$ similarly.
Then the composition $U$ of $U_1$ followed by $U_2$ is given by 
\begin{align}
  U(\mathbf{x},\mathbf{y})=\sum_{\sigma(\mathbf{z})} U_1(\mathbf{x},\sigma(\mathbf{z}))U_2(\sigma(\mathbf{z}),\mathbf{y}).
  \label{eqn:gate:multiplication}
\end{align}
Figure \ref{fig:u1u2} illustrates the composition of gate $U_1$ acting upon qubits $2,3,4$ followed by $U_2$ acting upon $1,2$.
In the matrix notation, it is easy to see that $M_{U}=M_{U_1}M_{U_2}$.

\begin{figure}[t]
  \centering
  \begin{minipage}{.35\textwidth}
    \begin{center}
      \includegraphics{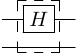}
    \end{center}
	\caption{Gate $H$ applying only on the first qubit.}
    \label{fig:h1}
  \end{minipage}%
  \hspace{0.03\textwidth}
  \begin{minipage}{.6\textwidth}
    \begin{center}
      \includegraphics{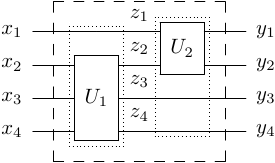}
    \end{center}
    \caption{Two quantum gates $U_1$ and $U_2$ composed together.}
    \label{fig:u1u2}
  \end{minipage}
\end{figure}

We now associate an intermediate variable $z_{\ii,\jj}$ to each edge   on qubit $\jj$ between gate $U_{\ii}$ and $U_{\ii+1}$ for all $2\leq \ii \leq m-1$ and $1\leq \jj\leq n$.
Denote by $\mathbf{z_\ii}$ the vector $\{z_{\ii,\jj}\mid 1\leq \jj\leq n\}$ and $\mathbf{z}=\cup_{\ii=2}^{m-1} \mathbf{z_\ii}$.
As the initial input and output of a quantum circuit are column vectors and row vectors respectively,
they may be treated as function/gates with no output variables or no input variables.
In particular, on the product input state $\ket{\mathbf{x}}$ input variables are set to $\{x_\jj\}$ where $\mathbf{x}\in\{0,1\}^n$.
Using \eqref{eqn:gate:multiplication} recursively we can rewrite \eqref{eqn:quantum:marginal} as follows:
\begin{align}
  \Pr_C(\mathbf{Y}=\mathbf{y})=\left|
  \sum_{\sigma:\mathbf{z}\rightarrow\{0,1\} }
  U_1(\mathbf{x},\sigma(\mathbf{z_1}))
  U_m(\sigma(\mathbf{z_{m-1}}),\mathbf{y})
  \prod_{\ii=2}^{m-1} U_\ii(\sigma(\mathbf{z_{\ii-1}}),\sigma(\mathbf{z_\ii}))
  \right|^2.
  \label{eqn:marginal:partitionfunction}
\end{align}

To simulate classically a quantum circuit, one can either 
(approximately) compute the probability $\Pr_C(\mathbf{Y}=\mathbf{y})$ 
--- this is called ``strong simulation'' ---
or one can sample from a distribution that is sufficiently close
to the one   given by \eqref{eqn:quantum:marginal} or \eqref{eqn:marginal:partitionfunction}.
This is called ``weak simulation''

\subsection{\IQP\ and the Ising partition function}
\label{sec:IQP}

\IQP, which stands for ``instantaneous quantum polynomial time'', is 
characterised by a restricted class of quantum circuits introduced by Shepherd and Bremner \cite{SB09}.
Bremner et al.~\cite{BJS11} showed
  that if \IQP\ can be simulated classically in the sense of ``weak simulation'' with multiplicative error,
then the polynomial hierarchy collapses to the third level.
 Fujii and Morimae \cite{FM} showed that the marginal probabilities of possible outcomes
of \IQP\ circuits correspond to partition functions of Ising models with complex edge weights.

The key property of \IQP\ is that all gates are diagonal in the $\ket{0}\pm\ket{1}$ basis.
Therefore all gates are commutable.
In other words, there is no temporal structure and hence it is called ``instantaneous''.
Let $H$ be the Hadamard gate $\tfrac{1}{\sqrt{2}}\BinMatrix{1}{1}{1}{-1}$.
 If a gate $U$ is diagonal in the $\ket{0}\pm\ket{1}$ basis,
there exists a diagonal matrix $D$ such that $M_U= H^{\otimes n} D H^{\otimes n}$.
Moreover $H$ is its own inverse; That is, $HH=I_2$.
Any two $H$'s between each pair of gates cancel.
This leads to an alternative view of \IQP\ circuit in which
each qubit line starts and ends with an $H$ gate and all gates in between are diagonal.

\begin{definition}
  An \IQP\ circuit on $n$ qubit lines is a quantum circuit with the following structure:
  each qubit line starts and ends with an $H$ gate, and all other gates are diagonal.
\end{definition}

We will focus particularly on $1,2$-local \IQP, which means that every intermediate gate acts on $1$ or $2$ qubits.
It was shown that a classical weak simulation of $1,2$-local \IQP\ with multiplicative error implies the polynomial hierarchy collapse to the third level \cite{BJS11}.
Let $Z=\BinMatrix{1}{0}{0}{-1}$.
The hardness of simulation holds even if we restrict gates to the phase gate $e^{\I(\pi/8)Z}=\BinMatrix{e^{\I\pi/8}}{0}{0}{e^{-\I\pi/8}}$ and 
the controlled $Z$-gate $CZ=\left[
  \begin{smallmatrix}
	1 & 0 & 0 & 0 \\
	0 & 1 & 0 & 0 \\
	0 & 0 & 1 & 0 \\
	0 & 0 & 0 & -1 \\	
  \end{smallmatrix}
\right]$ other than $H$ gates on two ends of each line.
We will show that this class of \IQP\ circuits corresponds to Ising models with complex 
edge interactions and that therefore the strong simulation of these circuits is \numP-hard,
even allowing an error of any factor $\cst>1$.

To show the relationship between these circuits and Ising partition functions, it is convenient to use another set of gates.
Let $\Pgate{\theta}=e^{\I\theta Z}=\BinMatrix{e^{\I\theta}}{0}{0}{e^{-\I\theta}}$
and $\ZZgate{\theta}=e^{\I\theta Z\otimes Z}=\left[
  \begin{smallmatrix}
	e^{\I\theta} & 0 & 0 & 0 \\
	0 & e^{-\I\theta} & 0 & 0 \\
	0 & 0 & e^{-\I\theta} & 0 \\
	0 & 0 & 0 & e^{\I\theta} \\	
  \end{smallmatrix}
\right]$.
Note   from  \eqref{eqn:quantum:marginal}
that we may multiply a gate by any norm~$1$ constant without affecting the outcome  of the gate.
By multiplying by $e^{-\I\pi/4}$, we may decompose $CZ$ as:
\begin{align}
  \label{eqn:CZ}
  e^{-\I\pi/4}
  \left[
  \begin{smallmatrix}
	1 & 0 & 0 & 0 \\
	0 & 1 & 0 & 0 \\
	0 & 0 & 1 & 0 \\
	0 & 0 & 0 & -1 \\	
  \end{smallmatrix}
  \right]
  &=
  \left[
  \begin{smallmatrix}
	e^{\I\pi/8} & 0 & 0 & 0 \\
	0 & e^{-\I\pi/8} & 0 & 0 \\
	0 & 0 & e^{-\I\pi/8} & 0 \\
	0 & 0 & 0 & e^{\I\pi/8} \\	
  \end{smallmatrix}
  \right]^{2}
  \left[
  \begin{smallmatrix}
	e^{\I\pi/8} & 0 & 0 & 0 \\
	0 & e^{-\I\pi/8} & 0 & 0 \\
	0 & 0 & e^{\I\pi/8} & 0 \\
	0 & 0 & 0 & e^{-\I\pi/8} \\	
  \end{smallmatrix}
  \right]^{14}
  \left[
  \begin{smallmatrix}
	e^{\I\pi/8} & 0 & 0 & 0 \\
	0 & e^{\I\pi/8} & 0 & 0 \\
	0 & 0 & e^{-\I\pi/8} & 0 \\
	0 & 0 & 0 & e^{-\I\pi/8} \\	
  \end{smallmatrix}
  \right]^{14}\nonumber \\
  &=\left(\ZZgate{\pi/8}\right)^{2}\left(\Pgate{\pi/8}\otimes I_2\right)^{14}\left(I_2 \otimes \Pgate{\pi/8}\right)^{14}.
\end{align}
Hence we can replace every $CZ$ gate on qubits $\ii,\jj$ by $2$ copies of $\ZZgate{\pi/8}$ on $\ii,\jj$, 14 copies of $\Pgate{\pi/8}$ on qubit $\ii$, 
and 14 $\Pgate{\pi/8}$ on qubit $\jj$.
It is easy to see that $\ZZgate{\pi/8}$ can be replaced by $CZ$ and $\Pgate{\pi/8}$ as well.
We may therefore assume every gate is either $\Pgate{\pi/8}$ on $1$ qubit 
or $\ZZgate{\pi/8}$ on $2$ qubits without changing the computational power of the circuit.
In general we give the following definition.

\begin{definition}\label{def:IQPonetwo}
  An \IQPonetwo{\theta}\ circuit on $n$ qubit lines is a quantum circuit with the following structure:
  each qubit line starts and ends with an $H$ gate, 
  and every other gate is either $\Pgate{\theta}$ on $1$ qubit or $\ZZgate{\theta}$ on $2$ qubits.
  We assume the input state is always $\ket{0^n}$.
\end{definition}

An example \IQPonetwo{\theta}\ circuit is given in Figure~\ref{fig:IQP}.

The relationship between \IQPonetwo{\theta} circuits and Ising models was first observed by Fujii and Morimae \cite{FM}.
These connections will be shown next.
For completeness we include our own proofs, which  have a more combinatorial flavour than the original ones by Fujii and Morimae \cite{FM}.
We introduce the following non-uniform Ising model
which has been studied previously. See, for example~\cite{Sokal}.
Let $G=(V,E)$ be a (multi)graph.
The edge interaction is specified by a function $\varphi:E\rightarrow\mathbb{C}$ and
the external field is specified by a function $\tau:V\rightarrow\mathbb{C}$.
The partition function is defined as
\begin{align}
  \label{eqn:NonuniformIsing}
  \ZIsing{G}{\varphi,\tau}=\sum_{\sigma:V\rightarrow\{0,1\}}\prod_{e=(v_\ii,v_\jj)\in E}\varphi(e)^{\delta(\sigma(v_\ii),\sigma(v_\jj))}\prod_{v\in V}\tau(v)^{\sigma(v)},
\end{align}
where $\delta(x,y)=1$ if $x=y$ and $\delta(x,y)=0$ if $x\neq y$.
We write $\ZIsing{G}{\ybeta,\tau}$ when $\varphi(e)=\ybeta$ is a constant function and similarly $\ZIsing{G}{\varphi,\lambda}$ when $\tau(v)=\lambda$.
Notice that this notation is consistent with \eqref{eqn:Ising}.

We will show that the following problem is related
to \SSimIQP{\cst}{\theta } when $e^{\I\theta}$ is a root of unity.
\prob{\IQPIsing{\cst}{\theta}.}
{A (multi)graph $G$ with
 an edge interaction function $\varphi(-)$ taking value $e^{\I\theta}$ or $e^{-\I\theta}$, 
 and
 an external field function~$\tau$ so that for each vertex~$v$
 there are non-negative integers $a_v$ and~$b_v$ so that
 $\tau(v)=(-1)^{a_v}\left( e^{\I\theta} \right)^{b_v}$ or
 $\tau(v)=(-1)^{a_v}\left( e^{-\I\theta} \right)^{b_v}$. }
{A rational number $p$ such that  $   |\ZIsing{G}{\varphi,\tau}|/\cst \leq p \leq  \cst|\ZIsing{G}{\varphi,\tau}|$.
}

\begin{figure}[t]
  \centering
  \begin{minipage}{.45\textwidth}
    \centering
    \includegraphics[width=.9\linewidth]{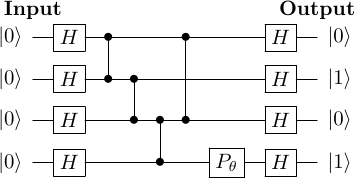}
	\caption{An \IQPonetwo{\theta} circuit. We use two solid dots to denote $R_{\theta}$ gate as it is diagonal and symmetric.}
	\label{fig:IQP}
  \end{minipage}%
  \hspace{0.03\textwidth}
  \begin{minipage}{.45\textwidth}
  \centering
  \includegraphics[width=.9\linewidth]{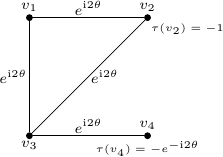}
  \caption{The equivalent Ising instance to the circuit in Figure~\ref{fig:IQP}.}
  \label{fig:IQP:Ising}
  \end{minipage}
\end{figure}

We will first consider inputs to \IQPonetwo{\theta} where $I=[n]$ so all qubits are measured.
Given an \IQPonetwo{\theta}\ circuit $C$ on $n$ qubits and a string $\mathbf{y}\in\{0,1\}^n$,
we can construct a non-uniform Ising instance $G_{C}$ with edge interaction $e^{\I2\theta}$ and external field $\tau_{C;\mathbf{y}}$ such that
\begin{align}
  \label{eqn:IQP:Ising}
  \Pr_C(\mathbf{Y}=\mathbf{y})=2^{-2n}\left|\ZIsing{G_{C}}{e^{\I2\theta},\tau_{C;\mathbf{y}}}\right|^2.
\end{align}
The construction is as follows.
The vertex set $\{v_\ii\}$ contains $n$ vertices and each vertex corresponds to a qubit.
For each gate $\ZZgate{\theta}$ on two qubits $\ii,\jj$, add an edge $(\ii,\jj)$ in $G_{C}$.
For qubit $\ii$, let $p_\ii$ be the number of gates $\Pgate{\theta}$ acting on qubit $\ii$ in $C$.
Let $\tau_{C;\mathbf{y}}(v_\ii)=e^{-\I(2p_{\ii}\theta)}(-1)^{y_\ii}$.
An example of the construction is given in Figure~\ref{fig:IQP:Ising}.

\begin{lemma}
  \label{lem:IQP:Ising}
  Let $C$ be an \IQPonetwo{\theta}\ circuit on $n$ qubits and $\mathbf{y}\in\{0,1\}^n$ be the output.
  Let $G_C$ and $\tau_{C;\mathbf{y}}$ be constructed as above.
  Then \eqref{eqn:IQP:Ising} holds.
\end{lemma}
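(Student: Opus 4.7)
The plan is to compute the amplitude $\bra{\mathbf{y}}H^{\otimes n} D H^{\otimes n}\ket{0^n}$ directly, where $D$ is the product of all intermediate (diagonal) gates, and recognise the resulting sum as the Ising partition function up to a norm-one prefactor.

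First I would exploit the crucial structural feature of \IQPonetwo{\theta}: since $\Pgate{\theta}$ and $\ZZgate{\theta}$ are both diagonal in the computational basis, their product $D$ is diagonal. Writing $H^{\otimes n}\ket{0^n} = 2^{-n/2}\sum_{\mathbf{x}\in\{0,1\}^n}\ket{\mathbf{x}}$ and $\bra{\mathbf{y}}H^{\otimes n} = 2^{-n/2}\sum_{\mathbf{x}}(-1)^{\mathbf{x}\cdot\mathbf{y}}\bra{\mathbf{x}}$, one gets
\[
\bra{\mathbf{y}}H^{\otimes n}DH^{\otimes n}\ket{0^n}
= \frac{1}{2^n}\sum_{\mathbf{x}\in\{0,1\}^n}(-1)^{\mathbf{x}\cdot\mathbf{y}}\,d(\mathbf{x}),
\]
where $d(\mathbf{x})=\bra{\mathbf{x}}D\ket{\mathbf{x}}$ is the diagonal entry of $D$ on $\ket{\mathbf{x}}$.

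Next I would compute $d(\mathbf{x})$ gate by gate. A $\Pgate{\theta}$ on qubit $i$ contributes the factor $e^{\I\theta(1-2x_i)}$, and a $\ZZgate{\theta}$ on qubits $i,j$ contributes $e^{\I\theta(2\delta(x_i,x_j)-1)}$, where $\delta(a,b)$ is the Kronecker delta. Let $p_i$ be the number of $\Pgate{\theta}$ gates on line $i$ and let $E(G_C)$ be the multiset of edges produced by the $\ZZgate{\theta}$ gates. Then
\[
d(\mathbf{x}) = e^{\I\theta\left(\sum_i p_i - |E(G_C)|\right)}\,\prod_{i=1}^{n} e^{-2\I\theta p_i x_i}\prod_{(i,j)\in E(G_C)}e^{2\I\theta\,\delta(x_i,x_j)}.
\]
Pulling the global phase out of the sum and identifying $\sigma(v_i)=x_i$, the sum becomes
\[
\sum_{\sigma:V\to\{0,1\}}\prod_{i}\bigl((-1)^{y_i}e^{-2\I\theta p_i}\bigr)^{\sigma(v_i)}\prod_{(i,j)\in E(G_C)}\bigl(e^{2\I\theta}\bigr)^{\delta(\sigma(v_i),\sigma(v_j))},
\]
which is exactly $\ZIsing{G_C}{e^{\I 2\theta},\tau_{C;\mathbf{y}}}$ by the definition in \eqref{eqn:NonuniformIsing}, using the prescribed vertex weight $\tau_{C;\mathbf{y}}(v_i)=(-1)^{y_i}e^{-2\I\theta p_i}$.

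Finally, since the global prefactor $e^{\I\theta(\sum_i p_i - |E(G_C)|)}$ has unit modulus, squaring the amplitude yields
\[
\Pr_C(\mathbf{Y}=\mathbf{y})
= \left|\frac{1}{2^n}\,e^{\I\theta(\sum_i p_i - |E(G_C)|)}\,\ZIsing{G_C}{e^{\I 2\theta},\tau_{C;\mathbf{y}}}\right|^2
= 2^{-2n}\bigl|\ZIsing{G_C}{e^{\I 2\theta},\tau_{C;\mathbf{y}}}\bigr|^2,
\]
which is \eqref{eqn:IQP:Ising}. There is no real obstacle here — everything is a direct expansion; the only point requiring care is the bookkeeping on the $1-2x_i$ and $2\delta-1$ exponents so that the shift $p_i\mapsto p_i$ on the vertex and the doubling $\theta\mapsto 2\theta$ on the edge come out in the right places, and recognising that the overall phase disappears upon taking $|\cdot|^2$.
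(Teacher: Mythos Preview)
Your proof is correct and takes essentially the same approach as the paper: expand the Hadamard layers, use the diagonality of the intermediate gates to reduce to a single computational-basis sum over $\mathbf{x}\in\{0,1\}^n$, read off the $\Pgate{\theta}$ and $\ZZgate{\theta}$ contributions as vertex and edge Ising weights, and discard the unit-modulus global phase. The only cosmetic difference is that the paper starts from the general expansion \eqref{eqn:marginal:partitionfunction} with intermediate variables $z_{i,j}$ and then argues that diagonality forces $z_{1,j}=\cdots=z_{m-1,j}$, whereas you collapse the diagonal gates into a single $D$ up front; the resulting sum and the prefactor $e^{\I\theta(\sum_i p_i - |E(G_C)|)}=e^{\I(m_1-m_2)\theta}$ match exactly.
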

\begin{proof}
  Suppose $C$ is composed sequentially by  $U_1=H^{\otimes n}$, $U_2$, \dots, $U_{m-1}$, $U_{m}=H^{\otimes n}$,
  where $U_\ii$ is either $\Pgate{\theta}$ on $1$ qubit or $\ZZgate{\theta}$ on $2$ qubits for $2\leq \ii\leq m-1$.
  Notice that $U_1(\mathbf{x},\mathbf{x'})=U_m(\mathbf{x},\mathbf{x'})=2^{-n/2}\prod_{\jj=1}^n(-1)^{x_\jj x_\jj'}$.
  As the input $\ket{\mathbf{x}}=\ket{0^n}$,
  we can rewrite \eqref{eqn:marginal:partitionfunction}:
  \begin{align}
	\label{eqn:marginal:rewrite}
    \Pr_C(\mathbf{Y}=\mathbf{y})
	&=\left|
    \sum_{\sigma:\mathbf{z}\rightarrow\{0,1\} }
    U_1(\mathbf{0},\sigma(\mathbf{z_1}))
    U_m(\sigma(\mathbf{z_{m-1}}),\mathbf{y})
    \prod_{\ii=2}^{m-1} U_\ii(\sigma(\mathbf{z_{\ii-1}}),\sigma(\mathbf{z_\ii}))
    \right|^2\nonumber\\
	&=\left|2^{-n}
    \sum_{\sigma:\mathbf{z}\rightarrow\{0,1\} }
	\prod_{\jj=1}^n(-1)^{0\cdot\sigma(z_{1,\jj})}
	\prod_{\jj=1}^n(-1)^{y_\jj\sigma(z_{m-1,\jj})}
    \prod_{\ii=2}^{m-1} U_\ii(\sigma(\mathbf{z_{\ii-1}}),\sigma(\mathbf{z_\ii}))
    \right|^2\nonumber\\
	&=2^{-2n}\left|
    \sum_{\sigma:\mathbf{z}\rightarrow\{0,1\} }
	\prod_{\jj=1}^n(-1)^{y_\jj\sigma(z_{m-1,\jj})}
    \prod_{\ii=2}^{m-1} U_\ii(\sigma(\mathbf{z_{\ii-1}}),\sigma(\mathbf{z_\ii}))
    \right|^2
  \end{align}
  Let $Q$ denote the quantity inside the norm, that is, 
  \[Q:=\sum_{\sigma:\mathbf{z}\rightarrow\{0,1\} }
	\prod_{\jj=1}^n(-1)^{y_\jj\sigma(z_{m-1,\jj})}
  \prod_{\ii=2}^{m-1} U_\ii(\sigma(\mathbf{z_{\ii-1}}),\sigma(\mathbf{z_\ii})).\]
  Since $U_\ii$'s are diagonal for $2\leq \ii\leq m-1$,
  any configuration $\sigma$ with a non-zero contribution to $Q$ must satisfy that for any $\jj$, $\sigma(z_{1,\jj})=\sigma(z_{2,\jj})=\dots=\sigma(z_{m-1,\jj})$.
  Therefore we may replace $z_{\ii,\jj}$ by a single variable $v_\jj$ for all $1\leq \ii\leq m-1$ so that
  \[Q=\sum_{\sigma:V\rightarrow\{0,1\} }
	\prod_{\jj=1}^n(-1)^{y_\jj\sigma(v_{\jj})}
  \prod_{\ii=2}^{m-1} U_\ii(\sigma(V),\sigma(V)).\]
  Moreover, if $U_\ii$ is the gate $\Pgate{\theta}$ on qubit $\jj$, then $U_\ii(\sigma(V),\sigma(V))=e^{\I\theta}\left(e^{-\I2\theta}\right)^{\sigma(v_\jj)}$.
  If $U_\ii$ is the gate $\ZZgate{\theta}$ on qubits $\jj_1$ and $\jj_2$, then 
  $U_\ii(\sigma(V),\sigma(V))=e^{-\I\theta} \left( e^{\I2\theta} \right)^{\delta(\sigma(v_{\jj_1}),\sigma(v_{\jj_2}))}$,
  where $\delta(x,y)=1$ if $x=y$ and $\delta(x,y)=0$ if $x\neq y$.
  Recall that $p_\jj$ is the number of $\Pgate{\theta}$ gates on qubit $\jj$ and $\tau_{C;\mathbf{y}}(v_\jj)=e^{-\I(2p_\jj\theta)}(-1)^{y_\jj}$.
  Collecting all the contributions, we have
  \begin{align}
	\label{eqn:q:ising}
	Q&=e^{\I(m_1-m_2)\theta}
	\sum_{\sigma:V\rightarrow\{0,1\} }
	\left( e^{\I2\theta} \right)^{m(\sigma)}
	\prod_{\jj=1}^n(-1)^{y_\jj\sigma(v_{\jj})}\left(e^{-\I2\theta}\right)^{p_\jj\sigma(v_\jj)}\nonumber\\
	&=e^{\I(m_1-m_2)\theta}
	\sum_{\sigma:V\rightarrow\{0,1\} }
	\left( e^{\I2\theta} \right)^{m(\sigma)}	
	\prod_{\jj=1}^n\tau_{C;\mathbf{y}}(v_\jj)^{\sigma(v_{\jj})}\\
    &=e^{\I(m_1-m_2)\theta}
	\ZIsing{G_{C}}{e^{\I2\theta},\tau_{C;\mathbf{y}}},\nonumber
  \end{align}
  where $m_\ii$ is the number of $\ii$ qubit(s) gates for $\ii\in\{1,2\}$,
  and, from \eqref{eqn:Ising},
  $m(\sigma)$ is  the number of monochromatic edges under $\sigma$.
  We get \eqref{eqn:IQP:Ising} by substituting \eqref{eqn:q:ising} in \eqref{eqn:marginal:rewrite}.
\end{proof}

Similar results hold when some qubits are not measured.
To show it, we need the following fact. 
It can be viewed as an application of Parsevals's identity on 
the length-$2^n$  vector $\{C_{\mathbf{z}}\}$ indexed by $\mathbf{z}\in\{0,1\}^n$
over an orthonormal basis $\{e_{\mathbf{z}}\}$
where basis element $e_{\mathbf{z}}$ 
has value
$ 2^{-\frac{n}{2}}(-1)^{\mathbf{z}\cdot\mathbf{z}'}$
in position $\mathbf{z}'$.
We include a proof for completeness.

\begin{claim}
  \label{claim:normsum}
  Let $\{C_{\mathbf{z}}\}$ be $2^n$ complex numbers where $\mathbf{z}$ runs over $\{0,1\}^n$.
  Then we have
  \[\sum_{\mathbf{z'}\in\{0,1\}^n}\left|\sum_{\mathbf{z}\in\{0,1\}^n}C_{\mathbf{z}}(-1)^{\mathbf{z}\cdot\mathbf{z'}}\right|^2
  =2^{n}\sum_{\mathbf{z}\in\{0,1\}^n}\left|C_{\mathbf{z}}\right|^2.\]
\end{claim}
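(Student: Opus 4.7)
The plan is to expand the squared norm on the left-hand side and interchange the order of summation, reducing the identity to the standard orthogonality relation for characters of $(\mathbb{Z}/2\mathbb{Z})^n$. Writing $|w|^2 = w\overline{w}$, the left-hand side becomes
\[
\sum_{\mathbf{z}'} \Bigl(\sum_{\mathbf{z}} C_{\mathbf{z}}(-1)^{\mathbf{z}\cdot\mathbf{z}'}\Bigr)\overline{\Bigl(\sum_{\mathbf{w}} C_{\mathbf{w}}(-1)^{\mathbf{w}\cdot\mathbf{z}'}\Bigr)} = \sum_{\mathbf{z},\mathbf{w}} C_{\mathbf{z}}\overline{C_{\mathbf{w}}} \sum_{\mathbf{z}'\in\{0,1\}^n}(-1)^{(\mathbf{z}+\mathbf{w})\cdot\mathbf{z}'},
\]
where the exponent is taken in $\mathbb{Z}$ (the parity is all that matters since the base is $-1$).

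The key step is to evaluate the inner sum. Because the dot product splits coordinate-wise, the sum factorises as $\prod_{j=1}^{n}\sum_{z'_j\in\{0,1\}}(-1)^{(z_j+w_j)z'_j}$. Each factor equals $2$ if $z_j\equiv w_j\pmod 2$ (equivalently $z_j=w_j$, since $z_j,w_j\in\{0,1\}$) and $0$ otherwise. Hence the inner sum equals $2^n$ when $\mathbf{z}=\mathbf{w}$ and vanishes otherwise.

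Substituting this back collapses the double sum to the diagonal $\mathbf{z}=\mathbf{w}$, giving $\sum_{\mathbf{z}} C_{\mathbf{z}}\overline{C_{\mathbf{z}}}\cdot 2^n = 2^n \sum_{\mathbf{z}} |C_{\mathbf{z}}|^2$, which is the desired identity. There is no real obstacle here: the only point that needs a line of care is the justification that the inner character sum vanishes off the diagonal, which is the orthogonality of the Walsh–Hadamard characters and follows immediately from the coordinate-wise factorisation above.
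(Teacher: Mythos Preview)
Your proof is correct. It differs from the paper's argument: the paper proceeds by induction on $n$, splitting the outer sum over $\mathbf{z}'$ according to the last bit $z'_n$ and applying the parallelogram identity $|A+B|^2+|A-B|^2=2(|A|^2+|B|^2)$ to reduce to the $n-1$ case. Your approach instead expands $|w|^2=w\overline{w}$, swaps the order of summation, and invokes the orthogonality of the Walsh--Hadamard characters directly. Your route is the standard Parseval/Plancherel computation (indeed the paper remarks just before the claim that it is an instance of Parseval), and it is shorter and avoids induction; the paper's route is perhaps more self-contained in that it reduces everything to a single two-term identity without appealing to character orthogonality by name, though of course the coordinate-wise factorisation you use is exactly what underlies that induction.
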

\begin{proof}
  Notice that for two complex numbers $A$ and $B$, 
  \begin{align}
	\label{eqn:ABnorm}
	|A+B|^2+|A-B|^2&=\left(|A|^2+|B|^2-2|A||B|\cos\theta\right)+\left(|A|^2+|B|^2+2|A||B|\cos\theta\right)\nonumber\\
	&=2\left( |A|^2+|B|^2 \right)
  \end{align}
  where $\theta$ is the angle from $A$ to $B$.
  Hence we have
  \begin{align*}
	\sum_{\mathbf{z'}\in\{0,1\}^n}
	&\left|\sum_{\mathbf{z}\in\{0,1\}^n}C_{\mathbf{z}}(-1)^{\mathbf{z}\cdot\mathbf{z'}}\right|^2
	=\sum_{\substack{\mathbf{z'}\in\{0,1\}^n\\\rm{s.t.\ } z'_n=0}}
	\left|\sum_{\mathbf{z}\in\{0,1\}^n}C_{\mathbf{z}}(-1)^{\mathbf{z}\cdot\mathbf{z'}}\right|^2+
	\sum_{\substack{\mathbf{z'}\in\{0,1\}^n\\\rm{s.t.\ } z'_n=1}}
	\left|\sum_{\mathbf{z}\in\{0,1\}^n}C_{\mathbf{z}}(-1)^{\mathbf{z}\cdot\mathbf{z'}}\right|^2\\
	&=\sum_{\mathbf{y'}\in\{0,1\}^{n-1}}
	\left|\sum_{\mathbf{y}\in\{0,1\}^{n-1}}C_{\mathbf{y}0}(-1)^{\mathbf{y}\cdot\mathbf{y'}}
	+\sum_{\mathbf{y}\in\{0,1\}^{n-1}}C_{\mathbf{y}1}(-1)^{\mathbf{y}\cdot\mathbf{y'}}\right|^2+\\
	&\sum_{\mathbf{y'}\in\{0,1\}^{n-1}}
	\left|\sum_{\mathbf{y}\in\{0,1\}^{n-1}}C_{\mathbf{y}0}(-1)^{\mathbf{y}\cdot\mathbf{y'}}
	-\sum_{\mathbf{y}\in\{0,1\}^{n-1}}C_{\mathbf{y}1}(-1)^{\mathbf{y}\cdot\mathbf{y'}}\right|^2\\
	&=2\sum_{\mathbf{y'}\in\{0,1\}^{n-1}}
	\left(\left|\sum_{\mathbf{y}\in\{0,1\}^{n-1}}C_{\mathbf{y}0}(-1)^{\mathbf{y}\cdot\mathbf{y'}}\right|^2
	+\left|\sum_{\mathbf{y}\in\{0,1\}^{n-1}}C_{\mathbf{y}1}(-1)^{\mathbf{y}\cdot\mathbf{y'}}\right|^2\right),
  \end{align*}
  where in the last line we apply \eqref{eqn:ABnorm}.
  The claim holds by induction.
\end{proof}

We then have the following reduction.

\begin{lemma}
  \label{lem:ssimIQP:Ising}  
  Let $\cst>1$ and
  $\theta\in[0,2\pi)$. 
  Then  
  
  \centering{
  {\SSimIQP{\cst}{\theta }}  \Reduce{T}   {\IQPIsing{\cst^2}{2\theta }}.
  }
\end{lemma}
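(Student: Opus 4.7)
The plan is to extend Lemma~\ref{lem:IQP:Ising}, which handles the case $I=[n]$, to general subsets $I\subseteq[n]$ by absorbing the marginalisation over unmeasured qubits into the graph construction. Writing $S=[n]\setminus I$ and $\ell=|S|$, combining Lemma~\ref{lem:IQP:Ising} with \eqref{eqn:quantum:marginal:partial} gives
\[
\Pr_{C;I}(\mathbf{Y}=\mathbf{y})
\;=\;2^{-2n}\sum_{\mathbf{z_S}\in\{0,1\}^\ell}\bigl|\ZIsing{G_C}{e^{\I 2\theta},\tau_{C;(\mathbf{y},\mathbf{z_S})}}\bigr|^2.
\]
The $2^\ell$ terms on the right cannot be queried one at a time, so the goal is to repackage the whole sum as the norm of a single Ising partition function of the form required by \IQPIsing.

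To do so, I would construct a \emph{doubled} graph $H$ by taking two disjoint copies $G_C^{(1)}$ and $G_C^{(2)}$ of $G_C$ and identifying, for every $j\in S$, the two corresponding vertices $v_j^{(1)}$ and $v_j^{(2)}$ into a single vertex. Every edge of $G_C^{(1)}$ gets weight $e^{\I 2\theta}$, every edge of $G_C^{(2)}$ gets weight $e^{-\I 2\theta}$. For $j\in I$, set $\tau(v_j^{(1)})=e^{-\I 2p_j\theta}(-1)^{y_j}$ and $\tau(v_j^{(2)})=e^{\I 2p_j\theta}(-1)^{y_j}$; on each identified vertex coming from $j\in S$, set $\tau=1$. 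All edge weights and external fields lie in the classes $e^{\pm\I 2\theta}$ and $(-1)^{a_v}(e^{\pm\I 2\theta})^{b_v}$ required by \IQPIsing, so $H$ is a legitimate input.

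The key verification is that $\ZIsing{H}{\varphi,\tau}=2^{2n-\ell}\Pr_{C;I}(\mathbf{Y}=\mathbf{y})$. I would prove this by writing each $|Z|^2$ as $Z\bar Z$ summed over pairs $(\sigma,\sigma')\in\{0,1\}^{V(G_C)}\times\{0,1\}^{V(G_C)}$, interchanging the order of summation, and evaluating the inner sum over $\mathbf{z_S}$ one coordinate at a time. For each $j\in S$, the $z_j$-dependent factor collapses via
$\sum_{z_j\in\{0,1\}}(-1)^{z_j(\sigma(v_j)+\sigma'(v_j))}=2\,\delta(\sigma(v_j),\sigma'(v_j))$,
and on the surviving support $\sigma(v_j)=\sigma'(v_j)$ the conjugate phases $e^{-\I 2p_j\theta\sigma(v_j)}$ and $e^{\I 2p_j\theta\sigma'(v_j)}$ cancel exactly. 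The remaining sum, restricted to pairs $(\sigma,\sigma')$ that agree on $S$, is precisely the partition function of the doubled graph~$H$. This yields $\sum_{\mathbf{z_S}}|Z|^2=2^\ell\,\ZIsing{H}{\varphi,\tau}$ and hence the stated identity, which makes $\ZIsing{H}{\varphi,\tau}$ a non-negative real, so $|\ZIsing{H}{\varphi,\tau}|=\ZIsing{H}{\varphi,\tau}$.

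Combining these, a single call to an $\IQPIsing{\cst^2}{2\theta}$ oracle on $H$ returns $p$ with $|Z|/\cst^2\leq p\leq\cst^2|Z|$, and $p\cdot 2^{\ell-2n}$ is a $\cst^2$-approximation to $\Pr_{C;I}(\mathbf{Y}=\mathbf{y})$. The targeted $\cst$-approximation is then obtained by applying the same reduction to two disjoint copies of $C$ (whose joint marginal output probability is $\Pr_{C;I}(\mathbf{Y}=\mathbf{y})^2$) and taking a square root, exactly as in the powering step of Lemma~\ref{lem:duplicate}. The main obstacle is the combinatorial cancellation just sketched: one has to see that the phase factors $e^{\pm\I 2p_j\theta}$ and signs $(-1)^{z_j}$ on an unmeasured vertex $v_j$ interact so that summing over $z_j$ and identifying the two copies merge cleanly into a single vertex with trivial external field.
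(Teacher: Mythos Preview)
Your proposal is correct and follows essentially the same approach as the paper: the doubled graph $H$ you build is exactly the paper's $H_C$ (two conjugate copies of $G_C$ glued along the unmeasured vertices, with trivial field there), and your orthogonality computation $\sum_{z_j}(-1)^{z_j(\sigma(v_j)+\sigma'(v_j))}=2\,\delta(\sigma(v_j),\sigma'(v_j))$ is the same identity the paper packages as Parseval in Claim~\ref{claim:normsum}. One small point in your favour: you notice that a single oracle call with parameter~$\cst^2$ only gives a $\cst^2$-approximation to $\Pr_{C;I}(\mathbf{Y}=\mathbf{y})$ and explicitly repair this via powering, whereas the paper's ``the lemma follows from the above equation'' leaves that step implicit (relying on the Lemma~\ref{lem:duplicate}-style equivalence between fixed approximation factors).
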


\begin{proof}
If all qubits in the input to 
\SSimIQP{\cst}{\theta } are measured, then  the   result follows from Lemma\ \ref{lem:IQP:Ising}.
Otherwise, without loss of generality we assume the first $n-\ss$ qubits are measured.
Let  $C$, $I=[n-\ss]$ and $\mathbf{y'}\in\{0,1\}^{n-\ss}$ be the  
input to \SSimIQP{\cst}{\theta }.
We use \eqref{eqn:quantum:marginal:partial}, \eqref{eqn:marginal:rewrite}, and the first line of \eqref{eqn:q:ising}:
\begin{align}
  \label{eqn:q:part}
  \Pr_{C;I}(\mathbf{Y'}=\mathbf{y'})
    &=\sum_{\mathbf{z'}\in\{0,1\}^\ss}\Pr_{C}(\mathbf{Y}=\mathbf{y'}\mathbf{z'})\nonumber\\
    &=2^{-2n}\sum_{\mathbf{z'}\in\{0,1\}^\ss}\Bigg|
	\sum_{\sigma:V\rightarrow\{0,1\} }
	\left( e^{\I2\theta} \right)^{m(\sigma)}
	\left(\prod_{l=n-\ss+1}^{n}(-1)^{z'_{l-(n-\ss)}\sigma(v_{l})}\left(e^{-\I2\theta}\right)^{{p_l}\sigma(v_{l})}\right)\nonumber\\
	&\left(\prod_{\jj=1}^{n-\ss}(-1)^{y_\jj'\sigma(v_{\jj})}\left(e^{-\I2\theta}\right)^{p_\jj\sigma(v_\jj)}\right)\Bigg|^2\nonumber\\
	&=2^{-2n}\sum_{\mathbf{z'}\in\{0,1\}^\ss}\left|\sum_{\mathbf{z}\in\{0,1\}^\ss}Q_{\mathbf{z}}(-1)^{\mathbf{z}\cdot\mathbf{z'}}\right|^2,
\end{align}
where for $\mathbf{z}\in\{0,1\}^\ss$, $Q_{\mathbf{z}}$ is the contribution of assigning $z_{l-n+\ss}$ to $v_l$ without the possible $-1$ external field, that is,
\[Q_{\mathbf{z}}=	
  \prod_{l=n-\ss+1}^{n}\left(e^{-\I2\theta}\right)^{z_{l-n+\ss}p_l}
  \sum_{\substack{\sigma:V\rightarrow\{0,1\}\mathrm{\ such\ that\ }\\\mathrm{for\ }n-\ss+1\leq l\leq n, \sigma(v_l)=z_{l-n+\ss}}}
  \left( e^{\I2\theta} \right)^{m(\sigma)}
  \prod_{\jj=1}^{n-\ss}(-1)^{y_\jj'\sigma(v_{\jj})}\left(e^{-\I2\theta}\right)^{p_\jj\sigma(v_\jj)}.\]
Apply Claim\ \ref{claim:normsum} on \eqref{eqn:q:part}:
\begin{align}
  \label{eqn:q:partial}
  \Pr_{C;I}(\mathbf{Y'}=\mathbf{y'}) &=2^{-2n+\ss}\sum_{\mathbf{z}\in\{0,1\}^\ss}\left|Q_{\mathbf{z}}\right|^2.
\end{align}
Moreover we have
\begin{align*}
  \left|Q_{\mathbf{z}}\right|^2
  =\left| \sum_{\substack{\sigma:V\rightarrow\{0,1\}\mathrm{\ such\ that\ }\\\mathrm{for\ }n-\ss+1\leq l\leq n, \sigma(v_l)=z_{l-n+\ss}}}
  \left( e^{\I2\theta} \right)^{m(\sigma)}
  \prod_{\jj=1}^{n-\ss}(-1)^{y_\jj'\sigma(v_{\jj})}\left(e^{-\I2\theta}\right)^{p_\jj\sigma(v_\jj)} \right|^2.
\end{align*}

\begin{figure}[t]
  \centering
  \includegraphics[width=.6\linewidth]{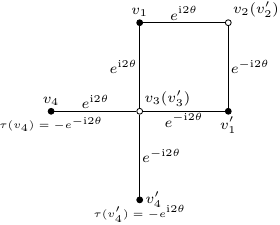}
  \caption{The equivalent Ising instance to the circuit in Figure~\ref{fig:IQP}, if qubits $2$ and $3$ are unmeasured.
  The notation $v_2 (v'_2)$ indicates that vertices $v_2$ and $v'_2$ have been identified.}
  \label{fig:IQP:Ising:2}
\end{figure}

We construct the following instance of \IQPIsing{\cst^2}{2\theta }.
We first construct $G_C=(V,E)$ with edge interaction $e^{\I2\theta}$ as before.
The vertex set $\{v_j\}$ contains one vertex for each of the $n$~qubits. For each gate 
$\ZZgate{\theta}$ on two qubits $\ii,\jj$ we add edge $(\ii,\jj)$
with edge interaction $e^{\I 2 \theta}$ to~$G_C$. Now
make a copy $G_C'=(V',E')$ such that the edge interaction is $\overline{e^{\I2\theta}}=e^{-\I2\theta}$.
Let $\varphi_{C;I}$ be this edge interaction function.
Then we identify vertices $v_l$ with $v_l'$ for all $n-\ss+1\leq l\leq n$.
Let $U$ be the set of these identified vertices and let $V_1=V-U$ and $V_1'=V'-U$.
The external field $\tau=\tau_{C;I,\mathbf{y'}}$ is defined as follows:
for any $v\in U$, $\tau(v)=1$; for any $v_\ii\in V_1$, $\tau(v_\ii)=e^{-\I(2p_{\ii}\theta)}(-1)^{y'_\ii}$; 
and for any $v'_\ii\in V_1'$, $\tau(v'_\ii)=\overline{\tau(v_\ii)}=e^{\I(2p_{\ii}\theta)}(-1)^{y'_\ii}$.
Informally, this instance was formed by putting
$G_C$ and its complement together and identifying vertices that correspond to unmeasured qubits.
Note that if two vertices in~$U$ are connected by an edge,
then they are actually connected by two edges, and the product of the two edge interactions is~$1$.
We therefore remove all edges with both endpoints in~$U$.
Call the resulting graph $H_C$.
One can verify that $(H_C,\varphi_{C;I},\tau_{C;I,\mathbf{y'}})$ is a valid instance of \IQPIsing{\cst^2}{2\theta}.
An example of the construction is given in Figure~\ref{fig:IQP:Ising:2}.

Fix an assignment $\mathbf{z}\in\{0,1\}^{\ss}$ on $U$.
The contribution $Z_\mathbf{z}$ to \ZIsing{H_C}{\varphi_{C;I},\tau_{C;I,\mathbf{y'}}} can be counted in two independent parts, $V$ and $V'$.
Hence we have
\begin{align*}
  Z_\mathbf{z}=&
  \left(
  \sum_{\sigma_1:V_1\rightarrow\{0,1\}} \left( e^{\I2\theta} \right)^{m_\oo(\sigma_1,\mathbf{z})} \prod_{\ii=1}^{n-\ss}\tau(v_\ii)^{\sigma(v_\ii)}
  \right)\cdot
  \left(
  \sum_{\sigma_1':V_1'\rightarrow\{0,1\}} \left( e^{-\I2\theta} \right)^{m_\oo'(\sigma_1',\mathbf{z})} \prod_{\ii=1}^{n-s}\overline{\tau(v_\ii)}^{\sigma(v_\ii')}
  \right)\\
  =&\left| \sum_{\sigma_1:V_1\rightarrow\{0,1\}} \left( e^{\I2\theta} \right)^{m_\oo(\sigma_1,\mathbf{z})} \prod_{\ii=1}^{n-\ss}\tau(v_\ii)^{\sigma(v_\ii)} \right|^2,
\end{align*}
where given the configurations $\sigma_1$ (or $\sigma_1'$), $m_\oo(\sigma_1,\mathbf{z})$ (or $m_\oo'(\sigma_1',\mathbf{z})$) 
is the number of monochromatic edges
with at least one endpoint
in~$V$ (or~$V'$).
Recall that $\tau(v_\ii)=e^{-\I(2p_{\ii}\theta)}(-1)^{y'_\ii}$.
Comparing $Z_\mathbf{z}$ to $\left|Q_{\mathbf{z}}\right|^2$,
the only difference is that in $\left|Q_{\mathbf{z}}\right|^2$, $e^{\I2\theta}$ is raised to the number of monochromatic edges in the whole $V$ instead of $V_1$.
However for any monochromatic edge in $U$, its contribution is independent from the configuration $\sigma$, and hence can be moved outside of the sum.
All such terms are cancelled after taking the norm.
This implies $Z_\mathbf{z}=\left|Q_{\mathbf{z}}\right|^2$.
Therefore \eqref{eqn:q:partial} can be rewritten as
\begin{align}
  \label{eqn:marginal:partial:Ising}
  \Pr_{C;I}(\mathbf{Y'}=\mathbf{y'})&=2^{-2n+\ss}\sum_{\mathbf{z}\in\{0,1\}^\ss}Z_{\mathbf{z}}\nonumber\\
  &=2^{-2n+\ss}\ZIsing{H_C}{\varphi_{C;I},\tau_{C;I,\mathbf{y'}}}
  =2^{-2n+\ss}|\ZIsing{H_C}{\varphi_{C;I},\tau_{C;I,\mathbf{y'}}}|.
\end{align}
The lemma follows from the above equation.
\end{proof}
\begin{remark}
  In fact, the construction of $H_C$ can be further simplified.
  If $v\in V$ and $v'\in V'$ connect to some $u\in U$, we can replace edges $(u,v)$ and $(u,v')$ 
  by a new edge $(v,v')$ with an Ising interaction $\tfrac{2}{e^{\I4\theta}+e^{-\I4\theta}}$. 
  (In case $e^{\I4\theta}+e^{-\I4\theta}=0$ this interaction is equality and we identify $v$ with $v'$.)
  Therefore we can reduce an instance of \SSimIQP{\cst}{\theta} to an Ising model of size linear in $|I|$, the number of measured qubits.
  If $|I|=O(\log n)$, then the reduced Ising instance is tractable and so is the simulation.
  This matches the strong simulation result by  Shepherd 
  (see \cite[Theorem 3.4]{BJS11} , the remark following that theorem  and also \cite{Shepherd10}.)
\end{remark}

The reduction also works in the other direction when $e^{\I\theta}$ is a root of unity.

\begin{theorem}
  \label{thm:IQP:Ising}
  Let $e^{\I\theta}$ be a root of unity and let $\cst>1$.
  Then  
   
	\centering
	\IQPIsing{\cst}{2\theta}\ $\equiv_{\mathrm{T}}$
	\SSimIQP{\cst^{1/2}}{\theta}.
  
\end{theorem}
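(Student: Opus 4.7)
The plan is to establish both directions of the Turing equivalence. The direction $\SSimIQP{\cst^{1/2}}{\theta}\leq_{\mathrm{T}}\IQPIsing{\cst}{2\theta}$ is immediate from Lemma~\ref{lem:ssimIQP:Ising} by substituting $\cst^{1/2}$ for $\cst$ in its statement. So the new content is the reverse direction $\IQPIsing{\cst}{2\theta}\leq_{\mathrm{T}}\SSimIQP{\cst^{1/2}}{\theta}$, which I would prove by essentially inverting the encoding used in Lemma~\ref{lem:IQP:Ising}.

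Given an instance $(G,\varphi,\tau)$ of $\IQPIsing{\cst}{2\theta}$ with $n=|V(G)|$, I would construct an \IQPonetwo{\theta}\ circuit $C$, take $I=[n]$, and pick an output string $\mathbf{y}\in\{0,1\}^n$ so that the graph, edge interactions, and external field produced by the recipe of Lemma~\ref{lem:IQP:Ising} coincide with $G$, $\varphi$, and $\tau$. One qubit line is associated with each vertex of $G$. The hypothesis that $e^{\I\theta}$ is a root of unity, say of order $N$, implies that $\ZZgate{\theta}^{N}=I$ and $\Pgate{\theta}^{N}=I$, so $\ZZgate{-\theta}=\ZZgate{\theta}^{N-1}$ and $\Pgate{-\theta}=\Pgate{\theta}^{N-1}$. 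Hence an edge $e$ with $\varphi(e)=e^{\I 2\theta}$ is realized by a single $\ZZgate{\theta}$ between the corresponding lines, while an edge with $\varphi(e)=e^{-\I 2\theta}$ is realized by $N-1$ copies of $\ZZgate{\theta}$ between those lines. Similarly, to realize a field $\tau(v)=(-1)^{a_v}(e^{\pm\I 2\theta})^{b_v}$, I would place $p_v$ gates of type $\Pgate{\theta}$ on the line for $v$, where $p_v\in\{0,1,\dots,N-1\}$ is chosen so that $e^{-\I 2 p_v\theta}=(e^{\pm\I 2\theta})^{b_v}$ (possible because $\{e^{-\I 2p\theta}:p\in\mathbb{Z}\}$ is a finite cyclic group containing every power of $e^{\pm\I 2\theta}$), and I would set the bit $y_v$ of the output string to $a_v\bmod 2$ so that the $(-1)^{y_v}$ factor in the field formula of Lemma~\ref{lem:IQP:Ising} supplies the sign. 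Since $N$ depends only on $\theta$, which is fixed, the circuit has size polynomial in $|V(G)|+|E(G)|$ and is computable in polynomial time.

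By Lemma~\ref{lem:IQP:Ising}, the circuit thus constructed satisfies $\Pr_{C}(\mathbf{Y}=\mathbf{y})=2^{-2n}|\ZIsing{G}{\varphi,\tau}|^{2}$. Calling the $\SSimIQP{\cst^{1/2}}{\theta}$ oracle returns a rational $p$ with $|\ZIsing{G}{\varphi,\tau}|^{2}/\cst^{1/2}\leq 2^{2n}p\leq \cst^{1/2}|\ZIsing{G}{\varphi,\tau}|^{2}$, and taking square roots gives an approximation to $|\ZIsing{G}{\varphi,\tau}|$ within a factor of $\cst^{1/4}$. Since $\cst^{1/4}\leq\cst$ when $\cst>1$, a sufficiently accurate rational approximation of $\sqrt{2^{2n}p}$ is a valid output for $\IQPIsing{\cst}{2\theta}$, completing the reduction. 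The main technical point, and the one where the root-of-unity hypothesis is essential, is the bookkeeping in the previous paragraph: verifying that the restricted gate set $\{H,\Pgate{\theta},\ZZgate{\theta}\}$ allowed in an \IQPonetwo{\theta}\ circuit is nonetheless rich enough to implement the full range of edge interactions $\{e^{\pm\I 2\theta}\}$ and external fields $(-1)^{a_v}(e^{\pm\I 2\theta})^{b_v}$ permitted by an $\IQPIsing{\cst}{2\theta}$ instance.
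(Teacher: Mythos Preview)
Your proposal is correct and follows essentially the same approach as the paper: the paper likewise cites Lemma~\ref{lem:ssimIQP:Ising} for one direction and, for the reverse, uses the root-of-unity hypothesis to write $e^{-\I 2\theta}=e^{\I 2t\theta}$ for some positive integer~$t$, replaces each $e^{-\I 2\theta}$ edge by $t$ parallel $e^{\I 2\theta}$ edges (equivalently, $t$ copies of $\ZZgate{\theta}$), normalises every external field to the form $(-1)^{a_i}(e^{-\I 2\theta})^{b_i}$, and then inverts the construction of Lemma~\ref{lem:IQP:Ising} exactly as you describe. Your write-up is slightly more explicit about the resulting approximation factor ($\cst^{1/4}$ before rounding to a rational), which the paper leaves implicit.
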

\begin{proof}
  Lemma~\ref{lem:ssimIQP:Ising} implies a reduction from the right hand side to the left hand side.
  In the rest of the proof we show the other direction.
  As $e^{\I\theta}$ is a root of unity, there exists a positive integer $\tt$ such that $e^{-\I2\theta}=e^{\I2\tt\theta}$.
  Given an instance $(G,\varphi,\tau)$ of \IQPIsing{\cst}{2\theta},
  we may replace each edge of interaction $e^{-\I2\theta}$ by $\tt$ parallel edges of weight $e^{\I2\theta}$.
  Moreover, we may assume the external field is of the form $\tau(v_\ii)=(-1)^{a_\ii}\left(e^{-\I2\theta}\right)^{b_\ii}$ for the same reason.
 
  We construct an \IQPonetwo{\theta} circuit $C$ on $n=|V|$ qubits.
  For each edge $(v_\ii,v_\jj)\in E$, we add a quantum gate $\ZZgate{\theta}$ on qubits $\ii$ and $\jj$.
  For each $1\leq \ii\leq n$, we add $b_\ii$ many quantum gate $\Pgate{\theta}$ on qubits $\ii$ and let the output $y_\ii=1$ on qubit $\ii$ if $a_\ii$ is odd.
  By Lemma~\ref{lem:IQP:Ising} we see that $2^{2n}\Pr_C(\mathbf{Y}=\mathbf{y})=\left|\ZIsing{G}{e^{\I2\theta},\tau}\right|^2$.
\end{proof}

Suppose the Ising instance in the proof of Theorem~\ref{thm:IQP:Ising}
has no external field and  has a constant edge interaction $e^{\I2\theta}$.
Then it is not hard to see that the above construction does not rely on $e^{\I\theta}$ being a root of unity and works for general $\theta$.
Hence we have the following lemma.

\begin{lemma}
  \label{lem:NIsingtoIQP}
  Let $e^{\I\theta}\in\mathbb{C}$ and $\cst>1$.
  Then  
  
	\centering
	\CNIsing{\cst}{e^{\I\theta} }\ \Reduce{T}\ \SSimIQP{\cst^{1/2}}{\theta/2 }.
  
\end{lemma}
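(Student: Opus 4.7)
The plan is to give a direct Turing reduction that mirrors the ``other direction'' construction already used inside the proof of Theorem~\ref{thm:IQP:Ising}, but strips away all the external-field machinery so that no root-of-unity assumption on $e^{\I\theta}$ is needed.

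First I would take an input graph $G=(V,E)$ with $n=|V|$ for \CNIsing{\cst}{e^{\I\theta}} and build an \IQPonetwo{\theta/2} circuit~$C$ on $n$ qubit lines as follows. Each line starts and ends with an $H$ gate. For every edge $(v_i,v_j)\in E$, place a single $\ZZgate{\theta/2}$ gate on qubits $i$ and $j$; do not insert any $\Pgate{\theta/2}$ gates anywhere. Let $I=[n]$ and $\mathbf{y}=\mathbf{0}$. The construction is polynomial in $|G|$ and produces a valid \IQPonetwo{\theta/2} instance for \SSimIQP{\cst^{1/2}}{\theta/2}.

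Next I would invoke Lemma~\ref{lem:IQP:Ising} applied to~$C$ with parameter $\theta/2$ in place of $\theta$. In that formula we have $p_j=0$ for every~$j$ (no $P$-gates) and $y_j=0$ for every~$j$, so the external field satisfies $\tau_{C;\mathbf{0}}(v_j)=e^{-\I(2\cdot 0\cdot\theta/2)}(-1)^{0}=1$ for all~$j$. Hence
\begin{equation*}
\Pr_C(\mathbf{Y}=\mathbf{0})
\;=\;2^{-2n}\bigl|\ZIsing{G_C}{e^{\I\theta},\,\mathbf{1}}\bigr|^{2}
\;=\;2^{-2n}\bigl|\ZIsing{G}{e^{\I\theta}}\bigr|^{2},
\end{equation*}
since the graph $G_C$ built from~$C$ coincides with~$G$ itself (one vertex per line, one edge per $\ZZgate{\theta/2}$ gate).

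Finally I would query the oracle for \SSimIQP{\cst^{1/2}}{\theta/2} on $(C,I,\mathbf{0})$ to obtain a rational $p$ with $\Pr_C(\mathbf{Y}=\mathbf{0})/\cst^{1/2}\le p\le \cst^{1/2}\Pr_C(\mathbf{Y}=\mathbf{0})$, and output a rational number $\apxnorm$ that approximates $2^{n}\sqrt{p}$ to sufficient precision. Plugging in the identity above yields
\begin{equation*}
|\ZIsing{G}{e^{\I\theta}}|/\cst^{1/4}\;\le\;2^{n}\sqrt{p}\;\le\;\cst^{1/4}\,|\ZIsing{G}{e^{\I\theta}}|,
\end{equation*}
and since $\cst^{1/4}<\cst$, the rounded value $\apxnorm$ is a valid answer to \CNIsing{\cst}{e^{\I\theta}}. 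There is no real obstacle here: the only subtle point is to make sure the \IQPonetwo{\theta/2} circuit is well-defined without using $\Pgate{\theta/2}$ gates (Definition~\ref{def:IQPonetwo} allows this, since the ``every other gate'' clause is vacuously satisfied when there are none), and to observe that the proof of Lemma~\ref{lem:IQP:Ising} never used that $e^{\I\theta}$ is algebraic or a root of unity, only that the IQP formalism applies, so no additional hypothesis on~$\theta$ is required.
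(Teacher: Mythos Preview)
Your proposal is correct and follows essentially the same approach as the paper: the paper's proof simply observes that the ``other direction'' construction in Theorem~\ref{thm:IQP:Ising} works for general~$\theta$ once the external field is trivial and the edge interaction is uniformly $e^{\I2\theta}$, and you have spelled out exactly that construction (one $\ZZgate{\theta/2}$ gate per edge, no $\Pgate{\theta/2}$ gates, $I=[n]$, $\mathbf{y}=\mathbf{0}$) together with the square-root step that converts the factor-$\cst^{1/2}$ guarantee on $\Pr_C$ into a factor-$\cst^{1/4}<\cst$ guarantee on $|\ZIsing{G}{e^{\I\theta}}|$.
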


We can now prove  our main result about IQP.
\setcounter{counter:save}{\value{theorem}}
\setcounter{theorem}{\value{counter:IQP}}
\begin{theorem} 
  Suppose $\cst>1$ and $\theta\in(0,2\pi)$.
  If  $e^{\I \theta}$ is an algebraic complex number
  and $e^{\I8\theta}\neq 1$ then  
  \SSimIQP{\cst}{\theta }\ is \numP-hard.
\end{theorem}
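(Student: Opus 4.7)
The plan is to combine the reduction already established in Lemma~\ref{lem:NIsingtoIQP} with the unit-circle hardness result for the Ising partition function given by Theorem~\ref{thm:main}, item~\ref{ccircleone} (equivalently, Corollary~\ref{cor:norm1:hard}). The whole argument is essentially a parameter-translation exercise, so there is no new combinatorial content; the key point is to align the parameter $\theta$ of the circuit with the parameter appearing on the unit circle in the Ising setting, and to interpret the condition $e^{\I 8\theta}\neq 1$ correctly.

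First I would instantiate Lemma~\ref{lem:NIsingtoIQP} with the parameter $2\theta$ in place of $\theta$ and $\cst^2$ in place of $\cst$. This yields a polynomial-time Turing reduction
\[
\text{\CNIsing{\cst^2}{e^{\I 2\theta}}}\ \leq_{\mathrm T}\ \text{\SSimIQP{\cst}{\theta}},
\]
so it suffices to prove that \CNIsing{\cst^2}{e^{\I 2\theta}} is \numP-hard under the hypotheses of the theorem. Since $e^{\I\theta}$ is algebraic, so is $e^{\I 2\theta}=(e^{\I\theta})^2$, and of course $|e^{\I 2\theta}|=1$.

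Next I would verify that the angle $2\theta\bmod 2\pi$ avoids the four exceptional directions $\{0,\pi/2,\pi,3\pi/2\}$. This is precisely where the hypothesis $e^{\I 8\theta}\neq 1$ is used: the set $\{0,\pi/2,\pi,3\pi/2\}$ is exactly the set of angles $\varphi\in[0,2\pi)$ with $e^{\I 4\varphi}=1$, so demanding $2\theta\bmod 2\pi\notin\{0,\pi/2,\pi,3\pi/2\}$ is equivalent to $e^{\I 8\theta}\neq 1$. Under this condition, item~\ref{ccircleone} of Theorem~\ref{thm:main} (proved in Corollary~\ref{cor:norm1:hard}) tells us that \CNIsing{\cst^2}{e^{\I 2\theta}} is \numP-hard, and combining this with the reduction above gives \numP-hardness of \SSimIQP{\cst}{\theta}.

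There is no substantive obstacle here beyond the bookkeeping of parameters; all of the hard work has been done in the Ising hardness results (Lemmas~\ref{lem:norm1} and \ref{lem:bisect} and their corollaries) and in the encoding of \IQP\ circuits as Ising instances (Lemma~\ref{lem:ssimIQP:Ising} and Lemma~\ref{lem:NIsingtoIQP}). The only subtle point worth double-checking is the factor of~$2$ in the angle: the circuit gate $\ZZgate{\theta}$ produces an Ising interaction $e^{\I 2\theta}$ in Lemma~\ref{lem:IQP:Ising}, which is why the angular condition $2\theta\notin\{0,\pi/2,\pi,3\pi/2\}\pmod{2\pi}$, rather than the analogous condition on $\theta$ itself, is the right one and matches exactly the stated hypothesis $e^{\I 8\theta}\neq 1$.
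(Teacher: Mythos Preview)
Your proposal is correct and follows exactly the paper's approach: the paper's proof is just the one-line statement ``This follows from Lemma~\ref{lem:NIsingtoIQP} and Corollary~\ref{cor:norm1:hard},'' and you have spelled out precisely the parameter translation (instantiating the lemma at angle $2\theta$ and accuracy $\cst^2$, then checking that $e^{\I 8\theta}\neq 1$ is equivalent to $2\theta\bmod 2\pi\notin\{0,\pi/2,\pi,3\pi/2\}$) that makes this work. One tiny labelling nit: the item you cite as \ref{ccircleone} of Theorem~\ref{thm:main} is actually item~\ref{circleone} there (the \texttt{c}-prefixed labels live in the combined theorem), but since you also cite Corollary~\ref{cor:norm1:hard} directly this does not affect the argument.
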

\setcounter{theorem}{\value{counter:save}}

\begin{proof} 
This follows from
Lemma~\ref{lem:NIsingtoIQP} and Corollary \ref{cor:norm1:hard}.
\end{proof} 

We note that if $e^{\I8\theta} = 1$, then \SSimIQP{\cst}{\theta } has a polynomial time algorithm.
By Theorem \ref{thm:IQP:Ising}, 
\SSimIQP{\cst}{\theta } can be reduced to \IQPIsing{\cst^2}{2\theta}.
If $e^{\I8\theta} = 1$, then $e^{\I2\theta}$ is an integer power of $\I$.
Therefore both the edge weight and the vertex weight of \IQPIsing{\cst^2}{2\theta}\ are powers of $\I$.
The algorithm from \cite{CLX14} (affine-type) can be used to solve \IQPIsing{\cst^2}{2\theta}.
See also case 1 of Theorem \ref{thm:fields}.

In  a related result, 
Bremner et al.\ \cite[Corollary 3.3]{BJS11} showed that weakly simulating \IQP\ with multiplicative error implies that the polynomial hierarchy collapses to the third level.
More precisely, their result is the following.
Suppose $C$ is an \IQPonetwo{\pi/8} circuit on $n$ qubits.
If there exists a classical randomized polynomial time procedure to sample a binary string $\mathbf{Z}$ of length $n$,
such that for every string $\mathbf{y}\in\{0,1\}^n$ and any constant $1\leq \cst<\sqrt{2}$,
\[ \Pr_C(\mathbf{Y}=\mathbf{y})/\cst\leq\Pr(\mathbf{Z}=\mathbf{y})\leq \cst\Pr_C(\mathbf{Y}=\mathbf{y}),\]
then the polynomial hierarchy collapses to the third level.
The usual measure for determining the quality of a sampling procedure
is  total variation distance. 
The notion of total variation distances is weaker than ``multiplicative error''
so the result in \cite{BJS11} does not rule out weak simulation with small variation distance.
To see this, note that,
if the multiplicative error is $\cst$, then obviously the total variation distance is at most $\cst-1$. 
On the other hand, 
consider two distributions supported by two $n$-bit Boolean strings.
A sample from the first distribution is obtained uniformly choosing
each of the $n$ bits. 
A sample from the second distribution is obtained by
uniformly choosing each of the first $n-1$ bits.
The last bit is $1$ if all other bits are $0$, and is chosen uniformly otherwise.
The total variation distance is $2^{-n}$, but the multiplicative error is infinity at the all $0$ string.
Note that the complexity implication
``polynomial hierarchy collapses to the third level''
is apparently weaker than the consequence of strong simulation from Theorem~\ref{thm:IQP:hardness}, which 
is $\FP=\numP$.

Strong simulation is also studied with respect to other classes of quantum circuits, see for example \cite{JN14}.
The allowable error is usually taken to be additive and exponentially small,
instead of the constant factors that we have studied here. 
For example, \cite{JN14} requires that the output be computed with $k$ bits of precision
in an amount of time that is polynomial in both~$k$ and the size of the input.
Additive error is quite different from multiplicative error.
Also, the amount of accuracy is important.  
Lemma~\ref{lem:duplicate}
shows that
there is no difference between a constant factor and
an FPRAS scenario, in which the error is allowed to be a factor of $1\pm 1/\lgeps$  
for a unary input~$\lgeps$. On the other hand,
achieving a multiplicative error of $1\pm 1/\exp(\lgeps)$ is an entirely different matter.

\section{\BQP\ and the Tutte polynomial}
\label{sec:BQP}
      
Bordewich et al.~\cite{BFLW} raised
the question ``of determining whether the Tutte polynomial is
greater than or equal to, or less than zero at a given point.''     
Thus, they raised the question of determining the complexity of \SignReTutte{x,y}.
In fact, they were especially interested in the case
$x=-t$, $y=- t^{-1}$ where $t= \exp(2 \pi \I/5)$.

We next show that resolving this case is a simple corollary of our results.
After that, we will discuss the motivation for
considering this point $(x,y)$ and its connection to the complexity class \BQP.
We will also briefly discuss a  relevant general result of Kuperberg~\cite{Kuperberg},
which  resolves similar questions by using three results about quantum computation ---
the Solovay-Kitaev theorem, the FLW density theorem, and a result of Aaronson.

Motivated by connections to quantum computing,
we consider the difficulty of the problem \SignReTutte{x,y} when $xy=1$.
 In particular, we study the points $$(x,y) = (\exp(-a \pi \I/b),\exp(a\pi \I/b)),$$ where $a$ and $b$ are positive
integers. 
If $a \in \{0,b/2,b,3 b/2\}$ then the problem is trivial
since $(x,y)$ is one of the so-called ``special points''
($(1,1)$, $(-1,-1)$, $(-i,i)$ and $(i,-i)$) where evaluating the Tutte polynomial is in \FP~\cite{JVW}.
We can assume without loss of generality that $a<2b$ since adding   $2\pi$ to the
argument of a complex number doesn't change anything. 
We can now prove the main result of this section.

\setcounter{counter:save}{\value{theorem}}
\setcounter{theorem}{\value{counter:BQP}}
\begin{theorem}
 Consider the point $(x,y) = (\exp(-a \pi \I/b),\exp(a \pi \I/b))$,
where $a$ and $b$ are positive integers 
satisfying $0< a/b < 2$  and $a\not\in\{b/2,b,3b/2\}$.
If $a$ is odd and  $\cos(a \pi/b) < 11/27$
then \NonzeroSignReTutte{x,y} is $\numP$-hard.
Thus \SignReTutte{x,y} is also $\numP$-hard.
\end{theorem}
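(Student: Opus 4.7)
The plan is to reduce the $\numP$-hard problem \MCCut\ to \NonzeroSignReTutte{x,y} in two stages: first a thickening reduction that brings the problem to a real-valued Tutte evaluation, then a bisection argument adapted from Lemma~\ref{lem:bisect}.

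First I would apply a $b$-thickening. Because $a$ is odd, $y^b = e^{ia\pi} = -1$, so Observation~\ref{obs:stretchthicken} gives a polynomial-time Turing reduction
\[
  \text{\NonzeroSignReTutte{$\cos(a\pi/b), -1$}} \leq_{\mathrm{T}} \text{\NonzeroSignReTutte{$x, y$}},
\]
with the new $x$-coordinate $1 + q/(y^b - 1) = 1 - q/2 = \cos(a\pi/b)$. It therefore suffices to prove $\numP$-hardness of \NonzeroSignReTutte{$c, -1$} for $c := \cos(a\pi/b) \in (-1, 11/27)\setminus\{0\}$. Since $c$ and $-1$ are both real, $T(G; c, -1)$ is a real number, so the sign of its real part is just its sign; by \eqref{eq:TutteRC} it factors as a (real, easily computed) prefactor depending only on $n$ and $\kappa(E(G))$ times $\ZTutte{G}{q, \bgamma}$ with uniform weight $-2$ and $q = 2(1-c) > 32/27$.

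Next I would mimic the template of Lemma~\ref{lem:bisect}. Given an instance $(G, s, t)$ of \MCCut\ with minimum cut size $k$ and cut count $C$, I would build a graph $G'$ by replacing each edge of $G$ with a gadget implementing a heavy real weight $M$ and adding a single tunable edge between $s$ and $t$ with real weight $\gamma' = -1 - \epsilon$. The multivariate Tutte polynomial then reads
\[
  \ZTutte{G'}{q, \bgamma} = (1 + \gamma')\, Z_{st}(G; q, \bM) + \Bigl(1 + \tfrac{\gamma'}{q}\Bigr)\, Z_{s|t}(G; q, \bM),
\]
and the same asymptotic estimates as in Lemma~\ref{lem:bisect} give $Z_{st} = q M^m (1 \pm o(1))$ and $Z_{s|t} = q^2 C M^{m-k} (1 \pm o(1))$, so that the expression vanishes at $\epsilon^* \approx (q-1) C M^{-k}$. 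The \NonzeroSignReTutte\ oracle (viewed through the known prefactor) acts as a sign oracle in $\epsilon$, and the interval-shrinking procedure described before Lemma~\ref{lem:bisect} localises $\epsilon^*$ to precision $M^{-m^2}$, from which $k$ and $C$ can be extracted as in that proof.

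The hard part will be the third step: implementing the heavy weight $M$ and a sufficiently dense family of tunable weights $\gamma'$ from the only weight directly available at $(c, -1)$, namely $\alpha = -2$. Because $|1 + \alpha| = 1$, a direct $h$-thickening of $\alpha$ cannot produce large $|M|$; one must first apply a $k$-stretch, obtaining $\gamma_k = q/(c^k - 1)$ with $1 + \gamma_k = (c^k + 1 - 2c)/(c^k - 1)$. The condition $c < 11/27$ (equivalently $q > 32/27$) is exactly what is needed to guarantee that some iterated combination of stretches and parallel compositions produces a weight $\zeta$ with $|1 + \zeta| > 1$, from which a subsequent $h$-thickening yields $|M|$ arbitrarily large (playing the role of the $\gamma_2 \notin [-2, 0]$ in Lemma~\ref{lem:bisect}); a similar but more delicate argument is needed to implement a dense family of tunable weights $\gamma' = -1 - \epsilon$. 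Pinning down that the quantitative threshold is \emph{precisely} $11/27$ --- in particular, carrying through the delicate regime $c \in (0, 11/27)$ where a single stretch alone leaves $|1 + \gamma_k| < 1$ --- is the main technical obstacle, and is what distinguishes the present result from a direct invocation of Lemma~\ref{lem:bisect}.
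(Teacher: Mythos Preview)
Your first step---the $b$-thickening to reach the real point $(c,-1)$ with $c=\cos(a\pi/b)$---is exactly what the paper does. After that, however, the paper's proof is one line: since $x'=c<11/27$ and $y'=-1$, the point $(x',y')$ lies in Region~F of \cite[Theorem~1]{GJSign}, where the $\numP$-hardness of computing the sign of $\ZTutte{-}{q,y'-1}$ is already established (for general $q$, not just $q=2$). The remark accompanying Lemma~\ref{lem:nozeroadd} extends this to the relaxed ``nonzero'' version, and Observation~\ref{obs:stretchthicken} transfers the hardness back to $(x,y)$.

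So you are proposing to reprove \cite[Theorem~1, Region~F]{GJSign} from scratch, which is unnecessary here. The $11/27$ threshold is not a contribution of the present paper; it is inherited verbatim from \cite{GJSign}. Your identification of the implementation problem (producing large $M$ and dense tunable $\gamma'$ from the single weight $-2$ when $0<c<11/27$) is accurate, and indeed that is where the work in \cite{GJSign} lies---but you have only named the obstacle, not overcome it, and in any case Lemma~\ref{lem:bisect} as stated is for $q=2$, so a direct adaptation would also need the $q\neq 2$ generalisation that \cite{GJSign} already contains. The short route is simply to cite that result.
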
 
\setcounter{theorem}{\value{counter:save}}     
     
\begin{proof}
 
 We will use the fact that 
$$q = (x-1)(y-1) = 2 - x - y = 2 - \exp(-a \pi \I/b) - \exp(a \pi \I/b) = 
2 - 2 \cos(a \pi /b),$$
which is real. 
Since $0<a/b<2$ and $a\not\in\{b/2,b,3b/2\}$,
$q\in(0,4)$ and $q\neq 2$. 
 
We implement $(x',y')$ using a $b$-thickening from $(x,y)$.
Then, since $a$ is an odd positive integer,
$$y' = y^b =  
\exp(a \pi \I)=-1.$$  
So $x' = 1+q/(y'-1) = 1 - q/2 = \cos(a \pi/b)$.

Now since $x' < 11/27$, 
\cite[Theorem 1, Region F]{GJSign} shows
that computing the sign of $\ZTutte{-}{q,y'-1}$
is \numP-hard.
As we showed in the argument that established Lemma~\ref{lem:nozeroadd} (see the paragraph before the statement of the lemma),
the same is true if the oracle returns any answer when the value is~$0$.

Since $x'$ and $y'$ are not~$1$, \eqref{eq:TutteRC} 
shows that it is also hard to compute the sign of $T(x',y')$.
The result now follows from Observation~\ref{obs:stretchthicken}.   
\end{proof}
 
 Since $-\exp(-2 \pi \I/5) = \exp(\pi \I) \exp(-2 \pi \I/5) = \exp(3 \pi \I/5)$, we 
 can take $a=3$ and $b=5$ to
 obtain Corollary~\ref{cor:five}, which says that  \NonzeroSignReTutte{1/y,y} is $\numP$-hard 
 for $y=-\exp(-2\pi \I/5)$.

 Theorem~\ref{thm:BQP}  is very close to a special case of the following result of Kuperberg.
A  \emph{link} is a collection of smooth simple closed curves embedded in
$3$-dimensional space. 
 $V_L(t)$ denotes the Jones polynomial of
a link~$L$ evaluated at point~$t$. We do not need the detailed definition of the Jones polynomial in
order to state Kuperberg's theorem.

\begin{theorem} \cite[Theorem 1.2]{Kuperberg}\label{thm:Kup}
Let $V(L,t)$ be the Jones polynomial of a link~$L$ described by a link diagram, and let $t$
be a principal root of unit other
than 
$\exp(2 \pi \I/r)$ where $r\in \{1,2,3,4,5,6\}$. Let $0<A<B$ be two positive real numbers and assume as
a promise that either $|V(L,t)|<A$ or $|V(L,t)|>B$. Then it is \numP-hard to decide which inequality holds.
Moreover, it is still \numP-hard when $L$ is a knot.
\end{theorem}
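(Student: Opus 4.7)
The plan is to reduce a \numP-hard counting problem to the promise problem of distinguishing $|V(L,t)|<A$ from $|V(L,t)|>B$, by encoding a post-selected quantum computation into the Jones polynomial of an explicitly constructed link. The three imported tools play the following complementary roles. FLW density tells us that for any principal root of unity $t=\exp(2\pi\I/r)$ with $r\geq 7$, the image of the $n$-strand braid group under the (reduced) Jones / Temperley--Lieb representation at $t$ is dense in the relevant projective unitary group $PU(d_n)$, where $d_n$ grows only polynomially. Solovay--Kitaev upgrades this to an \emph{efficient} and constructive approximation: any target unitary can be approximated to precision $\varepsilon$ by a braid word of length $\mathrm{polylog}(1/\varepsilon)$. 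Finally, Aaronson's PostBQP $=$ PP supplies a \numP-hard source problem in terms of amplitudes of post-selected polynomial-size quantum circuits.

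First I would take a \numP-complete language and, via PostBQP $=$ PP, phrase its characteristic function as the acceptance of a polynomial-size quantum circuit $C$ with a single distinguished post-selection qubit, enjoying a constant multiplicative gap between ``yes'' instances and ``no'' instances. Using FLW together with Solovay--Kitaev, I would synthesise a braid $\beta$ whose image under the $r$-th Jones representation approximates $U_C$ to very high precision, and would implement the post-selection by adjoining a short braid gadget that projects onto the required subspace. Then I would take a standard closure of $\beta$ (plat closure for amplitudes of the form $\langle 0^n|U|0^n\rangle$, or trace closure for diagonal matrix entries), yielding a link $L_\beta$ such that
\begin{equation*}
V(L_\beta,t) \;=\; \mathcal{N}(\beta,t)\cdot \langle \mathrm{post}\,|\,U_C\,|\,0^n\rangle
\end{equation*}
for an explicitly computable, nonzero normalisation $\mathcal{N}(\beta,t)$ depending only on the writhe, the number of strands, and $t$. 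Rescaling the thresholds $A$ and $B$ by $|\mathcal{N}|$ converts the magnitude-gap promise for $V(L_\beta,t)$ into the acceptance-probability gap of the PostBQP machine, which is \numP-hard to decide.

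To sharpen from links to knots, I would replace $L_\beta$ by a knotification $K$ obtained by band-summing its components together (or equivalently, by plat-closing a suitably augmented braid). Under such band connected sums with a fixed standard link, the Jones polynomial transforms by a multiplicative factor depending only on $t$ and on the component count of $L_\beta$, both of which the reduction knows a priori. This step merely modifies the global normalisation $\mathcal{N}$ and therefore preserves the \numP-hardness.

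The main obstacle I expect is the simultaneous control of three layers of error: the Solovay--Kitaev approximation of each BQP gate inside the braid, the success probability of the post-selection gadget (which must be at least inverse-polynomial so that the targeted amplitude is not washed out), and the potentially exponentially large or small normalisation $\mathcal{N}(\beta,t)$. The gap $\log(B/A)$ must exceed the cumulative error after all these rescalings, which forces the Solovay--Kitaev precision to be inverse-exponential; this is affordable because Solovay--Kitaev has only polylogarithmic overhead. The exclusion of $r\in\{1,2,3,4,5,6\}$ enters precisely here: at those exceptional values the image of the braid group is finite (or otherwise non-dense), so no amount of Solovay--Kitaev compilation can realise an arbitrary BQP circuit, and the reduction collapses.
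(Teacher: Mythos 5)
The paper does not prove this statement: it is quoted verbatim from Kuperberg (\cite[Theorem 1.2]{Kuperberg}), and the authors describe its proof only as ``a mash-up of three standard theorems in quantum computation'' --- the FLW density theorem, Solovay--Kitaev, and Post\BQP$=$\PP. Your outline assembles exactly those three ingredients in the expected way (density $+$ efficient compilation to encode a postselected circuit into a braid, closure of the braid to read off the amplitude as a Jones evaluation up to a computable normalisation, and Post\BQP$=$\PP\ as the \numP-hard source), so for the link case your reconstruction is faithful to the approach the paper attributes to Kuperberg. One small point of logic: since $A<B$ are \emph{arbitrary}, a larger ratio $B/A$ makes the promise weaker and the problem easier, so the burden is not that ``$\log(B/A)$ must exceed the cumulative error'' but rather that the postselected acceptance gap must be amplified (which Post\BQP\ permits) until the two amplitude regimes, after multiplication by $\mathcal{N}$, straddle any prescribed pair $A<B$.

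The genuine gap is in your passage from links to knots. Band-summing two components of the same link is not a connected sum, and the Jones polynomial does \emph{not} transform under such a band sum by a multiplicative factor depending only on $t$ and the component count; the result depends on how the band is embedded, and in general the Jones polynomial of the band sum is not determined by that of the original link at all. (The clean product formula $V(L_1\# L_2)=V(L_1)V(L_2)$ applies only to connected sums of disjoint links.) So this device does not reduce the knot case to the link case. The standard repair --- and what is needed to recover Kuperberg's ``moreover'' clause --- is to arrange the closure to be a knot from the outset, e.g.\ by ensuring the underlying permutation of the trace-closed braid is an $n$-cycle, or by choosing the plat pairing appropriately; since the FLW image is dense in a projective unitary group, one has enough freedom to append permutation braids without disturbing the encoded amplitude beyond a controlled, computable factor.
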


The connection is as follows.  
There is a result of Thistlethwaite~\cite{Thistlethwaite} (see \cite[(6.1)]{JVW}),
showing that when $L$ is an alternating link with associated planar graph~$G(L)$, 
then $V_L(t) = f_L(t) T(G;-t,-t^{-1})$,
where $f_L(t)$ is an easily-computable factor which is plus or minus a half integer power of $t$.
Thus, the evaluation of Jones polynomial of an alternating link is an easily-computable multiple of an evaluation of
the Tutte polynomial along the hyperbola $x y=1$ (where, for some value~$t$, $x=-t$ and $y=-t^{-1}$), 
as in Theorem~\ref{thm:BQP}.
The importance of these evaluations is established in \cite[Theorem~6.1]{BFLW} which shows that
all of the problems in the quantum complexity class \BQP\ 
(consisting of those decisions problems that can be solved by a quantum computer in polynomial time)
can also be solved classically in polynomial time using an oracle 
that returns the sign of the real part of the Jones polynomial of a link,
evaluated at the point~$t=\exp(2 \pi \I/5)$ (the point studied in Corollary~\ref{cor:five}). 

Kuperberg's theorem (Theorem~\ref{thm:Kup})
is  incomparable to Theorem~\ref{thm:BQP}.
In some respects, Theorem~\ref{thm:Kup} is more general ---
  it does not have the restriction $\cos(a \pi/b) < 11/27$.
 Also,
$G(L)$ is always planar, which is essential for the connection to \BQP,
and it applies to a wide range of~$A$ and~$B$.
On   the other hand,   the most relevant case $A=B=0$
(the one that relates to the \BQP~result of \cite{BFLW})
is actually excluded from
Theorem~\ref{thm:Kup} since~$A$ and~$B$ must be different and positive.
We are not sure whether Kuperberg's proof can be adapted to
include this case, where the goal would be to determine whether
$|V(L,t)| \geq 0$ or $|V(L,t)| \leq 0$. This is covered by Theorem~\ref{thm:BQP}.

In any case, it seems interesting to note that the proof of  Theorem~\ref{thm:BQP}
is combinatorial (about Tutte polynomials only) whereas
the proof of Theorem \ref{thm:Kup} is essentially about quantum computation.
(Kuperberg describes it as ``a mash-up of three standard theorems in quantum computation''.)
        
We refer the reader to \cite{DA} for more recent results   
giving \BQP-hardness of  \emph{multiplicative} approximations of the Jones polynomial
of the plat closure of a braid at roots of unity.
Also, we note that other works such as \cite{GL10} have suggested the idea of using
tractable planar evaluations of these polynomials to give efficient classical simulations for
special cases of quantum circuits.

\section{Ising with a field}

In Section~\ref{sec:field}, we will extend our Ising hardness results from Theorems \ref{thm:main} and \ref{thm:relaxed} 
to the situation in which we have an external field~$\lambda\neq 1$. 
To obtain our hardness results, we need a lower bound on the relevant partition functions.

\subsection{Lower bounds on partition functions}
\label{sec:LB}

Suppose we have two edge weights $\ybeta$ and $\ybeta'$ that are close.
It is easy to bound the distance between $\ZIsing{G}{\ybeta}$ and $\ZIsing{G}{\ybeta'}$ additively, but not multiplicatively.
To convert an absolute error into a relative error, one needs some lower bound on the partition function.
However, when the edge interaction $\ybeta$ is negative or complex, 
it is possible that the partition function vanishes.
Assuming that it doesn't  vanish, we would like to know how close to zero could it get.
When $\ybeta$ is rational, an exponential lower bound is easy to obtain by a simple granularity argument, 
but  the argument is more difficult when $\ybeta$ is not rational.
In this section we  give an exponential lower bound which is
valid  when $\ybeta$ is an algebraic number.
The techniques that we use are standard in transcendental number theory, see e.g.~\cite{Bug04}.

We begin with some basic definitions from \cite{Bug04}.
For a polynomial with complex coefficients 
\begin{align*}
  P(x)=\sum_{i=0}^n a_i x^i=a_n\prod_{i=1}^n(x-\alpha_i),
\end{align*}
the (naive)  \emph{height} of $P(x)$ is defined as $\Height{P}:=\max_i\{|a_i|\}$.
A more advanced tool, its  \emph{Mahler measure}, is defined as
\begin{align*}
  \Mahler{P}:=|a_n|\prod_{i=1}^n\max\{1,|\alpha_i|\}.
\end{align*}

There is a standard inequality relating these two measures.
It is proved for complex polynomials in~\cite[Lemma\ A.2]{Bug04}.
For completeness, we include the   proof (following~\cite{Bug04}) for the case 
in which~$P(x)$ is a real polynomial,
which is all that we require.
\begin{lemma}
  Let $P(x)$ be a non-zero real polynomial of degree $n$.
  Then $\Mahler{P}\leq\sqrt{n+1}\ \Height{P}$.
  \label{lem:Mahler}
\end{lemma}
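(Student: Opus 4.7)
The plan is to prove the inequality via the classical integral representation of the Mahler measure combined with Parseval's identity.

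First, I would recall (or prove via Jensen's formula) the integral formula
\begin{equation*}
\log \Mahler{P} = \int_0^1 \log |P(e^{2\pi \I t})|\, dt.
\end{equation*}
This follows by writing $P(x)=a_n\prod_{i=1}^n (x-\alpha_i)$, noting that $\int_0^1 \log|e^{2\pi \I t}-\alpha|\, dt=\log\max(1,|\alpha|)$ (a direct consequence of Jensen's formula on the unit disk), and summing over the roots.

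Next, I would apply Jensen's inequality to the concave function $\log$ to get
\begin{equation*}
\log \Mahler{P} = \int_0^1 \log |P(e^{2\pi \I t})|\, dt \leq \tfrac{1}{2}\log \int_0^1 |P(e^{2\pi \I t})|^2\, dt,
\end{equation*}
so $\Mahler{P} \leq \left(\int_0^1 |P(e^{2\pi \I t})|^2\, dt\right)^{1/2}$. Then Parseval's identity applied to $P(e^{2\pi \I t})=\sum_{i=0}^n a_i e^{2\pi \I i t}$ gives $\int_0^1 |P(e^{2\pi \I t})|^2\, dt = \sum_{i=0}^n |a_i|^2$. Finally,
\begin{equation*}
\sum_{i=0}^n |a_i|^2 \leq (n+1)\max_i |a_i|^2 = (n+1)\Height{P}^2,
\end{equation*}
so $\Mahler{P} \leq \sqrt{n+1}\, \Height{P}$.

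No step should be a real obstacle: the integral formula for $\Mahler{P}$ is a standard consequence of Jensen's formula, and everything else is elementary. The only mild subtlety is justifying the integral formula from scratch if one wants a self-contained argument; the cleanest route is to verify $\int_0^1 \log|e^{2\pi \I t}-\alpha|\,dt = \log\max(1,|\alpha|)$ separately for $|\alpha|\leq 1$ and $|\alpha|>1$ (in the first case the integrand's singularity is integrable and the mean value of $\log|z-\alpha|$ on the unit circle is $0$; in the second case it is $\log|\alpha|$). The hypothesis that $P$ has real coefficients is not actually needed for this argument — it works verbatim for complex polynomials — but the statement restricted to the real case is exactly what we want.
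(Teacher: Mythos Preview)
Your proposal is correct and follows essentially the same route as the paper: Jensen's formula for the integral representation of $\Mahler{P}$, then a concavity/convexity step to bound by the $L^2$ norm on the unit circle, then Parseval to get $\sum_i a_i^2$, then the trivial bound $\sum_i a_i^2\le (n+1)\Height{P}^2$. The only cosmetic difference is that the paper passes through the $L^1$ norm (convexity of $\exp$) and then Cauchy--Schwarz, whereas you apply Jensen's inequality for $\log$ directly to $|P|^2$ to reach $L^2$ in one step; the paper also expands the Parseval computation explicitly for real coefficients rather than citing it.
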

\begin{proof}
  First apply Jensen's formula on $P(x)$ and on the unit circle in the complex plane,
  \begin{align*}
	\Mahler{P}=\exp\left\{ \int_0^1\log |P(e^{2\I\pi t})|{\rm d} t \right\}.
  \end{align*}
  The convexity of exponential functions implies
  \begin{align*}
	\Mahler{P}&\leq \int_0^1 |P(e^{2\I\pi t})|{\rm d} t\leq \left(\int_0^1 |P(e^{2\I\pi t})|^2{\rm d} t\right)^{1/2},
  \end{align*}
  where the second inequality follows by the Cauchy-Schwarz inequality writing $P(x)$ as $f(x)g(x)$ where $g(x)=1$.
  The inner integral yields
  \begin{align*}
	\int_0^1 |P(e^{2\I\pi t})|^2{\rm d} t & = \int_0^1 \left( \left(\sum_{j=0}^n a_j \cos(j\cdot2\pi t)\right)^2 
	+\left(\sum_{j=0}^n a_j \sin(j\cdot2\pi t)\right)^2 \right) {\rm d} t\\
	&=\sum_{i=0}^n a_i^2 + 2 \int_0^1\sum_{0\leq j < k \leq n}a_j a_k (\cos(j\cdot2\pi t)\cos(k\cdot2\pi t)+\sin(j\cdot2\pi t)\sin(k\cdot2\pi t)){\rm d}t\\
	&=\sum_{i=0}^n a_i^2 + 2 \sum_{0\leq j < k \leq n}a_j a_k\int_0^1 \cos( (j-k)\cdot2\pi t){\rm d}t=\sum_{i=0}^n a_i^2.
  \end{align*}
  The claim holds as $\Mahler{P}\leq\left(\sum_{i=0}^n a_i^2\right)^{1/2}\leq \sqrt{n+1}\ \Height{P}$.
\end{proof}

Let $\ybeta\in\mathbb{C}$ be an algebraic number 
and its minimal polynomial over $\mathbb{Z}$ is $P_\ybeta(x)$.
The degree of $P_\ybeta(x)$ is called the  \emph{degree} of $\ybeta$ 
and $\Height{P_\ybeta}$ is called the  \emph{height} of $\ybeta$, also denoted $\Height{\ybeta}$.

We also need the following notion of  \emph{resultants}.
\begin{definition}
  Let $P(x)=a_n\prod_{i=1}^n(x-\alpha_i)$ and $Q(x)=b_m\prod_{i=1}^m(x-\ybeta_i)$ 
  be two non-constant polynomials.
  The resultant of $P(x)$ and $Q(x)$ is defined as
  \begin{align*}
	\Res{P}{Q}&=a_n^m b_m^n\prod_{1\leq i\leq n}\prod_{1\leq i\leq m}(\alpha_i-\ybeta_j).
  \end{align*}
\end{definition}
It is a standard result that $\Res{P}{Q}$ is an integer polynomial in the coefficients of $P(x)$ and $Q(x)$.
The resultant is also the determinant of the so-called Sylvester matrix.
In particular, when $P(x)$ and $Q(x)$ are integer polynomials, $\Res{P}{Q}$ is always an integer,
as the Sylvester matrix is an integer matrix in this case.
Moreover, we can rewrite the resultant as follows:
\begin{align*}
  \Res{P}{Q}&=a_n^m\prod_{1\leq i\leq n}Q(\alpha_i)=(-1)^{mn}b_m^n\prod_{1\leq j\leq m}P(\ybeta_j).
\end{align*}

Now we are ready to give a lower bound for any integer polynomial evaluated at an algebraic number.
It is a standard result in algebraic number theory.
For completeness we provide a proof here and the treatment is from \cite[Theorem~A.1]{Bug04}.
\begin{lemma}
  Let $P(x)$ be an integer polynomial of degree $n$,
  and $\ybeta\in\mathbb{C}$ be an algebraic number of degree $d$.
  Then either $P(\ybeta)=0$ or 
  \begin{align*}
	|P(\ybeta)|&\geq C_\ybeta^{-n} \left( (n+1)\Height{P} \right)^{-d+1}.
  \end{align*}
  where $C_\ybeta>1$ is an effectively computable constant that only depends on $\ybeta$.
  \label{lem:alge:poly:lb}
\end{lemma}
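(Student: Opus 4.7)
The plan is to use the resultant of $P$ with the minimal polynomial of $\beta$, together with the Mahler–measure comparison already provided by Lemma~\ref{lem:Mahler}. Let $P_\beta(x) = b_d \prod_{j=1}^{d}(x-\beta_j)$ be the minimal polynomial of $\beta$ over $\mathbb{Z}$, with $\beta_1 = \beta$ and all other $\beta_j$ the Galois conjugates of~$\beta$. Since $P_\beta$ is irreducible over $\mathbb{Z}$ and $P(\beta)\neq 0$, the polynomials $P$ and $P_\beta$ share no common complex root, so $\Res{P}{P_\beta}$ is a nonzero integer and hence $|\Res{P}{P_\beta}|\geq 1$.

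Next I would expand the resultant on the $P_\beta$–side of the factorization formula:
\[
\Res{P}{P_\beta} \;=\; (-1)^{nd}\, b_d^{\,n}\,\prod_{j=1}^{d} P(\beta_j),
\]
which gives the key inequality $|b_d|^{n}\prod_{j=1}^{d}|P(\beta_j)| \geq 1$. I want to isolate $|P(\beta_1)|=|P(\beta)|$, so the next step is to upper bound $|P(\beta_j)|$ for $j\geq 2$. Writing $P(x)=\sum_{i=0}^{n}a_i x^i$ and using the triangle inequality,
\[
|P(\beta_j)| \;\leq\; (n+1)\,\Height{P}\,\max(1,|\beta_j|)^n.
\]
Combining this with the resultant bound yields
\[
|P(\beta)| \;\geq\; \frac{1}{|b_d|^{n}\bigl((n+1)\Height{P}\bigr)^{d-1}\prod_{j=2}^{d}\max(1,|\beta_j|)^{n}}.
\]

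To clean up the product of $\max(1,|\beta_j|)^n$ terms, I would invoke the definition of the Mahler measure: $\Mahler{P_\beta} = |b_d|\prod_{j=1}^{d}\max(1,|\beta_j|)$, so
\[
|b_d|^{n}\prod_{j=2}^{d}\max(1,|\beta_j|)^{n} \;\leq\; |b_d|^{n}\prod_{j=1}^{d}\max(1,|\beta_j|)^{n} \;=\; \Mahler{P_\beta}^{\,n}.
\]
Substituting this gives $|P(\beta)|\geq \Mahler{P_\beta}^{-n}\bigl((n+1)\Height{P}\bigr)^{-(d-1)}$. Finally, I would set $C_\beta := \max\bigl(2,\Mahler{P_\beta}\bigr)$, which is $>1$, effectively computable from $\beta$ (via Lemma~\ref{lem:Mahler}, since $\Mahler{P_\beta}\leq \sqrt{d+1}\,\Height{\beta}$), and depends only on $\beta$. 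With this choice the asserted inequality $|P(\beta)|\geq C_\beta^{-n}\bigl((n+1)\Height{P}\bigr)^{-d+1}$ follows.

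There is no real obstacle here; the one step that needs to be handled with mild care is justifying that $|\Res{P}{P_\beta}|\geq 1$ (relying on the standard fact that the Sylvester determinant of two integer polynomials is an integer, together with irreducibility of $P_\beta$ to rule out a common root when $P(\beta)\neq 0$). Everything else is the standard chain: resultant $\to$ conjugate upper bounds $\to$ Mahler measure $\to$ naive height via Lemma~\ref{lem:Mahler}.
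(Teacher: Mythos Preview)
Your proposal is correct and follows essentially the same approach as the paper: compute the resultant of $P$ with the minimal polynomial of $\ybeta$, use that it is a nonzero integer, bound the contributions from the other conjugates by $(n+1)\Height{P}\max(1,|\ybeta_j|)^n$, and absorb the resulting product into the Mahler measure via Lemma~\ref{lem:Mahler}. The only cosmetic differences are that the paper justifies $P(\ybeta_j)\neq 0$ via an explicit Galois automorphism argument and takes $C_\ybeta=\sqrt{d+1}\,\Height{\ybeta}$ rather than $\max(2,\Mahler{P_\ybeta})$.
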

\begin{proof}
  Assume $P(\ybeta)\neq 0$.
  Let $Q(x)=b_d\prod_{i=1}^d(x-\ybeta_i)$ be the minimal polynomial of $\ybeta$ over $\mathbb{Z}$ with $\ybeta_1=\ybeta$.
  
  Suppose there is an $j\neq 1$ such that $P(\ybeta_j)=0$.
  As $Q(x)$ is the minimal polynomial of $\ybeta$, none of $\ybeta_j$ could be a rational number.
  Hence there is an automorphism of the splitting field of $Q(x)$ that maps $\ybeta_j$ to $\ybeta$.
  Applying this automorphism on both sides of $P(\ybeta_j)=0$,
  we get $P(\ybeta)=0$.
  Contradiction!
  
  Hence we have $P(\ybeta_i)\neq 0$ for all $i$ and the resultant of $P(x)$ and $Q(x)$ is non-zero.
  Since $\Res{P}{Q}$ is an integer, we have
  \begin{align*}
	1\leq |\Res{P}{Q}| = |b_d|^n\prod_{1\leq i \leq d} |P(\ybeta_i)|.
  \end{align*}
  Clearly, by triangle inequality we have $|P(\ybeta_i)|\leq (n+1)\Height{P}(\max\{1,|\ybeta_i|\})^n$.
  It implies,
  \begin{align*}
	1&\leq |P(\ybeta)| |b_d|^n\left( (n+1)\Height{P} \right)^{d-1}\prod_{2\leq i \leq d} (\max\{1,|\ybeta_i|\})^n\\
	&=|P(\ybeta)|\left( (n+1)\Height{P} \right)^{d-1} \left( \frac{\Mahler{Q}}{\max\{1,|\ybeta|\}} \right)^n\\
	&\leq |P(\ybeta)|\left( (n+1)\Height{P} \right)^{d-1} \left(\sqrt{d+1}\Height{\ybeta}\right)^n
  \end{align*}
  where the last inequality follows from Lemma~\ref{lem:Mahler}.
  Therefore we have 
  \begin{align*}
	|P(\ybeta)|\geq \left( (n+1)\Height{P} \right)^{-d+1} \left(\sqrt{d+1}\Height{\ybeta}\right)^{-n}.
  \end{align*}
  Let $C_\ybeta=\sqrt{d+1}\Height{\ybeta}$ and the lemma holds.
\end{proof}

\begin{lemma}  \label{lem:lowerbound}
  Let $G$ be a graph and $\ybeta\in\mathbb{C}$ a non-zero algebraic number of degree $d$.
  There exists a positive constant $C>1$ depending only on $\ybeta$ such that 
  if $\ZIsing{G}{\ybeta}\neq 0$, then $|\ZIsing{G}{\ybeta}|>C^{-m}$, where $m$ is the number of edges in $G$.
\end{lemma}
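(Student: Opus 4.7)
The plan is to view $\ZIsing{G}{\ybeta}$ as the evaluation of an explicit integer polynomial in $\ybeta$ and then apply Lemma~\ref{lem:alge:poly:lb}. From the definition in~\eqref{eqn:Ising}, write
\[
\ZIsing{G}{\ybeta} \;=\; \sum_{\sigma:V\to\{0,1\}}\ybeta^{m(\sigma)} \;=\; P_G(\ybeta),
\qquad\text{where}\qquad
P_G(x)\;:=\;\sum_{k=0}^{m} c_k\,x^{k},
\]
and $c_k$ is the number of spin assignments $\sigma$ with exactly $k$ monochromatic edges. So $P_G(x)$ is an integer polynomial of degree at most $m$, and its coefficients satisfy $c_k\leq \sum_k c_k = 2^{|V|}$.

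First I would dispose of isolated vertices. If $G$ has $j$ isolated vertices and $G'$ is the graph obtained by deleting them, then $\ZIsing{G}{\ybeta} = 2^{j}\,\ZIsing{G'}{\ybeta}$, so it suffices to prove the lower bound for a graph with no isolated vertices (the factor $2^j\geq 1$ only helps). Assuming this, every vertex has degree at least~$1$, so $|V|\leq 2m$, and hence $\Height{P_G}\leq 2^{|V|}\leq 4^{m}$.

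Next I would apply Lemma~\ref{lem:alge:poly:lb} with $P=P_G$ and the given algebraic $\ybeta$ of degree $d$. If $\ZIsing{G}{\ybeta}=P_G(\ybeta)\neq 0$, the lemma gives
\[
|\ZIsing{G}{\ybeta}| \;\geq\; C_\ybeta^{-m}\,\bigl((m+1)\,\Height{P_G}\bigr)^{-d+1}
\;\geq\; C_\ybeta^{-m}\,\bigl((m+1)\,4^{m}\bigr)^{-d+1},
\]
where $C_\ybeta>1$ depends only on $\ybeta$. Since $(m+1)\,4^{m}\leq 8^{m}$ for all sufficiently large~$m$ (and a constant absorbs the finitely many small cases), the right-hand side is at least $C^{-m}$ for a suitable constant $C>1$ depending only on $\ybeta$ (concretely, any $C > C_\ybeta\cdot 8^{d-1}$ works).

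There is essentially no hard step here: the main point is just to recognise $\ZIsing{G}{\ybeta}$ as an integer polynomial in~$\ybeta$ with degree~$O(m)$ and height~$2^{O(m)}$, after which Lemma~\ref{lem:alge:poly:lb} does all the work. The only mild subtlety is allowing isolated vertices (handled by the trivial reduction above) so that the exponent in the bound can be stated purely in terms of the edge count~$m$ rather than the number of vertices.
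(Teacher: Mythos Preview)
Your proposal is correct and follows essentially the same approach as the paper: write $\ZIsing{G}{\ybeta}$ as the integer polynomial $P_G(\ybeta)=\sum_k c_k\,\ybeta^k$ with $\sum_k c_k=2^{|V|}$, bound $|V|$ linearly in $m$, and invoke Lemma~\ref{lem:alge:poly:lb}. The only cosmetic difference is that the paper first reduces to connected graphs (so $m\geq n-1$) rather than deleting isolated vertices (so $n\leq 2m$); both devices serve exactly the same purpose of controlling $\Height{P_G}$ by $2^{O(m)}$.
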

\begin{proof}
  Given a graph $G$, first suppose that $G$ is not connected, $G_i$'s are the components of $G$.
  Then \ZIsing{G}{\ybeta}$=\prod_i$\ZIsing{G_i}{\ybeta}.
  It is easy to see that if the claim holds for all components it hold for $G$ as well.
  Therefore in the following we may assume $G$ is connected.
  Then $m\geq n-1$ where $n$ is the number of vertices.

  We can rewrite $\ZIsing{G}{\ybeta}$ as a polynomial in $\ybeta$ as follows,
  \begin{align*}
	P(\ybeta)=\ZIsing{G}{\ybeta}=\sum_{i=0}^m C_\ii \ybeta^\ii,
  \end{align*}
  where $C_\ii$ is the number of configurations such that there are exactly $\ii$ many monochromatic edges.
  Notice that $\sum_{\ii=0}^m C_\ii=2^n$, we have $\Height{P}\leq 2^n$.
  Assume $P(\ybeta)\neq 0$. 
  Apply Lemma~\ref{lem:alge:poly:lb} and we obtain
  \begin{align*}
	|P(\ybeta)|&\geq C_\ybeta^{-m} \left( (m+1)\Height{P} \right)^{-d+1}\\
	&\geq (m+1)^{-d+1}C_\ybeta^{-m} 2^{-(d-1)n},
  \end{align*}
  where $C_\ybeta>1$ is a constant depending only on $\ybeta$.
  As $m\geq n-1$, the right hand side decays exponentially in $m$ and the lemma follows.
\end{proof}

\begin{lemma}  \label{lem:lowerbound:field}
  Let $G$ be a graph and $\ybeta,z\in\mathbb{C}$ two roots of unity.
  Let $n$ be the number of vertices in $G$ and $m$ the number of edges.
  There exists a positive constant $C>1$ depending only on $\ybeta$ and $z$ such that 
  if $\ZIsing{G}{\ybeta,z}\neq 0$, then $|\ZIsing{G}{\ybeta,z}|>C^{-m}$.
\end{lemma}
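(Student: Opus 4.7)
The plan is to follow the structure of the proof of Lemma~\ref{lem:lowerbound}: rewrite $\ZIsing{G}{\ybeta,z}$ as an integer polynomial evaluated at a single algebraic number, and then invoke Lemma~\ref{lem:alge:poly:lb}.

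Since $\ybeta$ and $z$ are roots of unity of some orders $a$ and $b$, set $N=\mathrm{lcm}(a,b)$ and $\omega=e^{2\pi\I/N}$. The number $\omega$ is algebraic of degree $d=\phi(N)$ depending only on $\ybeta$ and $z$, and we may write $\ybeta=\omega^p$ and $z=\omega^q$ for non-negative integers $p,q<N$ determined by $\ybeta$ and $z$. We then have
\begin{align*}
  \ZIsing{G}{\ybeta,z}=\sum_{\sigma:V\to\{0,1\}} \omega^{p\cdot m(\sigma)+q\cdot n_1(\sigma)}=Q(\omega),
\end{align*}
where $Q(x)$ is a polynomial with non-negative integer coefficients that sum to $2^n$ (so $\Height{Q}\leq 2^n$) and whose degree is at most $pm+qn$. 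Thus the field parameter is absorbed into a single algebraic evaluation point, which is the key new observation relative to Lemma~\ref{lem:lowerbound}.

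Next, reduce to the connected case by factorising $\ZIsing{G}{\ybeta,z}$ over components exactly as in the proof of Lemma~\ref{lem:lowerbound}: an isolated-vertex component contributes the fixed factor $1+z$, which must be nonzero under the hypothesis that $\ZIsing{G}{\ybeta,z}\neq 0$, and these factors are bounded algebraic constants depending only on $z$. For a component $G_i$ with at least one edge, $n_i\leq m_i+1$, so the associated polynomial $Q_i$ has degree at most $(p+q)m_i+q$ and height at most $2^{m_i+1}$, with the implicit constants depending only on $p$ and $q$, hence only on $\ybeta$ and $z$.

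Finally, apply Lemma~\ref{lem:alge:poly:lb} to $Q_i$ and $\omega$: since $Q_i(\omega)\neq 0$, we obtain
\begin{align*}
|Q_i(\omega)|\geq C_\omega^{-\deg Q_i}\bigl((\deg Q_i+1)\Height{Q_i}\bigr)^{-(d-1)}\geq C^{-m_i}
\end{align*}
for a suitable $C>1$ depending only on $\omega$. Multiplying the bounds across components, and enlarging $C$ to absorb the bounded $|1+z|$ contributions from isolated vertices, yields the claimed $|\ZIsing{G}{\ybeta,z}|>C^{-m}$. The one step requiring care is simply the bookkeeping: verifying that the exponents $p,q$ (which blow up the degree of $Q$) and the $(1+z)$ factors (which can have modulus less than $1$ for certain roots of unity such as $z=e^{4\pi\I/5}$) are absorbed cleanly into a constant $C$ that depends only on $\ybeta$ and $z$; beyond this, the transcendental machinery is identical to that of Lemma~\ref{lem:lowerbound}.
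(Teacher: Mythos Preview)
Your approach is essentially the paper's: both pass to a single primitive root of unity $\omega$ (the paper writes $w$) of order equal to the $\mathrm{lcm}$ of the orders of $\ybeta$ and $z$, express $\ZIsing{G}{\ybeta,z}$ as an integer polynomial in $\omega$ with degree $O(m+n)$ and height at most $2^n$, and invoke Lemma~\ref{lem:alge:poly:lb}. The paper then simply declares ``as in the previous lemma we may assume $G$ is connected and $m\ge n-1$'' and applies the bound directly to $G$; you instead decompose into components explicitly and multiply.

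There is, however, a genuine gap in your isolated-vertex step. You assert that the factors $|1+z|$ from isolated-vertex components can be ``absorbed cleanly'' by enlarging $C$. This fails in general: if $|1+z|<1$ (which happens for many roots of unity, e.g.\ $z=e^{4\pi\I/5}$ has $|1+z|=2\cos(2\pi/5)\approx0.618$) and $G$ consists of $k$ isolated vertices, then $m=0$ while $\ZIsing{G}{\ybeta,z}=(1+z)^k$, so your bound would read $|1+z|^k>C^{0}=1$, which is false for every $k\ge1$. Since the number of isolated vertices is independent of $m$, no constant $C=C(\ybeta,z)$ can absorb these factors into $C^{-m}$. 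The paper does not actually address this either; as literally stated the lemma is false for such $G$. The clean fix---harmless for every application of the lemma in Section~\ref{sec:field}---is to assume $G$ has no isolated vertices, so that each component satisfies $n_i\le m_i+1$; after that your argument and the paper's coincide.
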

\begin{proof}
  As in the previous lemma we may assume $G$ is connected and $m\geq n-1$.
  Suppose $y$ is of order $d_1$ and $z$ order $d_2$.
  Let $d$ be the least common multiple of $d_1$ and $d_2$.
  Then there exists a root of unity $w$ of order $d$ such that $y=w^{t_1}$ and $z=w^{t_2}$.
  
  Given a graph $G$, we can rewrite $\ZIsing{G}{\ybeta,z}$ as a polynomial in $\ybeta$ and $z$ as follows,
  \begin{align*}
    \ZIsing{G}{\ybeta,z}=\sum_{\jj=0}^n\sum_{\ii=0}^m C_{\ii,\jj} \ybeta^\ii z^\jj,
  \end{align*}
  where $C_{\ii,\jj}$ is the number of configurations such that there are exactly $\ii$ many monochromatic edges and $\jj$ many $1$ vertices.
  Let
  \begin{align*}
    P(w)=\ZIsing{G}{\ybeta,z}=\sum_{\jj=0}^n\sum_{\ii=0}^m C_{\ii,\jj} w^{t_1 \ii + t_2 \jj}=\sum_{\ell=0}^{t_1 m+t_2 n}C'_\ell w^{\ell},
  \end{align*}
  where $C'_\ell=\sum_{t_1\ii+t_2\jj=\ell}C_{\ii,\jj}$.
  Notice that $\sum_{\ell=0}^{t_1 m+t_2 n}C'_\ell=\sum_{\jj=0}^n\sum_{\ii=0}^m C_{\ii,\jj}=2^n$, we have $\Height{P}\leq 2^n$.
  Assume $P(w)\neq 0$.
  Apply Lemma~\ref{lem:alge:poly:lb} and we obtain
  \begin{align*}
    |P(w)|&\geq C_w^{-t_1 m-t_2 n} \left( (t_1 m+t_2 n+1)\Height{P} \right)^{-d+1}\\
	&\geq (t_1 m+t_2 n+1)^{-d+1}C_w^{-t_1 m-t_2 n} 2^{-(d-1)n},
  \end{align*}
  where $C_w>1$ is a constant depending only on $w$.
  As $m\geq n-1$, the right hand side decays exponentially in $m$ and the lemma follows.
\end{proof}

\subsection{Hardness results}
\label{sec:field}

In this section we will show hardness results when both the edge interaction and external field are roots of unity.

We first consider the external field $-1$.
We describe the edge interaction by 
specifying an interaction matrix 
$\left[\begin{matrix}
    n_{00} & n_{01} \\ n_{10} & n_{11}
\end{matrix}\right]$, 
 where $n_{ij}$ is the weight when the two endpoints have spins $i$ and $j$, respectively.
In this notation, a binary equality is 
$\left[\begin{matrix}
    1 & 0 \\ 0 & 1
\end{matrix}\right]$,
and an Ising interaction with weight $\ybeta$ is 
$\left[\begin{matrix}
    \ybeta & 1 \\ 1 & \ybeta
\end{matrix}\right]$.
Given a gadget with two distinguished vertices,
we may view it as an edge and compute its effective interaction matrix $M$.
Then we say the gadget \emph{implements} $M$.
Also, recall the definitions of $k$-stretch and $k$-thickening (Observation \ref{obs:stretchthicken}, for example).

\begin{lemma}  \label{lem:field:-1}
  Let $\cst>1$ and $\ybeta\in\CC$ be an algebraic complex number such that $\ybeta\neq\pm 1$.
  Then we have \CNonzeroNIsing{\cst}{\ybeta} \Reduce{T} \CNonzeroNIsing{\cst}{\ybeta,-1}.  
\end{lemma}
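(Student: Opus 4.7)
The idea is to flip the external field from $-1$ back to $+1$ at every vertex by attaching a degree-one pendant, relying on a vertex-by-vertex algebraic cancellation. Given an instance $G=(V,E)$ with $n=|V|$ of \CNonzeroNIsing{\cst}{\ybeta}, construct $G'$ by adding, for each $v\in V$, a fresh vertex $u_v$ together with the single edge $\{u_v,v\}$ (all edges of $G'$ carry the Ising weight~$\ybeta$ by definition). I then call the oracle for \CNonzeroNIsing{\cst'}{\ybeta,-1} on $G'$, where $\cst'$ is a slightly sharper accuracy chosen below, and divide the returned value by $|\ybeta-1|^n$.

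The correctness rests on the identity
$$Z_{\text{Ising}}(G';\ybeta,-1)=(\ybeta-1)^n\,Z_{\text{Ising}}(G;\ybeta),$$
which I would verify as follows. Writing any $\sigma':V(G')\to\{0,1\}$ as $\sigma\cup\tau$ with $\sigma$ the restriction to $V$ and $\tau$ the pendant assignment, the sum factors as
$$Z_{\text{Ising}}(G';\ybeta,-1)=\sum_{\sigma:V\to\{0,1\}}\ybeta^{m(\sigma)}(-1)^{n_1(\sigma)}\prod_{v\in V}\!\left(\sum_{t\in\{0,1\}}\ybeta^{[t=\sigma(v)]}(-1)^{t}\right).$$
The inner sum equals $\ybeta-1$ when $\sigma(v)=0$ and $-(\ybeta-1)$ when $\sigma(v)=1$, so the product over $v$ is $(\ybeta-1)^n(-1)^{n_1(\sigma)}$; the two factors $(-1)^{n_1(\sigma)}$ cancel, yielding the identity. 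Since $\ybeta\neq 1$, the prefactor is a nonzero algebraic constant, so $Z_{\text{Ising}}(G;\ybeta)=0$ iff $Z_{\text{Ising}}(G';\ybeta,-1)=0$, which is exactly what is needed to transfer the relaxed (``Nonzero'') problem.

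Finally, the multiplicative structure turns an oracle answer into a valid output: if $Z_{\text{Ising}}(G;\ybeta)=0$ then both sides vanish and any rational number is a legal output on either end, and otherwise an estimate $\apxnorm$ satisfying $\apxnorm/\cst'\le |Z_{\text{Ising}}(G';\ybeta,-1)|\le\cst'\apxnorm$ yields $\apxnorm/|\ybeta-1|^n$ as an estimate for $|Z_{\text{Ising}}(G;\ybeta)|$ with the same factor. The only mildly technical point is that $|\ybeta-1|^n$ is algebraic rather than rational, so I would compute a rational approximation of it to some factor $\cst''>1$; choosing $\cst'$ and $\cst''$ with $\cst'\cst''\le\cst$ (and invoking Lemma~\ref{lem:duplicate} to tighten the oracle's factor if necessary) absorbs this arithmetic slack. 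No substantive obstacle arises—the step one must notice is simply that a single pendant implements a $-1$ field at its neighbour, so $n$ pendants precisely cancel the ambient $-1$ field.
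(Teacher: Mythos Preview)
Your proof is correct, and in fact uses a simpler gadget than the paper does. The paper first implements a binary \emph{equality} constraint: a $2$-stretch with interaction $\ybeta$ and field $-1$ gives the diagonal interaction matrix $\BinMatrix{\ybeta^2-1}{0}{0}{1-\ybeta^2}$, and a subsequent $2$-thickening squares the entries to $(\ybeta^2-1)^2 I$. Attaching a fresh vertex $v'$ to each $v$ through this equality gadget then forces $\sigma(v')=\sigma(v)$, so the $-1$ field on $v'$ exactly cancels the $-1$ field on $v$, at the cost of the nonzero scalar $(\ybeta^2-1)^{2n}$ (this is where the hypothesis $\ybeta\neq\pm1$ is used).

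Your single-pendant construction achieves the same cancellation more directly: summing out the pendant gives the effective unary weight $(\ybeta-1)(-1)^{\sigma(v)}$ at $v$, which multiplies against the ambient $(-1)^{\sigma(v)}$ to leave the constant $\ybeta-1$. This needs only $\ybeta\neq1$, uses fewer extra vertices, and avoids the detour through an explicit equality gadget. The paper's route has the mild conceptual advantage of isolating ``equality'' as a reusable primitive, but for the purpose of this lemma your argument is both shorter and sharper. Your handling of the rational-output issue via Lemma~\ref{lem:duplicate} is also fine; the paper leaves that detail implicit.
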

\begin{proof}
  We first argue that a binary equality can be implemented.
  Consider a $2$-stretch with the edge interaction $y$ and external field $-1$.
  It is easy to calculate that the (effective) interaction matrix is 
  $\left[\begin{smallmatrix}
    y^2-1 & 0 \\ 0 & 1-y^2
  \end{smallmatrix}\right]$. 
  Then do a $2$-thickening.
  The resulting matrix is 
  $\left[\begin{smallmatrix}
    (y^2-1)^2 & 0 \\ 0 & (1-y^2)^2
  \end{smallmatrix}\right]$.
  Up to a constant of $(y^2-1)^2$ this is equality.

  Suppose $G=(V,E)$ is an input to \CNonzeroNIsing{\cst}{\ybeta}.
  We introduce a new vertex $v'$ for every vertex $v\in V$.
  Connect $v$ and $v'$ via this equality gadget, that is, first a $2$-stretch and then a $2$-thickening.
  Hence the external field on $v$ is cancelled with this construction.
  The reduction follows.
\end{proof}

Next we consider the case when a real edge interaction can be implemented. 
If the norm of the interaction is less than $1$, then we can cancel out the external field.

\begin{lemma}
  \label{lem:field:<1norm}
  Let $\cst>1$ and $\cst'>1$.
  Let $\ybeta$ and $z$ be two roots of unity and $z\neq \pm 1$.
  Suppose some real number $w\in (-1,1)$ as an edge interaction is implementable for the Ising model with edge interaction $\ybeta$ and external field $z$.
  Then we have \CNonzeroNIsing{\cst}{\ybeta} \Reduce{T} \CNonzeroNIsing{(\cst \cst')}{\ybeta,z}.
\end{lemma}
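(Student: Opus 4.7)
The plan is to use the implementable real weight $w\in(-1,1)$ as an almost-zero Ising interaction so as to approximately cancel the external field at every vertex of~$G$. Given an input $G=(V,E)$ with $n=|V|$ and $m=|E|$, the construction is: pick a positive integer $k$ (to be chosen polynomial in $n+m$), let $\Upsilon_k$ be the $k$-thickening of the given $w$-implementation (so $\Upsilon_k$ implements the real Ising weight $w^k$ between its two distinguished vertices and contributes a computable normalization $C^k$ from its internals), and form $G^*$ by attaching, to each $v\in V$, a new leaf vertex $v'$ joined to $v$ through a copy of $\Upsilon_k$. The graph $G^*$ is then handed to the oracle for $\CNonzeroNIsing{(\cst\cst')}{\ybeta,z}$.

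Summing first over the gadget internals and then over the leaf spin $\sigma(v')$ at each $v$ yields a per-vertex factor of $C^k(w^k+z)$ when $\sigma(v)=0$ and $C^k z(1+w^k z)$ when $\sigma(v)=1$, where the leading $z$ in the second case is the intrinsic field at $v$ itself. Factoring out the common $C^k(w^k+z)$ from every vertex gives
\[Z_{\mathrm{Ising}}(G^*;\ybeta,z)\;=\;C^{kn}(w^k+z)^{n}\,Z_{\mathrm{Ising}}(G;\ybeta,\tilde z),\qquad \tilde z=\frac{z(1+w^k z)}{w^k+z}.\]
The algebraic identity $\tilde z-1 = w^k(z^2-1)/(w^k+z)$, together with $|z|=1$ and $z\neq\pm 1$, yields $|\tilde z-1|=O(|w|^k)$, which decays exponentially in~$k$.

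Expanding $Z(G;\ybeta,\tilde z)-Z(G;\ybeta,1)=\sum_\sigma \ybeta^{m(\sigma)}(\tilde z^{n_1(\sigma)}-1)$ and using $|\tilde z^j-1|\le j\,|\tilde z-1|\max(1,|\tilde z|)^{j-1}$ gives the additive bound $|Z(G;\ybeta,\tilde z)-Z(G;\ybeta,1)|=O(n\,4^n\,|w|^k)$. Lemma~\ref{lem:lowerbound:field}, applied at $z=1$, supplies the exponential lower bound $|Z(G;\ybeta,1)|\ge C_0^{-m}$ in the non-zero case, with $C_0>1$ depending only on~$\ybeta$. Choosing $k$ of order $m+n$ makes the additive error much smaller than $C_0^{-m}$, so that $|Z(G;\ybeta,\tilde z)|$ approximates $|Z(G;\ybeta,1)|$ within any preassigned multiplicative factor close to~$1$. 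Dividing the oracle's $\cst\cst'$-approximation of $|Z(G^*;\ybeta,z)|$ by the known explicit factor $|C^{kn}(w^k+z)^n|$ therefore produces an approximation of $|Z(G;\ybeta,1)|$ within factor $\cst\cst'\cdot(1+o(1))$, and a constant-many disjoint copies amplification (as in the proof of Lemma~\ref{lem:duplicate}) sharpens the accuracy to the target~$\cst$. In the zero case there is nothing to check since $\CNonzeroNIsing{\cst}{\ybeta}$ accepts any rational output.

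The main difficulty is the multiplicative-error requirement: the absolute error $|Z(G;\ybeta,\tilde z)-Z(G;\ybeta,1)|$ must be compared against $|Z(G;\ybeta,1)|$ itself, which could be exponentially small. This is precisely where the transcendental-number argument of Section~\ref{sec:LB} is essential, since Lemma~\ref{lem:lowerbound:field} provides a base-$C_0$ exponential lower bound depending only on~$\ybeta$; without such a bound, one could not a priori choose a polynomial number of thickenings guaranteeing that the small perturbation is negligible compared to the true value.
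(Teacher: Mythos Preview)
Your proposal is correct and follows essentially the same approach as the paper: attach a pendant vertex to each $v\in V$ through a (thickened) copy of the $w$-gadget so that the effective external field at $v$ becomes $\tilde z=z(1+w^kz)/(w^k+z)\to 1$, bound the additive perturbation $|Z(G;\ybeta,\tilde z)-Z(G;\ybeta)|$ by an exponential in $n$ times $|w|^k$, and invoke the algebraic lower bound on $|Z(G;\ybeta)|$ to turn this into a multiplicative guarantee. The paper treats the case $w=0$ separately (the gadget is then an exact inequality and cancels the field with no error), whereas your formula already handles it since $\tilde z=1$ identically when $w=0$; the paper also uses $2t$ rather than $k$ thickenings and cites Lemma~\ref{lem:lowerbound} instead of Lemma~\ref{lem:lowerbound:field} at $z=1$, but these are cosmetic. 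Your explicit tracking of the gadget normalisation $C^k$ and the final disjoint-copies amplification to hit the exact target factor~$\cst$ are in fact slightly more careful than the paper's write-up.
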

\begin{proof}
  Let $G=(V,E)$ be an input to \CNonzeroNIsing{\cst}{\ybeta}.
  Assume $\ZIsing{G}{\ybeta}\neq 0$ as otherwise we are done.
  Suppose $|V|=n$, $|E|=m$, and $V=\{v_i|1\leq i\leq n\}$.

  Suppose $w=0$, which means we can implement inequality (see the remark above Lemma \ref{lem:field:-1}).
  For each vertex $v_i$, we introduce a new vertex $v_i'$ and connect $v_i$ and $v_i'$ by the inequality.
  It is easy to verify that if $v_i$ is assigned $0$, the weight from $v_i$ and $v_i'$ together is $z$;
  when $v_i$ is assigned $1$, the weight is also $z$.
  Hence the external field is effectively cancelled and the reduction follows.

  Otherwise assume $w\neq 0$, that is $w\in(-1,0)\cup(0,1)$.
  For each vertex $v_i$, we introduce a new vertex $v_i'$, and add $2t$ many new edges between $v_i$ and $v_i'$,
  where $t$ is a positive integer which we will choose later.
  By assumption we can implement the edge interaction $w$ and we put it on all new edges.
  Let $V'=\{v_i'|1\leq i\leq n\}$ and we get a new graph $G'=(V\cup V',E')$.
  
  For each vertex $v_i$, the contribution of $v_i$ and $v_i'$ (to the partition function) together is $w^{2t}+z$ when $v_i$ is assigned $0$ 
  and $z(1+w^{2t}z)$ when $v_i$ is assigned $1$.
  Let $\lambda=\tfrac{z(1+w^{2t}z)}{w^{2t}+z}$.
  Notice that $w^{2t}+z\neq 0$ as $|w|<1=|z|$.
  We have 
  \begin{align*}
	\ZIsing{G'}{y,z} = (w^{2t}+z)^n\sum_{\sigma:V\rightarrow\{0,1\}}y^{m(\sigma)}\lambda^{n_1(\sigma)},
  \end{align*}
  where $m(\sigma)$ is the number of monochromatic edges in $E$ under $\sigma$
  and $n_1(\sigma)$ is the number of vertices in $V$ that are assigned $1$.

  Let $Z:=\left|\tfrac{\ZIsing{G'}{y,z}}{(w^{2t}+z)^n}-\ZIsing{G}{\ybeta}\right|$.
  We want to show that $Z$ is exponentially small.
  Apply the triangle inequality:
  \begin{align}\label{eqn:nega:Z}
    |Z| & =\left|\sum_{\sigma:V\rightarrow\{0,1\}}y^{m(\sigma)}(\lambda^{n_1(\sigma)}-1)\right|
	\leq\sum_{\sigma:V\rightarrow\{0,1\}}\left|y^{m(\sigma)}(\lambda^{n_1(\sigma)}-1)\right|\nonumber\\
	& = \sum_{\sigma:V\rightarrow\{0,1\}}\left|\lambda^{n_1(\sigma)}-1\right|
	= \sum_{j=0}^n{n \choose j}\left|\lambda^j-1\right|,
  \end{align}
  where we used the fact that $|y|=1$.
  Let $\alpha=\lambda-1=\tfrac{z(1+w^{2t}z)}{w^{2t}+z}-1=\tfrac{w^{2t}(z^2-1)}{w^{2t}+z}$.
  As $z^2-1\neq 0$ and $w^{2t}+z\neq 0$, $|\alpha|$ is decreasing exponentially in $t$.
  We may pick a positive integer $t=O(\log n)$ such that $n e |\alpha| < 1$.
  Applying the triangle inequality again for each $0\leq j\leq n$, we get
  \begin{align}\label{eqn:nega:lambda}
	|\lambda^j-1| &\ =\ |\sum_{l=1}^j{j\choose l}\alpha^l|
	 \ \leq\ \sum_{l=1}^j{j\choose l}|\alpha^l|\nonumber\\
	&\ =\ (|\alpha|+1)^j-1
	 \ \leq\ (|\alpha|+1)^n-1\nonumber\\
	&\ =\ \sum_{l=1}^n{n\choose l}|\alpha|^l
	 \ \leq\ \sum_{l=1}^n\left( \frac{n e |\alpha|}{l} \right)^l\nonumber\\
    &\leq n^2 e |\alpha|,
  \end{align}
  as $\left(\tfrac{ne|\alpha|}{l}\right)^l$ is decreasing in $l$.
  Plugging \eqref{eqn:nega:lambda} into \eqref{eqn:nega:Z} we have
  \begin{align}\label{eqn:nega:Z:estimate}
	|Z|\leq \sum_{j=0}^n{n \choose j} n^2 e |\alpha| = e 2^n n^2 |\alpha|.
  \end{align}
  Since $\ZIsing{G}{\ybeta}\neq 0$, by Lemma~\ref{lem:lowerbound}, there exists a constant $C_\ybeta>1$ such that $|\ZIsing{G}{\ybeta}|>C_\ybeta^{-m}$.
  Since $|\alpha|$ is decreasing exponentially in $t$, by \eqref{eqn:nega:Z:estimate},
  we may pick an integer $t$ that is polynomial in $n$ (and sufficiently large with respect to~$ \cst'$) such that
  \begin{align}
    |Z|< \frac{\cst'-1}{\cst'} C_\ybeta^{-m} < \frac{\cst'-1}{\cst'} |\ZIsing{G}{\ybeta}|.
	\label{eqn:nega:add:error}
  \end{align}
  By the definition of $|Z|$ and again the triangle inequality we get
  \begin{align*}
	\frac{1}{\cst'}  =
	1 - \frac{\cst'-1}{\cst'} \leq
	\frac{|\ZIsing{G'}{\ybeta,z}|}{|w^{2t}+z|^n|\ZIsing{G}{\ybeta}|} \leq 
	1 + \frac{\cst'-1}{\cst'} \leq
	 \cst'.  
   \end{align*}
  This finishes the proof.
\end{proof}

A similar proof works when the implementable real field has a larger than $1$ norm.
Basically when this is the case we may power the external field $z$.
If $z$ is a root of unity then we could power it to $1$.

\begin{lemma}
  \label{lem:field:>1norm}
  Let $\cst>1$ and $\cst'>1$.
  Let $\ybeta$ and $z$ be two roots of unity and $z\neq \pm 1$.
  Suppose some real number $w\in (-\infty,-1)\cup(1,\infty)$ as an edge interaction is implementable 
  for the Ising model with edge interaction $\ybeta$ and external field $z$.
  Then we have \CNonzeroNIsing{\cst}{\ybeta,z^r} \Reduce{T} \CNonzeroNIsing{(\cst \cst')}{\ybeta,z} for any positive integer $r$.
\end{lemma}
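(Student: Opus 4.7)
The plan is to adapt the argument of Lemma~\ref{lem:field:<1norm} to the regime $|w|>1$, where now $w^{2t}\to\infty$ rather than $0$. Given an input $G=(V,E)$ to $\CNonzeroNIsing{\cst}{\ybeta,z^r}$ with $|V|=n$ and $|E|=m$, construct $G'$ by attaching, to each vertex $v_i\in V$, exactly $r-1$ fresh ``leaf'' vertices $v_i^{(1)},\dots,v_i^{(r-1)}$, each joined to $v_i$ by $2t$ parallel edges of implementable weight $w$ (using Section~\ref{sec:shiftdef}). Summing out the leaves in $\ZIsing{G'}{\ybeta,z}$ and keeping the original field $z$ on $v_i$, the effective external field on $v_i$ becomes
\[
\lambda_r \;=\; z\left(\frac{1+w^{2t}z}{w^{2t}+z}\right)^{r-1},
\]
so that
\[
\ZIsing{G'}{\ybeta,z}\;=\;(w^{2t}+z)^{n(r-1)}\sum_{\sigma:V\to\{0,1\}}\ybeta^{m(\sigma)}\lambda_r^{n_1(\sigma)}.
\]

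Since $|w|>1$, setting $\beta:=(1+w^{2t}z)/(w^{2t}+z)$ gives $\beta-z=(1-z^2)/(w^{2t}+z)$, whose modulus decays like $|w|^{-2t}$ (note $w^{2t}+z\neq 0$ because $w^{2t}$ is real whereas $z$ is non-real). Consequently $\alpha:=\lambda_r-z^r=z(\beta^{r-1}-z^{r-1})$ is also exponentially small in $t$ for fixed $r$. Applying verbatim the triangle-inequality calculation of \eqref{eqn:nega:Z}--\eqref{eqn:nega:Z:estimate} in Lemma~\ref{lem:field:<1norm}, and using $|\ybeta|=|z^r|=1$, yields
\[
\left|\frac{\ZIsing{G'}{\ybeta,z}}{(w^{2t}+z)^{n(r-1)}}-\ZIsing{G}{\ybeta,z^r}\right|\;\leq\; e\cdot 2^n n^2 |\alpha|,
\]
once $t$ is large enough to make $ne|\alpha|<1$.

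Finally, if $\ZIsing{G}{\ybeta,z^r}\neq 0$, Lemma~\ref{lem:lowerbound:field} supplies a constant $C>1$ (depending only on $\ybeta$ and $z$) with $|\ZIsing{G}{\ybeta,z^r}|\geq C^{-m}$; choosing $t$ polynomial in $n,m$ to drive the absolute error below $\frac{\cst'-1}{\cst'}C^{-m}$ turns it into a multiplicative factor of at most $\cst'$. Combined with the factor-$\cst\cst'$ oracle applied to $G'$, and dividing the oracle output by $|w^{2t}+z|^{n(r-1)}$ (a quantity in algebraic numbers, computable to arbitrary precision in polynomial time), this delivers a valid output for $\CNonzeroNIsing{\cst}{\ybeta,z^r}$ on $G$, with the same constant re-absorption used in Lemma~\ref{lem:field:<1norm}; the case $\ZIsing{G}{\ybeta,z^r}=0$ is automatic since $\CNonzeroNIsing$ allows arbitrary output. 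The only genuinely new step is identifying the correct number of leaves (namely $r-1$) so that the $|w|>1$ limit of the effective field is exactly $z^r$; once this is in place, the remaining analysis is routine and essentially parallels the previous lemma.
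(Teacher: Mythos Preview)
Your proposal is correct and follows essentially the same approach as the paper: attach $r-1$ fresh leaves to each vertex via $2t$ parallel $w$-edges, compute the resulting effective field $\lambda_r$, and use the exponential lower bound of Lemma~\ref{lem:lowerbound:field} to convert the additive error into a multiplicative factor of $\cst'$.

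The only (minor) divergence is in how you bound $|\lambda_r^j-z^{rj}|$. You invoke the binomial-expansion estimate of \eqref{eqn:nega:lambda} verbatim, which works here because $|z^r|=1$ kills the extra $(z^r)^{j-l}$ factors and reproduces the bound $(1+|\alpha|)^j-1\leq n^2e|\alpha|$; the paper instead telescopes $\lambda_r^j-z^{rj}=(\lambda_r-z^r)\sum_{l}\lambda_r^l z^{r(j-1-l)}$ and uses $|\lambda_r|\leq 1+|\alpha|<2$, arriving at $|Z|<4^n|\alpha|$ rather than your $e\cdot 2^n n^2|\alpha|$. Both bounds are equally adequate for the purpose, so this is a cosmetic difference, not a substantive one.
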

\begin{proof}
  Let $G=(V,E)$ be an input to \CNonzeroNIsing{\cst}{\ybeta,z^r}.
  Assume that $\ZIsing{G}{\ybeta,z^r}\neq 0$ as otherwise we are done.
  Suppose $|V|=n$, $|E|=m$, and $V=\{v_i|1\leq i\leq n\}$.

  For each vertex $v_i$, we introduce $r-1$ many new vertices $v_{i,j}$, and add $2t$ many new edges between $v_i$ and each $v_{i,j}$,
  where $j\in[r-1]$ and $t$ is a positive integer which we will choose later.
  By assumption we can implement the edge interaction $w$ and we put it on all new edges.
  Let $V'=\{v_{i,j}|1\leq i\leq n, 1\leq j\leq r-1\}$ and we get a new graph $G'=(V\cup V',E')$.
  
  For each vertex $v_i$, the contribution of $v_i$ and all $v_{i,j}$ combined is $\left(w^{2t}+z\right)^{r-1}$ when $v_i$ is assigned $0$ 
  and $z\left(1+w^{2t}z\right)^{r-1}$ when $v_i$ is assigned $1$.
  Let $\lambda=\tfrac{z\left(1+w^{2t}z\right)^{r-1}}{\left(w^{2t}+z\right)^{r-1}}$.
  Notice that $w^{2t}+z\neq 0$ as $|w|>1=|z|$.  
  We have 
  \begin{align*}
    \ZIsing{G'}{y,z} =\left(w^{2t}+z\right)^{n(r-1)}\sum_{\sigma:V\rightarrow\{0,1\}}y^{m(\sigma)}\lambda^{n_1(\sigma)},
  \end{align*}
  where $m(\sigma)$ is the number of monochromatic edges in $E$ under $\sigma$
  and $n_1(\sigma)$ is the number of vertices in $V$ that are assigned $1$.

  Let $Z:=\left|\tfrac{\ZIsing{G'}{y,z}}{\left(w^{2t}+z\right)^{n(r-1)}}-\ZIsing{G}{\ybeta,z^r}\right|$.
  As the previous proof we show that $Z$ is exponentially small.
  Apply  the triangle inequality:
  \begin{align}\label{eqn:nega:Z:norm>1}
    |Z| & =\left|\sum_{\sigma:V\rightarrow\{0,1\}}y^{m(\sigma)}(\lambda^{n_1(\sigma)}-z^{rn_1(\sigma)})\right|
	\leq\sum_{\sigma:V\rightarrow\{0,1\}}\left|y^{m(\sigma)}(\lambda^{n_1(\sigma)}-z^{rn_1(\sigma)})\right|\nonumber\\
	& = \sum_{\sigma:V\rightarrow\{0,1\}}\left|\lambda^{n_1(\sigma)}-z^{rn_1(\sigma)}\right|
    = \sum_{j=0}^n{n \choose j}\left|\lambda^j-z^{rj}\right|,
  \end{align}
  where we used the fact that $|y|=1$.
  Let $\alpha=\lambda-z^r=\tfrac{z\left(1+w^{2t}z\right)^{r-1}}{\left(w^{2t}+z\right)^{r-1}}-z^r
  =z\left( (z+\mu)^{r-1}-z^{r-1}\right)$,
  where $\mu=\tfrac{1+w^{2t}z}{w^{2t}+z}-z=\tfrac{1-z^2}{w^{2t}+z}\neq 0$.
  As $z^2-1\neq 0$ and $|w|>1$, $|\mu|$ decreases exponentially in $t$.
  Pick a large enough integer $t$ so that $|\mu|<1$.
  Hence $|\alpha|=|z||(z+\mu)^{r-1}-z^{r-1}|=|\sum_{j=1}^{r-1}{r-1\choose j}\mu^jz^{r-1-j}|\leq \sum_{j=1}^{r-1}{r-1\choose j}|\mu^j|< |\mu|2^{r-1}$ 
  by the triangle inequality.
  As $|\mu|$ decreases exponentially in $t$, so does $|\alpha|$.

  Notice that $|\lambda|=|z^r+\alpha|\leq |z|^r+|\alpha|=1+|\alpha|$.
  Pick $t$ large so that $|\alpha|<1$.
  Applying the triangle inequality again for each $0\leq j\leq n$, we get
  \begin{align}\label{eqn:nega:lambda:norm>1}
    |\lambda^j-z^{rj}| &\ =\ \left|\lambda-z^r\right| \left|\sum_{l=0}^{j-1}\lambda^lz^{r(j-1-l)}\right|
    \ \leq\ |\alpha|\left(\sum_{l=0}^{j-1}\left|\lambda^lz^{r(j-1-l)}\right|\right)\nonumber\\
	&\ =\ |\alpha|\left(\sum_{l=0}^{j-1}\left|\lambda\right|^l\right)
	 \ \leq\ |\alpha|\left(\sum_{l=0}^{j-1}\left(1+|\alpha|\right)^l\right)\nonumber\\
     &\ <\ |\alpha|\left(\sum_{l=0}^{j-1}2^l\right)\ <\ 2^j|\alpha| \leq 2^n|\alpha|,
  \end{align}
  as $|z|=1$.
  Plugging \eqref{eqn:nega:lambda:norm>1} into \eqref{eqn:nega:Z:norm>1} we have
  \begin{align}\label{eqn:nega:Z:estimate:norm>1}
	|Z|< \sum_{j=0}^n{n \choose j} 2^n |\alpha| = 4^n |\alpha|.
  \end{align}
  Since $\ZIsing{G}{\ybeta,z^r}\neq 0$, 
  by Lemma~\ref{lem:lowerbound:field}, 
  there exists a constant $C_{\ybeta,z^r}>1$ such that $|\ZIsing{G}{\ybeta,z^r}|>C_{\ybeta,z^r}^{-|E|}$.
  Since $|\alpha|$ is decreasing exponentially in $t$, by \eqref{eqn:nega:Z:estimate:norm>1},
  we may pick an integer $t$ that is polynomial in $n$ (and sufficiently large with respect to~$ \cst'$) such that
  \begin{align}
    |Z|< \frac{\cst'-1}{\cst'} C_{\ybeta,z^r}^{-|E|} < \frac{\cst'-1}{\cst'} |\ZIsing{G}{\ybeta,z^r}|.
	\label{eqn:nega:add:error123}
  \end{align}
  By the definition of $|Z|$ and again the triangle inequality we get
  \begin{align*}
	\frac{1}{\cst'}  =
	1 - \frac{\cst'-1}{\cst'} \leq
    \frac{|\ZIsing{G'}{\ybeta,z}|}{|w^{2t}+z|^{n(r-1)}|\ZIsing{G}{\ybeta,z^r}|} \leq 
	1 + \frac{\cst'-1}{\cst'} \leq
	 \cst'.  
   \end{align*}
   This finishes the proof.
\end{proof}

We will show how to implement a real edge interaction in the next lemma.
Unless the norm of the new interaction is $1$,
the hardness holds due to the previous two lemmas.
The failure cases are indeed polynomial-time computable.

\begin{lemma}  \label{lem:root:field:general}
  Let $\cst>1$.
  Let $\ybeta$ and $z$ be two roots of unity such that $y\not\in\{1,-1,\I,-\I\}$ and $z\not\in\{1,-1\}$.
  Then \CNonzeroNIsing{\cst}{\ybeta,z} is \numP-hard.
\end{lemma}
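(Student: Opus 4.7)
The plan is to use gadget constructions in the Ising $(y,z)$-model to implement a real Ising edge interaction $w \in \RR$ with $|w| \notin \{0,1\}$, and then invoke Lemma~\ref{lem:field:<1norm} (if $|w|<1$) or Lemma~\ref{lem:field:>1norm} with $r$ equal to the multiplicative order of $z$ (if $|w|>1$, so that $z^r = 1$) to obtain a Turing reduction from \CNonzeroNIsing{\cst}{y} to \CNonzeroNIsing{\cst}{y, z}. Since $y$ is a root of unity outside $\{1, -1, \I, -\I\}$, Corollary~\ref{cor:norm1:hard} gives \numP-hardness of the source, which transfers to the target.

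The central computation involves the 2-stretch with field $z$: a path $s - m - t$ of two edges of weight $y$ whose internal vertex $m$ carries field $z$. Writing $y = e^{\I\alpha}$ and $z = e^{\I\beta}$, the three distinct entries of its $2\times 2$ interaction matrix $M$ factor as
\begin{align*}
y^2 + z &= 2\cos(\alpha-\beta/2)\, e^{\I(\alpha+\beta/2)}, \\
y(1+z) &= 2\cos(\beta/2)\, e^{\I(\alpha+\beta/2)}, \\
1 + y^2 z &= 2\cos(\alpha+\beta/2)\, e^{\I(\alpha+\beta/2)},
\end{align*}
sharing a common complex phase $e^{\I(\alpha+\beta/2)}$ that peels off to leave a real symmetric residual. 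Parallel-composing $M$ with its color-swapped matrix $M^{\#}$ (where $M^{\#}_{ab}=M_{1-a,1-b}$) equalises the diagonal entries and yields a pure Ising edge of weight
\begin{align*}
w \;=\; \frac{(y^2+z)(1+y^2z)}{(y(1+z))^2} \;=\; \frac{\cos(\alpha-\beta/2)\cos(\alpha+\beta/2)}{\cos^2(\beta/2)} \;=\; \frac{\cos 2\alpha + \cos\beta}{1+\cos\beta},
\end{align*}
using product-to-sum and $2\cos^2(\beta/2) = 1+\cos\beta$. This $w$ is real, differs from $1$ (since $\cos 2\alpha \neq 1$ because $y \neq \pm 1$), and is nonzero generically.

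The main obstacle is realising the color-swapped variant $M^{\#}$ — equivalently, cancelling the asymmetric ``fictitious fields'' that the 2-stretch gadget induces at its two endpoints when it replaces an edge of the target graph. The plan here is to preprocess the target instance into a regular graph, so that the accumulated fictitious field at every vertex is uniform, and then cancel the uniform residual field (together with the now-shifted true field $z\lambda^d$) by attaching matched numbers of dangling-vertex gadgets, whose effective field-multiplier $\mu = \cos((\alpha+\beta)/2)/\cos((\alpha-\beta)/2)$ is also real. Degenerate sub-cases where $w \in \{0, \pm 1\}$ (the last forcing $\cos 2\alpha = -1-2\cos\beta$) are handled by first $k$-thickening to replace $y$ by $y^k$; since $y$ is a root of unity of order at least $3$ and not equal to $4$, some choice of $k$ always avoids the bad angular relations with $\beta$. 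Once a real $w$ with $|w| \notin \{0,1\}$ has been implemented and all complex prefactors and field modifications have been tracked into a single known multiplicative constant between the two partition functions, norm-approximation transfers cleanly and the reduction from the first paragraph completes the proof.
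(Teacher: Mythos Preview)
Your overall plan and your formula $w=(\cos 2\alpha+\cos\beta)/(1+\cos\beta)$ match the paper exactly; the gap is in realising $M^{\#}$. You have overlooked a one-line fix, which is precisely what the paper does: since $y$ is a root of unity, some $k$-thickening gives the edge weight $y^{-1}$, and the $2$-stretch with both edges of weight $y^{-1}$ (the middle vertex still carrying field $z$) has interaction matrix $y^{-2}M^{\#}$. So $M^{\#}$ \emph{is} directly implementable up to a scalar, parallel-composing with $M$ gives a pure Ising edge of weight exactly $w$, and Lemmas~\ref{lem:field:<1norm} and~\ref{lem:field:>1norm} then apply as black boxes. For the residual degenerate case $w=-1$ (note $w=0$ is already handled inside Lemma~\ref{lem:field:<1norm} and $w=1$ would force $y=\pm1$), the paper reruns the construction with $y^{2}$ in place of $y$; simultaneous failure forces $\sin^{2}2\theta=\sin^{2}\theta$, hence $\cos^{2}(\varphi/2)=3/8$, which has no root-of-unity solution for~$z$. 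Your sketch ``some $k$-thickening avoids the bad relations'' needs a concrete argument of this kind.

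Your proposed workaround---regularise the input graph, place a single $M$ on each edge, and cancel the resulting vertex field $z\lambda^{d}$ with powers of the real dangling multiplier~$\mu$---cannot succeed as stated. Both $\lambda^{2}=\cos(\alpha+\beta/2)/\cos(\alpha-\beta/2)$ and $\mu$ are real, so $z\lambda^{d}\mu^{k}$ has argument equal to $\arg z$ plus a multiple of $\pi/2$; since $z\notin\{1,-1\}$ this is never $0$ for any root of unity $z\notin\{\pm\I\}$, and even in those two special cases matching the modulus is not addressed. Moreover, this route is not an ``implementation of $w$'' in the sense Lemmas~\ref{lem:field:<1norm} and~\ref{lem:field:>1norm} require (a two-terminal gadget whose interaction matrix is a scalar multiple of $\left[\begin{smallmatrix}w&1\\1&w\end{smallmatrix}\right]$), so even if the cancellation worked you could not invoke those lemmas as stated; you would have to reprove them with your modifications built in. You have also not explained how to regularise an arbitrary input graph without disturbing its partition function by more than a known factor.
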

\begin{proof}
  Let $\ybeta=e^{\I\theta}$ and $z=e^{\I\varphi}$ and $\theta,\varphi\in[0,2\pi)$.
  Then $\theta \not\in\{0,\pi/2,\pi,3\pi/2\}$
  and $\varphi \not\in\{0,\pi\}$.

  Since $\ybeta$ is a root of unity, there exists an integer power of $\ybeta$ that equals $\ybeta^{-1}$.
  Hence we can implement $\ybeta^{-1}$ by thickenings.
  Then we implement a real interaction $w(\theta,\varphi)$ by the following gadget.
  We replace every edge by two parallel gadgets: 
  one is a $2$-stretch with interaction $\ybeta$ (on both edges) and the other is also a $2$-stretch but with $\ybeta^{-1}$.
  Then we calculate the effective edge interaction.
  When both endpoints are assigned $0$, the contribution is $(\ybeta^2+z)(1/\ybeta^2+z)=1+z^2+z(\ybeta^2+1/\ybeta^2)$.
  When both endpoints are assigned $1$, the contribution is $(\ybeta^2z+1)(z/\ybeta^2+1)=1+z^2+z(\ybeta^2+1/\ybeta^2)$ as well.
  When one endpoint is assigned $0$ and the other $1$, the contribution is $\ybeta(1+z)\cdot (1+z)/\ybeta=(1+z)^2$.
  Hence effectively on this edge the interaction is of the Ising type and its weight is $w(\theta,\varphi)=\tfrac{1+z^2+z(\ybeta^2+1/\ybeta^2)}{(1+z)^2}$.

  We claim $w(\theta,\varphi)\in\RR$.
  This is because
  \begin{align*}
	w(\theta,\varphi) & = \frac{1+z^2+z(\ybeta^2+1/\ybeta^2)}{(1+z)^2} = 1 + \frac{z(\ybeta^2+1/\ybeta^2-2)}{(1+z)^2}\\
	& = 1 + \frac{(\ybeta-1/\ybeta)^2}{z+1/z+2} = 1 + \frac{-4\sin^2\theta}{2\cos\varphi+2}\\
	& = 1 - \frac{\sin^2\theta}{\cos^2\tfrac{\varphi}{2}}.
  \end{align*}
  Notice that $\cos\tfrac{\varphi}{2}\neq 0$ as $\varphi\neq 0,\pi$.
  If $|w|<1$, then we are done by combining Lemma~\ref{lem:field:<1norm} and Corollary~\ref{cor:norm1:hard}.
  Otherwise if $|w|>1$, the lemma follows from Lemma~\ref{lem:field:>1norm} by powering $z$ to $1$, and Corollary~\ref{cor:norm1:hard}.

  The failure case is $|w(\theta,\varphi)|=1$ and hence $\sin^2\theta=2\cos^2\tfrac{\varphi}{2}$ or $\sin\theta= 0$.
  Note that $\sin\theta= 0$ implies $\ybeta=\pm1$ which contradicts our assumption.
  It is easy to implement $\ybeta^2$, which has argument $2\theta$.
  We then repeat the construction.
  If $|w(2\theta,\varphi)|\neq1$, then it is reduced to previous cases.
  Otherwise $|w(2\theta,\varphi)|=1$, implying that $\sin^22\theta=2\cos^2\tfrac{\varphi}{2}=\sin^2\theta$ or $\sin2\theta= 0$.
  The latter case is impossible as $\theta\not\in \{0,\pi/2,\pi,3\pi/2\}$.
  Hence $\sin^22\theta=\sin^2\theta$.
  It is easy to show that $\theta\in\{\pi/3,2\pi/3,4\pi/3,5\pi/3\}$ as $\theta\neq 0,\pi$.
  Therefore $2\cos^2\tfrac{\varphi}{2}=\sin^2\theta=3/4$.
  However $\cos^2\tfrac{\varphi}{2}=3/8$ has no solution $\varphi$ that is a rational fraction of $\pi$,
  which contradicts the fact that $z$ is a root of unity.
  This finishes the proof.
\end{proof}

\begin{lemma}  \label{lem:root:field:I}
  Let $\cst>1$.
  Let $\ybeta=\pm\I$ and $z$ be a root of unity that is not one of $\{1,-1,\I,-\I\}$.
  Then \CNonzeroNIsing{\cst}{\ybeta,z} is \numP-hard.
\end{lemma}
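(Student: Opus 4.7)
The plan is to adapt the proof of Lemma~\ref{lem:root:field:general} to the present case $y=\pm\I$.  The obstacle is that $\CNonzeroNIsing{\cst}{\pm\I}$ is in $\FP$, so the strategy used there -- implement a real edge weight and then use Lemma~\ref{lem:field:<1norm} or \ref{lem:field:>1norm} to reduce to the no-field problem with the \emph{same} $y$ -- cannot be applied directly.  Instead I use the same gadget to implement a \emph{different}, real edge interaction lying in $(-1,0)$, and then invoke (a routine extension of) Lemma~\ref{lem:field:<1norm} to strip the field.  By symmetry it suffices to treat $y=\I$; write $z=e^{\I\varphi}$ with $\varphi\in[0,2\pi)\setminus\{0,\pi/2,\pi,3\pi/2\}$.

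As a preprocessing step I normalise $\varphi$ to lie in $(0,\pi/2)\cup(3\pi/2,2\pi)$.  If instead $\varphi\in(\pi/2,3\pi/2)\setminus\{\pi\}$, I attach to every vertex of the input graph a single leaf connected by an edge of weight $-1$ (implementable from $y=\I$ via a $2$-thickening, since $\I^2=-1$).  Summing over the leaf values multiplies the effective external field at each attached vertex by $-1$, giving a polynomial-time reduction $\CNonzeroNIsing{\cst}{\I,-z}\leq_{\mathrm{T}}\CNonzeroNIsing{\cst}{\I,z}$.  Since $-z$ is again a root of unity outside $\{1,-1,\I,-\I\}$, and the identity $\tan((\varphi+\pi)/2)=-\cot(\varphi/2)$ shows that the two ranges of $\varphi$ are interchanged under $z\mapsto-z$, the normalisation is without loss of generality.

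With this normalisation in force, the parallel $2$-stretch gadget from the proof of Lemma~\ref{lem:root:field:general} -- the parallel composition of a $2$-stretch of $\I$ with a $2$-stretch of $\I^{-1}=-\I$ (implementable from $\I$ via a $3$-thickening, since $\I^3=-\I$) -- produces a pure Ising edge of real weight $w=-\tan^2(\varphi/2)$ by exactly the computation given there.  Under our assumption on $\varphi$ this $w$ lies in $(-1,0)$, and the standard edge-substitution argument from Section~\ref{sec:shiftdef} gives $\CNonzeroNIsing{\cst}{w,z}\leq_{\mathrm{T}}\CNonzeroNIsing{\cst'}{\I,z}$ for an appropriate $\cst'>1$.

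It remains to show that $\CNonzeroNIsing{\cst}{w,z}$ is $\numP$-hard for $w\in(-1,0)$ algebraic real.  In the $(w,z)$-Ising model the direct edge of weight $w$ already implements a real interaction of modulus strictly less than $1$, so one is in the situation of Lemma~\ref{lem:field:<1norm}.  That lemma is stated for $y$ a root of unity, but its proof works for any algebraic $y$ with $|y|\leq 1$: the only place $|y|=1$ is invoked is a triangle-inequality step that becomes \emph{stronger} when $|y|<1$, and the partition-function lower bound cited there is Lemma~\ref{lem:lowerbound}, which is already stated for arbitrary non-zero algebraic $y$.  Hence $\CNonzeroNIsing{\cst}{w}\leq_{\mathrm{T}}\CNonzeroNIsing{\cst''}{w,z}$, and the left side is $\numP$-hard by Corollary~\ref{cor:realnegy} since $w\in(-1,0)$.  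Chaining the reductions and applying Lemma~\ref{lem:duplicate} to rescale the approximation factor gives the lemma.  The main point that would need to be checked carefully -- and is the only place where the argument deviates from the framework of Lemma~\ref{lem:root:field:general} -- is the claim that the proof of Lemma~\ref{lem:field:<1norm} really does go through for algebraic real $y$ with $|y|<1$.
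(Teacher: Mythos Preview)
Your proof is correct. The overall strategy---implement the real weight $w=-\tan^2(\varphi/2)$ via the parallel $2$-stretch gadget, then strip the field and land in Corollary~\ref{cor:realnegy}---matches the paper's. Where you diverge is in handling $|w|>1$. The paper treats $w\in(-1,0)$ and $w<-1$ separately: in the latter case it invokes Lemma~\ref{lem:field:>1norm} to replace $z$ by $z^r$ for a carefully chosen $r$ bringing the new $w$ into $(-1,0)$, with the order-$3$ case needing a further $z\mapsto -z$ step at the end. You instead observe that $z\mapsto -z$ swaps the two ranges of $\varphi$ (via $\tan((\varphi+\pi)/2)=-\cot(\varphi/2)$), so performing this substitution \emph{up front} always lands in $w\in(-1,0)$ and Lemma~\ref{lem:field:>1norm} is never needed. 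This is a genuine simplification. Your explicit justification that the proof of Lemma~\ref{lem:field:<1norm} extends to algebraic $y$ with $|y|<1$ is also more careful than the paper's ``adopt the construction''; the paper in fact relies on the same extension. One small point: the edge-substitution you cite from Section~\ref{sec:shiftdef} is stated for the field-free random-cluster model; here the middle vertices carry the field $z$, so the substitution is justified by the direct calculation in the proof of Lemma~\ref{lem:root:field:general} rather than by Section~\ref{sec:shiftdef} per se.
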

\begin{proof}
  Let $\ybeta=e^{\I\theta}$ and $z=e^{\I\varphi}$ where $\theta,\varphi\in[0,2\pi)$.
  As $\ybeta=\pm\I$, we have $\theta\in\{\pi/2,3\pi/2\}$ and $z\not\in\{1,-1,\I,-\I\}$ implies $\varphi\not\in\{0,\pi/2,\pi,3\pi/2\}$.
  We use the same $w(\theta,\varphi)\in\RR$ construction as in the proof of Lemma~\ref{lem:root:field:general}.
  If $|w(\theta,\varphi)|=0$ then $\cos^2\tfrac{\varphi}{2}=1$.
  This implies $\varphi/2\in\{0,\pi\}$ contradicting $\varphi\not\in\{0,\pi/2,\pi,3\pi/2\}$.
  If $|w(\theta,\varphi)|=1$ then $\cos^2\tfrac{\varphi}{2}=1/2$.
  This implies $\varphi/2\in\{\pi/4,3\pi/4,5\pi/4,7\pi/4\}$ also contradicting $\varphi\not\in\{0,\pi/2,\pi,3\pi/2\}$.
  Hence we can implement a real edge interaction $w(\theta,\varphi)$ such that $|w(\theta,\varphi)|\neq 0,1$.
  
  Note that $w(\theta,\varphi)=1 - \frac{\sin^2\theta}{\cos^2\tfrac{\varphi}{2}}=1 - 1/\cos^2\tfrac{\varphi}{2}<0$.
  If $w(\theta,\varphi)\in(-1,0)$, then we adopt the construction in the proof of Lemma~\ref{lem:field:<1norm} to cancel the external field of $z$.
  Hence we can reduce \CNonzeroNIsing{\cst}{w(\theta,\varphi)} to \CNonzeroNIsing{(\cst\cst')}{\ybeta,z} for any constant $\cst'>1$.
  The \numP-hardness follows from Corollary~\ref{cor:realnegy}.

  Otherwise $w(\theta,\varphi)\in(-\infty,-1)$, then we use Lemma~\ref{lem:field:>1norm} to power up the external field of $z$.
  Instead of powering $z$ to $1$, we would like to pick a positive integer $r$ such that $w(\theta,r \varphi)\in(-1,0)$, which reduces to the previous case.
  This is equivalent to $\tfrac{1}{2}<\cos^2\tfrac{r\varphi}{2}<1$, which, in turn, is equivalent to $r\varphi\in(0,\pi/2)\cup(3/2\pi,2\pi)$ modulo $2\pi$.
  Suppose $\varphi=\tfrac{2a\pi}{b}$ where $a,b$ are two co-prime positive integers and $b=3$ or $b\ge 5$ since $z\not\in\{1,-1,\I,-\I\}$.
  Assume $b\ge 5$ first. 
  As $a,b$ are co-prime, there exist two integers $l_1$ and $l_2$ such that $l_1 a+l_2 b=1$ and $l_1>0$.
  Let $r=l_1$ and we have $r\varphi/2=\tfrac{2a l_1 \pi}{b}=\tfrac{2 \pi}{b}-2 l_2 \pi$.
  This choice of $r$ meets the requirement since $\tfrac{2 \pi}{b}\in(0,\pi/2)$.

  The case left is when $b=3$, in which case $\varphi\in\{2\pi/3,4\pi/3\}$.
  We reduce \CNonzeroNIsing{\cst}{\ybeta,-z} to \CNonzeroNIsing{\cst}{\ybeta,z}.
  This suffices due to $\arg(-z)=\varphi+\pi$, which is one of the previous cases.

  Suppose $G=(V,E)$ is an input to \CNonzeroNIsing{\cst}{\ybeta,-z}.
  Introduce a new vertex $v'$ for each vertex $v\in V$.
  Since $\ybeta=\pm\I$, there exists a positive integer $t$ such that $\ybeta^t=-1$.
  Connect $v$ and $v'$ by $t$ many new edges.
  We can calculate that the effective field of $v$ in the new graph (with respect to interaction $y$ and field $z$) is $\tfrac{z-z^2}{z-1}=-z$.
  This finishes our proof.
\end{proof}

We can now prove our main theorem about this model.
 
\setcounter{counter:save}{\value{theorem}}
\setcounter{theorem}{\value{counter:field}}
\begin{theorem}
 Let $\cst>1$.
 Let $\ybeta$ and $z$ be two roots of unity.
 Then the following holds:
 \begin{enumerate}
   \item If $\ybeta=\pm\I$ and $z\in\{1,-1,\I,-\I\}$, or  $\ybeta=\pm 1$, then \ZIsing{-}{\ybeta,z} can be computed exactly in polynomial time.
   \item Otherwise \CNonzeroNIsing{\cst}{\ybeta,z} is \numP-hard.
 \end{enumerate}
\end{theorem}
\setcounter{theorem}{\value{counter:save}}
\begin{proof}
  If $y=\pm 1$, then we can replace every edge interaction by two unary constraints.
  Hence the problem is tractable for any external field.
  Consider next the case where $y=\pm\I$. If $z\in\{1,-1,\I,-\I\}$, the algorithm is from \cite{CLX14}.
  Otherwise, the hardness is from Lemma~\ref{lem:root:field:I}.  
  Finally,  for the rest of the proof, we consider the case where $y\not\in\{1,-1,\I,-\I\}$.
  For $z=1$, the hardness follows from Corollary \ref{cor:norm1:hard}.
  For $z=-1$, the hardness is obtained by combining Lemma~\ref{lem:field:-1} and Corollary \ref{cor:norm1:hard}.
     Otherwise $z\not\in\{1,-1\}$, and the hardness follows from Lemma~\ref{lem:root:field:general}.
\end{proof}

\section*{Acknowledgements}

We thank Dan Shepherd and Mark Jerrum for useful discussions.
 
\begin{filecontents*}{\jobname.bib}

@article{GL10,
  author={Joseph Geraci and Daniel A Lidar},
  title={Classical {I}sing model test for quantum circuits},
  journal={New Journal of Physics},
  volume={12},
  number={7},
  pages={075026},
  url={http://stacks.iop.org/1367-2630/12/i=7/a=075026},
  year={2010}
}

@article{DDVM11,
  author = {De las Cuevas, G. and D\"ur, W. and Van den Nest, M. and Martin-Delgado, M. A.},
  title={Quantum algorithms for classical lattice models},
  journal={New Journal of Physics},
  volume={13},
  number={9},
  pages={093021},
  url={http://stacks.iop.org/1367-2630/13/i=9/a=093021},
  year={2011},
}

@book{ NC,
    author = {Nielsen, Michael A.\ and Chuang, Isaac L.},
    day = {01},
    edition = {1},
    howpublished = {Paperback},
    isbn = {0521635039},
    month = jan,
    posted-at = {2007-09-03 10:03:03},
    priority = {5},
    publisher = {Cambridge University Press},
    title = {{Quantum Computation and Quantum Information (Cambridge Series on Information and the Natural Sciences)}},
    year = {2004}
}
@article {Aaronson,
    AUTHOR = {Aaronson, Scott},
     TITLE = {Quantum computing, postselection, and probabilistic
              polynomial-time},
   JOURNAL = {Proc. R. Soc. Lond. Ser. A Math. Phys. Eng. Sci.},
  FJOURNAL = {Proceedings of The Royal Society of London. Series A.
              Mathematical, Physical and Engineering Sciences},
    VOLUME = {461},
      YEAR = {2005},
    NUMBER = {2063},
     PAGES = {3473--3482},
      ISSN = {1364-5021},
   MRCLASS = {81P68 (68Q05 68Q15)},
  MRNUMBER = {2171273 (2006g:81028)},
MRREVIEWER = {Leong-Chuan Kwek},
       DOI = {10.1098/rspa.2005.1546},
       URL = {http://dx.doi.org/10.1098/rspa.2005.1546},
}

@article{BosonSampling,
  author    = {Scott Aaronson and
               Alex Arkhipov},
  title     = {The Computational Complexity of Linear Optics},
  journal   = {Theory of Computing},
  volume    = {9},
  pages     = {143--252},
  year      = {2013},
}

@article {FLW,
    AUTHOR = {Freedman, Michael H. and Larsen, Michael and Wang, Zhenghan},
     TITLE = {A modular functor which is universal for quantum computation},
   JOURNAL = {Comm. Math. Phys.},
  FJOURNAL = {Communications in Mathematical Physics},
    VOLUME = {227},
      YEAR = {2002},
    NUMBER = {3},
     PAGES = {605--622},
      ISSN = {0010-3616},
     CODEN = {CMPHAY},
   MRCLASS = {57R56 (68P05 81P68)},
  MRNUMBER = {1910833 (2003i:57047)},
MRREVIEWER = {R{\u{a}}zvan Gelca},
       DOI = {10.1007/s002200200645},
       URL = {http://dx.doi.org/10.1007/s002200200645},
}
@article {FKLW,
    AUTHOR = {Freedman, Michael H. and Kitaev, Alexei and Larsen, Michael J.
              and Wang, Zhenghan},
     TITLE = {Topological quantum computation},
      NOTE = {Mathematical challenges of the 21st century (Los Angeles, CA,
              2000)},
   JOURNAL = {Bull. Amer. Math. Soc. (N.S.)},
  FJOURNAL = {American Mathematical Society. Bulletin. New Series},
    VOLUME = {40},
      YEAR = {2003},
    NUMBER = {1},
     PAGES = {31--38},
      ISSN = {0273-0979},
     CODEN = {BAMOAD},
   MRCLASS = {57R56 (68Q05 81P68)},
  MRNUMBER = {1943131 (2003m:57065)},
       DOI = {10.1090/S0273-0979-02-00964-3},
       URL = {http://dx.doi.org/10.1090/S0273-0979-02-00964-3},
}

@ARTICLE {ICKB,
author={Iblisdir, S. and Cirio, M. and Kerans, O. and Brennen, G. K. },
journal={Annals of Physics},
title={Low depth quantum circuits for {Ising} models},
year={2014},
month={January},
volume={340},
number={205},
pages={205-251},
}

@ARTICLE{MFI,
   author = {{Matsuo}, A. and {Fujii}, K. and {Imoto}, N.},
   title = "{A quantum algorithm for additive approximation of {I}sing partition functions}",
   journal = {Phys. Rev. A},
   year = {2014},
   volume = {90},
   pages = {022304},
}

@article{Shepherd10,
  author    = {Dan Shepherd},
  title     = {Binary Matroids and Quantum Probability Distributions},
  journal   = {CoRR},
  volume    = {abs/1005.1744},
  year      = {2010},
  ee        = {http://arxiv.org/abs/1005.1744},
  bibsource = {DBLP, http://dblp.uni-trier.de}
}

@article{DA,
  author={Dorit Aharonov and Itai Arad},
  title={The {BQP}-hardness of approximating the {J}ones polynomial},
  journal={New Journal of Physics},
  volume={13},
  number={3},
  pages={035019},
  url={http://stacks.iop.org/1367-2630/13/i=3/a=035019},
  year={2011}
}

@article{Kuperberg,
  author    = {Greg Kuperberg},
  title     = {How Hard Is It to Approximate the {J}ones Polynomial?},
  journal   = {Theory of Computing},
  volume    = {11},
  pages     = {183--219},
  year      = {2015},
  url       = {http://dx.doi.org/10.4086/toc.2015.v011a006},
  doi       = {10.4086/toc.2015.v011a006},
  timestamp = {Mon, 08 Jun 2015 15:04:02 +0200},
  biburl    = {http://dblp.uni-trier.de/rec/bib/journals/toc/Kuperberg15},
  bibsource = {dblp computer science bibliography, http://dblp.org}
}

@article {Thistlethwaite,
    AUTHOR = {Thistlethwaite, {Morwen B.}},
     TITLE = {A spanning tree expansion of the {J}ones polynomial},
   JOURNAL = {Topology},
  FJOURNAL = {Topology. An International Journal of Mathematics},
    VOLUME = {26},
      YEAR = {1987},
    NUMBER = {3},
     PAGES = {297--309},
      ISSN = {0040-9383},
     CODEN = {TPLGAF},
   MRCLASS = {57M25},
  MRNUMBER = {899051 (88h:57007)},
MRREVIEWER = {G. Burde},
       DOI = {10.1016/0040-9383(87)90003-6},
       URL = {http://dx.doi.org/10.1016/0040-9383(87)90003-6},
}

@article {Ziv,
    AUTHOR = {Ziv, Abraham},
     TITLE = {Relative distance---an error measure in round-off error
              analysis},
   JOURNAL = {Math. Comp.},
  FJOURNAL = {Mathematics of Computation},
    VOLUME = {39},
      YEAR = {1982},
    NUMBER = {160},
     PAGES = {563--569},
      ISSN = {0025-5718},
     CODEN = {MCMPAF},
   MRCLASS = {65G05},
  MRNUMBER = {669649 (83k:65035)},
MRREVIEWER = {H.-J. Rack},
       DOI = {10.2307/2007334},
       URL = {http://dx.doi.org/10.2307/2007334},
}

@article{VV,
  author    = {{Leslie G.} Valiant and
               {Vijay V.} Vazirani},
  title     = {{NP} is as Easy as Detecting Unique Solutions},
  journal   = {Theor. Comput. Sci.},
  volume    = {47},
  number    = {3},
  year      = {1986},
  pages     = {85-93},
  ee        = {http://dx.doi.org/10.1016/0304-3975(86)90135-0},
  bibsource = {DBLP, http://dblp.uni-trier.de}
}

@article{ProvanBall,
  author    = {{J.\ Scott} Provan and
               {Michael O.} Ball},
  title     = {The Complexity of Counting Cuts and of Computing the Probability
               that a Graph is Connected},
  journal   = {SIAM J. Comput.},
  volume    = {12},
  number    = {4},
  year      = {1983},
  pages     = {777-788},
  ee        = {http://dx.doi.org/10.1137/0212053},
  bibsource = {DBLP, http://dblp.uni-trier.de}
}

@article {GJplanar,
    AUTHOR = {Goldberg, {Leslie Ann} and Jerrum, Mark},
     TITLE = {Inapproximability of the {T}utte polynomial of a planar graph},
   JOURNAL = {Computational Complexity},
    VOLUME = {21},
      YEAR = {2012},
    NUMBER = {4},
     PAGES = {605--642},
      ISSN = {1016-3328},
     CODEN = {CPTCEU},
   MRCLASS = {05C31 (05C10)},
  MRNUMBER = {2988773},
       DOI = {10.1007/s00037-012-0046-4},
       URL = {http://dx.doi.org/10.1007/s00037-012-0046-4},
}

@incollection {Sokal,
    AUTHOR = {Sokal, {Alan D.}},
     TITLE = {The multivariate {T}utte polynomial (alias {P}otts model) for
              graphs and matroids},
 BOOKTITLE = {Surveys in combinatorics 2005},
    SERIES = {London Math. Soc. Lecture Note Ser.},
    VOLUME = {327},
     PAGES = {173--226},
 PUBLISHER = {Cambridge Univ. Press},
   ADDRESS = {Cambridge},
      YEAR = {2005},
   MRCLASS = {05B35 (05C15 82B20 82B23)},
  MRNUMBER = {2187739 (2006k:05052)},
MRREVIEWER = {Marc Noy},
       DOI = {10.1017/CBO9780511734885.009},
       URL = {http://dx.doi.org/10.1017/CBO9780511734885.009},
}

@article {JVW,
    AUTHOR = {Jaeger, F.\ and Vertigan, {D.\ L.}  and Welsh, {D.\ J.\ A.}},
     TITLE = {On the computational complexity of the {J}ones and {T}utte
              polynomials},
   JOURNAL = {Math. Proc. Cambridge Philos. Soc.},
  FJOURNAL = {Mathematical Proceedings of the Cambridge Philosophical
              Society},
    VOLUME = {108},
      YEAR = {1990},
    NUMBER = {1},
     PAGES = {35--53},
      ISSN = {0305-0041},
     CODEN = {MPCPCO},
   MRCLASS = {05B35 (57M25 68Q25)},
  MRNUMBER = {MR1049758 (91h:05038)},
MRREVIEWER = {Mark E. Kidwell},
}

@book{MU05,
    author    = {M.\ Mitzenmacher and E.\ Upfal},
    title     = {Probability and Computing},
    publisher = {Cambridge University Press},
    year      = 2005
}

@book{Bug04,
  title={Approximation by Algebraic Numbers},
  author={Bugeaud, Y.},
  isbn={9781139455671},
  series={Cambridge Tracts in Mathematics},
  url={http://books.google.co.uk/books?id=iAg8FL5jKSgC},
  year={2004},
  publisher={Cambridge University Press}
}
 
@article{GJSign,
  author    = {Leslie Ann Goldberg and
               Mark Jerrum},
  title     = {The Complexity of Computing the Sign of the {T}utte Polynomial},
  journal   = {{SIAM} J. Comput.},
  volume    = {43},
  number    = {6},
  pages     = {1921--1952},
  year      = {2014},
  url       = {http://dx.doi.org/10.1137/12088330X},
  doi       = {10.1137/12088330X},
  timestamp = {Sat, 24 Jan 2015 18:21:24 +0100},
  biburl    = {http://dblp.uni-trier.de/rec/bib/journals/siamcomp/GoldbergJ14},
  bibsource = {dblp computer science bibliography, http://dblp.org}
}

@article{GJNPhard,
  author    = {{Leslie Ann} Goldberg and
               Mark Jerrum},
  title     = {Inapproximability of the {T}utte polynomial},
  journal   = {Inf. Comput.},
  volume    = {206},
  number    = {7},
  year      = {2008},
  pages     = {908-929},
  ee        = {http://dx.doi.org/10.1016/j.ic.2008.04.003},
  bibsource = {DBLP, http://dblp.uni-trier.de}
}

@article {BFLW,
    AUTHOR = {Bordewich, M.\ and Freedman, M.\ and Lov{\'a}sz, L.\ and Welsh,
               D.},
     TITLE = {Approximate Counting and Quantum Computation},
   JOURNAL = {Combin. Probab. Comput.},
  FJOURNAL = {Combinatorics, Probability and Computing},
    VOLUME = {14},
      YEAR = {2005},
    NUMBER = {5-6},
     PAGES = {737--754},
      ISSN = {0963-5483},
   MRCLASS = {81P68 (57M25 68Q05 68Q15 68Q17)},
  MRNUMBER = {2174653 (2007a:81031)},
MRREVIEWER = {Stephen A. Fenner},
       DOI = {10.1017/S0963548305007005},
       URL = {http://dx.doi.org/10.1017/S0963548305007005},
}

@article{JS,
  author    = {Mark Jerrum and
               Alistair Sinclair},
  title     = {Polynomial-Time Approximation Algorithms for the {I}sing Model},
  journal   = {SIAM J.\ Comput.},
  volume    = {22},
  number    = {5},
  year      = {1993},
  pages     = {1087-1116},
  ee        = {http://dx.doi.org/10.1137/0222066},
  bibsource = {DBLP, http://dblp.uni-trier.de}
}

@article{BJS11,
  author = {{Michael J.} Bremner and Richard Jozsa and {Dan J.} Shepherd},
  title = {Classical Simulation of Commuting Quantum Computations implies Collapse of the Polynomial Hierarchy},
  journal = {Proc.\ R.\ Soc.\ A},
  volume = {467},
  number = {2126},
  pages = {459-472},
  year = {2011},
  ee = {http://dx.doi.org/10.1098/rspa.2010.0301}
}

@article{SB09,
  author = {{Dan J.} Shepherd and {Michael J.} Bremner},
  title = {Temporally Unstructured Quantum Computation},
  journal = {Proc.\ R.\ Soc.\ A},
  volume = {465},
  number = {2105},
  pages = {1413-1439},
  year = {2009},
  ee = {http://dx.doi.org/10.1098/rspa.2008.0443}
}

@article{FM,
  author    = {Keisuke Fujii and
               Tomoyuki Morimae},
  title     = {Quantum Commuting Circuits and Complexity of {I}sing Partition
               Functions},
  journal   = {CoRR},
  volume    = {abs/1311.2128},
  year      = {2013},
  ee        = {http://arxiv.org/abs/1311.2128},
  bibsource = {DBLP, http://dblp.uni-trier.de}
}

@article{JN14,
 author = {Jozsa, Richard and Van den Nest, Marrten},
 title = {Classical Simulation Complexity of Extended {C}lifford Circuits},
 journal = {Quantum Info. Comput.},
 volume = {14},
 number = {7\&8},
 year = {2014},
 pages = {633--648},
 numpages = {16},
 url = {http://dl.acm.org/citation.cfm?id=2638682.2638689},
 acmid = {2638689},
 publisher = {Rinton Press, Incorporated},
 address = {Paramus, NJ},
} 

@article{CLX14,
  author    = {Jin{-}Yi Cai and
               Pinyan Lu and
               Mingji Xia},
  title     = {The complexity of complex weighted {B}oolean {\#}{CSP}},
  journal   = {J. Comput. Syst. Sci.},
  year      = {2014},
  volume    = {80},
  number    = {1},
  pages     = {217--236},
  url       = {http://dx.doi.org/10.1016/j.jcss.2013.07.003},
  doi       = {10.1016/j.jcss.2013.07.003},
}

\end{filecontents*}
 
\begin{filecontents*}{\jobname.bbl}

\end{filecontents*}

\bibliographystyle{plain}

\bibliography{\jobname}

\begin{thebibliography}{10}

\bibitem{Aaronson}
Scott Aaronson.
\newblock Quantum computing, postselection, and probabilistic polynomial-time.
\newblock {\em Proc. R. Soc. Lond. Ser. A Math. Phys. Eng. Sci.},
  461(2063):3473--3482, 2005.

\bibitem{BosonSampling}
Scott Aaronson and Alex Arkhipov.
\newblock The computational complexity of linear optics.
\newblock {\em Theory of Computing}, 9:143--252, 2013.

\bibitem{DA}
Dorit Aharonov and Itai Arad.
\newblock The {BQP}-hardness of approximating the {J}ones polynomial.
\newblock {\em New Journal of Physics}, 13(3):035019, 2011.

\bibitem{BFLW}
M.\ Bordewich, M.\ Freedman, L.\ Lov{\'a}sz, and D.~Welsh.
\newblock Approximate counting and quantum computation.
\newblock {\em Combin. Probab. Comput.}, 14(5-6):737--754, 2005.

\bibitem{BJS11}
{Michael J.} Bremner, Richard Jozsa, and {Dan J.} Shepherd.
\newblock Classical simulation of commuting quantum computations implies
  collapse of the polynomial hierarchy.
\newblock {\em Proc.\ R.\ Soc.\ A}, 467(2126):459--472, 2011.

\bibitem{Bug04}
Y.~Bugeaud.
\newblock {\em Approximation by Algebraic Numbers}.
\newblock Cambridge Tracts in Mathematics. Cambridge University Press, 2004.

\bibitem{CLX14}
Jin{-}Yi Cai, Pinyan Lu, and Mingji Xia.
\newblock The complexity of complex weighted {B}oolean {\#}{CSP}.
\newblock {\em J. Comput. Syst. Sci.}, 80(1):217--236, 2014.

\bibitem{DDVM11}
G.~De~las Cuevas, W.~D\"ur, M.~Van~den Nest, and M.~A. Martin-Delgado.
\newblock Quantum algorithms for classical lattice models.
\newblock {\em New Journal of Physics}, 13(9):093021, 2011.

\bibitem{FKLW}
Michael~H. Freedman, Alexei Kitaev, Michael~J. Larsen, and Zhenghan Wang.
\newblock Topological quantum computation.
\newblock {\em Bull. Amer. Math. Soc. (N.S.)}, 40(1):31--38, 2003.
\newblock Mathematical challenges of the 21st century (Los Angeles, CA, 2000).

\bibitem{FLW}
Michael~H. Freedman, Michael Larsen, and Zhenghan Wang.
\newblock A modular functor which is universal for quantum computation.
\newblock {\em Comm. Math. Phys.}, 227(3):605--622, 2002.

\bibitem{FM}
Keisuke Fujii and Tomoyuki Morimae.
\newblock Quantum commuting circuits and complexity of {I}sing partition
  functions.
\newblock {\em CoRR}, abs/1311.2128, 2013.

\bibitem{GL10}
Joseph Geraci and Daniel~A Lidar.
\newblock Classical {I}sing model test for quantum circuits.
\newblock {\em New Journal of Physics}, 12(7):075026, 2010.

\bibitem{GJNPhard}
{Leslie Ann} Goldberg and Mark Jerrum.
\newblock Inapproximability of the {T}utte polynomial.
\newblock {\em Inf. Comput.}, 206(7):908--929, 2008.

\bibitem{GJplanar}
{Leslie Ann} Goldberg and Mark Jerrum.
\newblock Inapproximability of the {T}utte polynomial of a planar graph.
\newblock {\em Computational Complexity}, 21(4):605--642, 2012.

\bibitem{GJSign}
Leslie~Ann Goldberg and Mark Jerrum.
\newblock The complexity of computing the sign of the {T}utte polynomial.
\newblock {\em {SIAM} J. Comput.}, 43(6):1921--1952, 2014.

\bibitem{ICKB}
S.~Iblisdir, M.~Cirio, O.~Kerans, and G.~K. Brennen.
\newblock Low depth quantum circuits for {Ising} models.
\newblock {\em Annals of Physics}, 340(205):205--251, January 2014.

\bibitem{JVW}
F.\ Jaeger, {D.\ L.} Vertigan, and {D.\ J.\ A.} Welsh.
\newblock On the computational complexity of the {J}ones and {T}utte
  polynomials.
\newblock {\em Math. Proc. Cambridge Philos. Soc.}, 108(1):35--53, 1990.

\bibitem{JS}
Mark Jerrum and Alistair Sinclair.
\newblock Polynomial-time approximation algorithms for the {I}sing model.
\newblock {\em SIAM J.\ Comput.}, 22(5):1087--1116, 1993.

\bibitem{JN14}
Richard Jozsa and Marrten Van~den Nest.
\newblock Classical simulation complexity of extended {C}lifford circuits.
\newblock {\em Quantum Info. Comput.}, 14(7\&8):633--648, 2014.

\bibitem{Kuperberg}
Greg Kuperberg.
\newblock How hard is it to approximate the {J}ones polynomial?
\newblock {\em Theory of Computing}, 11:183--219, 2015.

\bibitem{MFI}
A.~{Matsuo}, K.~{Fujii}, and N.~{Imoto}.
\newblock {A quantum algorithm for additive approximation of {I}sing partition
  functions}.
\newblock {\em Phys. Rev. A}, 90:022304, 2014.

\bibitem{NC}
Michael~A.\ Nielsen and Isaac~L. Chuang.
\newblock {\em {Quantum Computation and Quantum Information (Cambridge Series
  on Information and the Natural Sciences)}}.
\newblock Cambridge University Press, 1 edition, January 2004.

\bibitem{ProvanBall}
{J.\ Scott} Provan and {Michael O.} Ball.
\newblock The complexity of counting cuts and of computing the probability that
  a graph is connected.
\newblock {\em SIAM J. Comput.}, 12(4):777--788, 1983.

\bibitem{Shepherd10}
Dan Shepherd.
\newblock Binary matroids and quantum probability distributions.
\newblock {\em CoRR}, abs/1005.1744, 2010.

\bibitem{SB09}
{Dan J.} Shepherd and {Michael J.} Bremner.
\newblock Temporally unstructured quantum computation.
\newblock {\em Proc.\ R.\ Soc.\ A}, 465(2105):1413--1439, 2009.

\bibitem{Sokal}
{Alan D.} Sokal.
\newblock The multivariate {T}utte polynomial (alias {P}otts model) for graphs
  and matroids.
\newblock In {\em Surveys in combinatorics 2005}, volume 327 of {\em London
  Math. Soc. Lecture Note Ser.}, pages 173--226. Cambridge Univ. Press,
  Cambridge, 2005.

\bibitem{Thistlethwaite}
{Morwen B.} Thistlethwaite.
\newblock A spanning tree expansion of the {J}ones polynomial.
\newblock {\em Topology}, 26(3):297--309, 1987.

\bibitem{VV}
{Leslie G.} Valiant and {Vijay V.} Vazirani.
\newblock {NP} is as easy as detecting unique solutions.
\newblock {\em Theor. Comput. Sci.}, 47(3):85--93, 1986.

\bibitem{Ziv}
Abraham Ziv.
\newblock Relative distance---an error measure in round-off error analysis.
\newblock {\em Math. Comp.}, 39(160):563--569, 1982.

\end{thebibliography}

\end{document}